\newtheorem{thm}{Theorem}
\newtheorem{lem}{Lemma}
\newtheorem{prop}{Proposition}
\newtheorem{coro}{Corollary}
\begin{document}

\title{
\textcolor{black}{Optimal Estimation of Sensor Biases for Asynchronous Multi-Sensor Data Fusion}
}


\author{Wenqiang Pu \and Ya-Feng Liu \and Junkun Yan \and Hongwei Liu \and Zhi-Quan Luo}


\institute{This work is performed at the Shenzhen Research Institute of Big Data, The Chinese University of Hong Kong, Shenzhen. {This work is funded in part by a National Natural Science Foundation of China (NSFC) Key Project grant 61731018, in part by NSFC grants 11331012, 11631013, and 61601340, and in part by the China National Funds for Distinguished Young Scientists grant 61525105.}\\[5pt]
W. Pu, J. Yan, and H. Liu \at National Laboratory of Radar Signal Processing, \\
	Collaborative Innovation Center of Information Sensing and Understanding, \\
	Xidian University, Xi’an 710071, China \\
	\email{wqpu@stu.xidian.edu.cn; jkyan@xidian.edu.cn;
	hwliu@xidian.edu.cn}
           \and
           Y.-F.~Liu \at State Key Laboratory
of Scientific and Engineering Computing,\\
Institute of Computational
Mathematics and Scientific/Engineering Computing,\\
 Academy of Mathematics and Systems Science, \\
           Chinese Academy of Sciences, Beijing 100190, China \\
           \email{yafliu@lsec.cc.ac.cn}
              \and
           Z.-Q.~Luo \at Shenzhen Research Institute of Big Data, \\
           The Chinese University of Hong Kong, Shenzhen 518100, China \\
           E-mail: luozq@cuhk.edu.cn
}

\date{Received: date / Accepted: date}

\maketitle

\begin{abstract}
An important step in a multi-sensor surveillance system is to estimate sensor biases from their noisy asynchronous measurements. This estimation problem is computationally challenging due to {the highly} nonlinear transformation between the global and local coordinate systems as well as the measurement \textcolor{black}{asynchrony} from different sensors. In this paper, we propose a novel nonlinear {least squares} (LS) formulation for the problem by assuming the existence of a \textcolor{black}{reference target moving with an (unknown) constant velocity}. We also propose \textcolor{black}{an efficient} block coordinate decent (BCD) optimization algorithm, with a judicious initialization, to solve the problem. \textcolor{black}{The proposed BCD algorithm alternately updates the range and azimuth bias estimates by solving linear {least squares} problems and semidefinite programs {(SDPs)}.} In the absence of measurement noise, the proposed algorithm is guaranteed to find the global solution of the problem and the true biases. Simulation results show that the proposed algorithm significantly outperforms the existing approaches in terms of the root mean square error (RMSE).


\keywords{block coordinate decent algorithm \and nonlinear {least squares} \and sensor registration problem \and tightness of semidefinite relaxation}
 \subclass{90C30 \and 90C26 \and 90C46 \and 49M37}
\end{abstract}


\section{Introduction}
\label{sec:intro}
\subsection{Background and Motivation}
\textcolor{black}{With the proliferation of low cost sensors, there is an increasing interest in integrating stand-alone sensors into a multi-sensor system for various engineering applications such as surveillance \cite{bar1990multitarget}}.
Instead of developing expensive high-performance sensors, directly fusing data from existing multiple inexpensive sensors is a more cost-effective approach to improving the performance of tracking and surveillance systems.
An important process in multi-sensor integration is \emph{registration} (or \emph{alignment}) \cite{Dana1990Registration}, \textcolor{black}{whereby the multi-sensor data is expressed in a common reference frame, by removing the sensor biases caused by antenna orientation and improper alignment \cite{fischer1980registration}.}
Since sensor biases change slowly with time \cite{fischer1980registration}, they can be treated as constants during a relatively {long} period of time. Consequently, the sensor registration problem is usually an estimation problem for sensor biases. \textcolor{black}{In general, the estimation problem is computationally challenging due to the   nonlinear transformation between the global and the local coordinate systems of the sensors as well as measurement asynchrony from different sensors.}
\subsection{Related Works}
\textcolor{black}{Both synchronous \cite{fischer1980registration,helmick1993removal,zhou1997an,Ristic2003Sensor,Fortunati2011Least} and asynchronous \cite{zhou2004asy,lin2005multisensor,Lin2006Multisensor,ristic2013calibration} sensor registration problems have been studied in the literature.
	In the synchronous case, all sensors simultaneously observe the target at the same time instances, whereas in the asynchronous case,  the sensors observe the target at different time instances.}

\textcolor{black}{For the synchronous registration problem,} \cite{fischer1980registration} identified various factors which dominate the alignment errors in the multi-sensor system and established a sensor bias model. \textcolor{black}{Under the assumption that there exists a bias-free sensor,} a maximum likelihood (ML) estimation formulation \cite{fischer1980registration,Fortunati2013On} and a nonlinear {least squares} (LS) estimation formulation \cite{Fortunati2011Least} were presented for this problem, and various algorithms were proposed to approximately solve the problem including the expectation maximization (EM) algorithm \cite{Fortunati2013On} and the successive linearization ({least squares}) algorithm \cite{Fortunati2011Least}. \textcolor{black}{In the absence of a bias-free sensor,} \textcolor{black}{several approaches  \cite{leung1994least,Dana1990Registration,Cowley1993Registration,zhou1997an,Ristic2003Sensor,Hsieh1999Optimal,Nabaa1999Solution,Zia2008EM,Li2010Joint,Okello2004Joint} have also} been proposed. These approaches, from the perspective of parameter estimation \textcolor{black}{point}, can be divided into the following three types: LS, ML and Bayesian estimation. Both of the LS and ML approaches treat biases as deterministic variables while the Bayesian approach treats biases as random variables. {Note that there are two different kinds of variables, target positions and sensor biases, in the sensor registration problem.} The difference between the LS and ML approaches lies in the way of dealing with target positions.
In the LS approaches \cite{leung1994least,Dana1990Registration,Cowley1993Registration}, target positions are approximately represented and eliminated in the LS formulation and finally only sensor biases are estimated.
\textcolor{black}{In contrast, the ML approaches jointly estimate target positions and sensor biases \cite{zhou1997an,Ristic2003Sensor}. In particular, an iterative two-step optimization algorithm with some approximations was proposed in \cite{zhou1997an} to solve the ML formulation, with one step for estimating target positions and the other for estimating sensor biases.}
The Bayesian approach treats sensor biases as random variables and integrates target tracking and biases estimation in a Bayesian estimation framework \cite{Hsieh1999Optimal,Li2010Joint,Nabaa1999Solution,Zia2008EM,Okello2004Joint}. Different strategies were developed to decouple the two estimation tasks and to deal with the nonlinearity, including  two-stage Kalman filter (KF) \cite{Hsieh1999Optimal}, EM-KF \cite{Li2010Joint}, and EM particle filter \cite{Zia2008EM}.

\textcolor{black}{In practice, sensors are often not synchronized in time due to different data rates. The asynchronous sensor measurements make the estimation problem underdetermined.
	To overcome this difficulty, researchers have exploited \textit{a priori} knowledge of the target motion model such as the \textcolor{black}{\textit{nearly-constant-velocity} motion model \cite{li2003survey}.}
	Based on this side information, recursive Bayesian approaches for jointly estimating target states and sensor biases have been proposed in  \cite{zhou2004asy,lin2005multisensor,Lin2006Multisensor,Taghavi2016practical,mahler2011bayesian,ristic2013calibration}. \textcolor{black}{In particular, estimating sensor biases from asynchronous measurements based on an approximated linear measurement model was studied in \cite{zhou2004asy,lin2005multisensor,Lin2006Multisensor,Taghavi2016practical}.} However, the approximation procedure in these approaches requires the sensor biases to be small.} In references \cite{mahler2011bayesian,ristic2013calibration}, a unified Bayesian framework for target tracking and biases estimation was proposed based on probability hypothesis density (PHD) \cite{Mahler2003Multitarget}. This approach relies on the implementation of particle PHD filter which can result in high computational cost when the number of sensors is large.

\textcolor{black}{The key difficulty of solving the sensor registration problem is the intrinsic nonlinearity of the problem. Although various approaches, such as those based on different linear approximation \cite{fischer1980registration,helmick1993removal,zhou1997an,lin2005multisensor,zhou2004asy,Ristic2003Sensor,Fortunati2011Least,Fortunati2013On,Dana1990Registration,leung1994least,Lin2006Multisensor,Taghavi2016practical} and the particle filtering \cite{Zia2008EM,mahler2011bayesian,ristic2013calibration}, have been proposed to deal with the nonlinearity issue, 
they either suffer from a model mismatch error, which might significantly degrade the estimation performance, or incur a prohibitively high computational cost.}


\subsection{Our Contributions}
In this paper, \textcolor{black}{we consider the sensor registration problem and focus on a general scenario where all sensors are possibly biased
and operate asynchronously. Among all possible sensor biases \cite{fischer1980registration}, we consider two major and fundamental ones, i.e., range and azimuth biases. Instead of linear approximations, we deal with the intrinsic nonlinearity issue in the sensor registration problem by using nonlinear optimization techniques.} The  main contributions of the paper are as follows:
%
%
%
\begin{enumerate}[fullwidth,label=(\arabic*)]
	\item \textit{\textcolor{black}{A New Nonlinear LS Formulation:}} We propose a new nonlinear LS formulation for the asynchronous multi-sensor registration problem by \textcolor{black}{only} assuming the \textit{a priori} information that the target moves \textcolor{black}{with a {nearly-constant-velocity}  model\cite{li2003survey}. Our proposed formulation takes different practical uncertainties into consideration, i.e., noise in sensor measurements and dynamic maneuverability in the target motion.}
%
	\item \textit{\textcolor{black}{Separation Property of Range Bias Estimation:}} \textcolor{black}{We reveal an interesting separation property of the range bias estimation.} More specifically, we show that the solution of the problem of minimizing the LS error over the range bias for each senor can be decoupled from its azimuth bias. This property sheds an important insight that each sensor can estimate its range bias from its local measurements.
	\item \textit{An Efficient BCD Algorithm and Exact Recovery:} We fully exploit the special structure of the proposed nonlinear LS formulation and develop an efficient {block coordinate decent (BCD)} algorithm, with a judicious initialization by using the separation  property of range bias estimation. The proposed algorithm alternately updates the range and azimuth biases by solving linear \textcolor{black}{LS} problems and semidefinite programs (SDPs). We show, in Theorem \ref{thm:multi}, that solving the original non-convex problem with respect to the azimuth biases is equivalent to solving an SDP, which itself is interesting. We also show exact recovery of the proposed BCD algorithm in the noiseless case, i.e., our algorithm is able to find the true (range and azimuth) biases if there is no noise \textcolor{black}{in the target motion and sensor measurements}.
\end{enumerate}

We adopt the following \textcolor{black}{notation} in this paper. Lower and upper case letters in bold are used for vectors and matrices, respectively. For a given matrix $\mathbf{X},$ we denote its transpose and Hermitian transpose by $\mathbf{X}^T$ and $\mathbf{X}^\dagger$, respectively. If $\mathbf{X}$ is further a square matrix, we use $\textrm{diag}(\mathbf{X})$ to denote the column vector formed by its diagonal entries, use $\textrm{Tr}(\mathbf{X})$ to denote its trace, use $\lambda_{{\min}}(\mathbf{X})$ to denote its smallest eigenvalue, {{use $\text{vec}(\mathbf{X})$ to denote the vector sequentially stacked by its columns, use $\mathbf{X}\succ \mathbf{0}$ ($\mathbf{X}\succeq \mathbf{0}$) to denote that it is a positive definite (semi-definite) matrix}}, and use $\mathbf{X}^{-1}$ to denote its inverse (if it is invertible). For a given complex number $c$, we use $\angle c$, $\textrm{Re}\{c\}$, and $\textrm{Im}\{c\}$ to denote its phase, real part, and imaginary part, respectively. We use \textcolor{black}{$\mathbf{x}_m$} and \textcolor{black}{$\left[\mathbf{x}\right]_m$} to denote the \textcolor{black}{$m$-th} component of the vector $\mathbf{x},$ $\mathbf{x}_{n:m}$ (with $n<m$) to denote the vector formed by components of $\mathbf{x}$ from index $n$ to index $m$, $\textrm{Diag}(\mathbf{\mathbf{x}})$ to denote the diagonal matrix formed by all components of $\mathbf{x},$ and $\|\mathbf{x}\|$ to denote the Euclidean norm of the vector $\mathbf{x}$. We use $\mathbb{E}_{\mathbf{w}}\{\cdot\}$ to represent the expectation operation with respect to the random variable $\mathbf{w}$. Furthermore, the \textcolor{black}{notation} $\mathbf{1}$ ($\mathbf{1}_M$) and $\mathbf{I}$ ($\mathbf{I}_M$) represent the all-one vector and the identity matrix (of an appropriate size/size $M$), respectively. \textcolor{black}{And} the set of $M$-dimensional real (complex) vectors is denoted by $\mathbb{R}^{M}$ ($\mathbb{C}^{M}$) and the set of $M\times M$ Hermitian matrices is denoted by $\mathbb{H}^{M}.$ \textcolor{black}{Finally, we use $M$ and $K$ to denote the number of sensors and the number of measurements, respectively.} 

\section{The Asynchronous Multi-Sensor Registration Problem}

Consider a multi-sensor system consisting of $M>1$ sensors located distributively on a 2-dimensional plane with known positions.
Suppose that there is a reference target (e.g., a civilian airplane) \textcolor{black}{moving in the surveillance region,}
and sensors measure the relative range and azimuth between the target and sensors themselves in an asynchronous work mode.
For ease of {notation} and presentation, measurements from different sensors are mapped onto a common time axis at the fusion center, indexed by $k$.
Furthermore, we assume that, at time instance $k,$ only one sensor observes the target and the corresponding sensor is denoted as $s_k\in\{1,2,\ldots,M\}.$ See Fig.$\ $\ref{fig:AcyTime} for an illustration of the asynchronous work mode with $M=2$ sensors.
\begin{figure}[h]
	\centering
	\includegraphics [width=0.6 \linewidth]{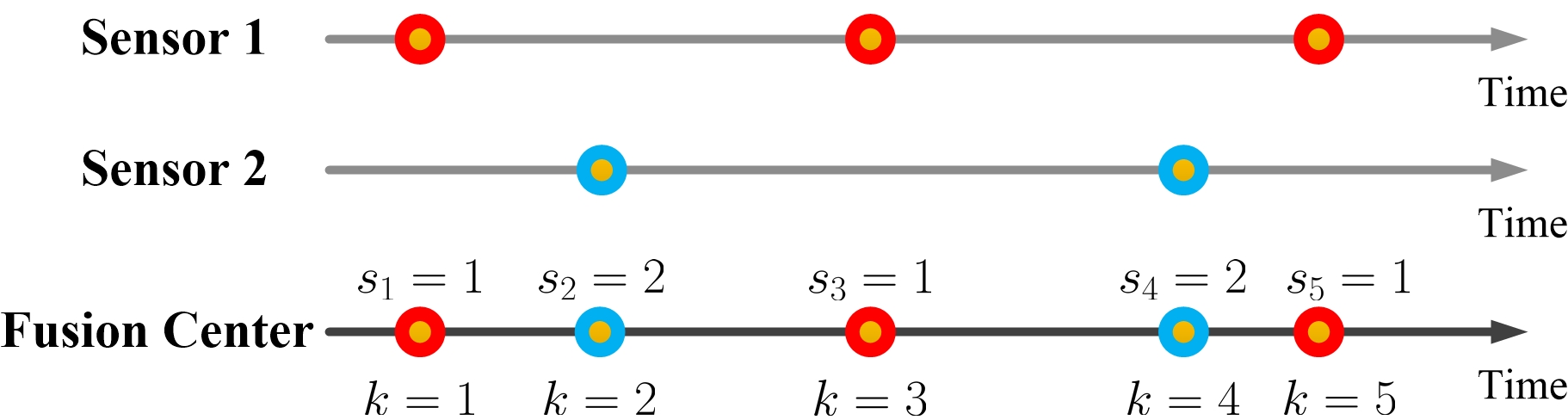}
	\caption{ \small An illustration of the asynchronous work mode with 2 sensors.}
	\label{fig:AcyTime}
\end{figure}

Let $\bm{\xi}_k=[x_k,y_k]^T$ denote the target position in the common coordinate system at time instance $k.$ Then, the measured range $\rho_k$ and azimuth $\phi_k$ at sensor $s_k$ are
\begin{equation}\label{eq:MeasureModel}
\mathbf{z}_k\triangleq
[\rho_k,
\phi_k
]^T=
{h}^{-1}(\bm{\xi}_k-\mathbf{p}_{s_k})-\bm{\bar{\theta}}_{s_k}+\mathbf{w}_k.
\end{equation}
In the above, $h^{-1}(\cdot)$ is the inverse function of the 2-dimensional spherical-to-Cartesian transformation function $h(\cdot)$;
$\mathbf{p}_m=[p_m^{x},p_m^{y}]^T$ is the position of sensor $m$; $\bm{\bar{\theta}}_{m}=[\Delta\bar{\rho}_{m},\Delta\bar{\phi}_{m}]^T$ is the true range and azimuth biases of sensor $m$; $\mathbf{w}_k=[w^{\rho}_{k}, w^{\phi}_{k}]^T$ is an uncorrelated Gaussian random noise vector with zero mean \cite{li2001survey}, i.e.,
$\mathbf{w}_k\sim\mathcal{N}\left(\mathbf{0},\textrm{diag}([\sigma^{2}_{\rho},\sigma^{2}_{\phi}]) \right)$ where $\sigma_{\rho}$ and $\sigma_{\phi}$ are the standard deviation of the noise in range and azimuth measure, respectively.
The biased measurements for a target with {a} constant velocity $\mathbf{\bar{v}}=[\bar{v}_x,\bar{v}_y]^T\neq\mathbf{0}$ are illustrated in Fig. \ref{fig:AcySC}.
\begin{figure}[h]
	\centering
	\includegraphics [width=0.6 \linewidth]{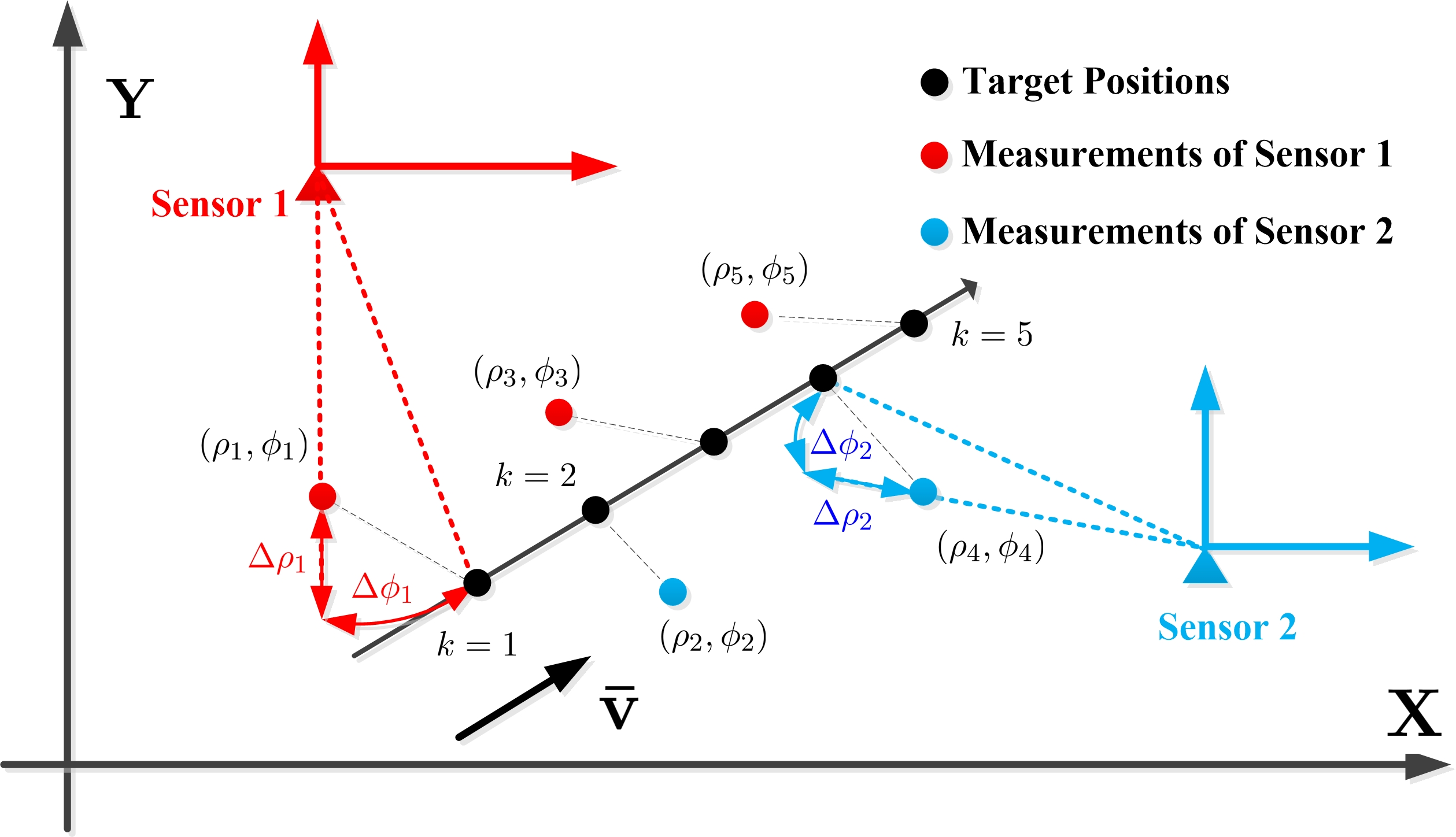}
	\caption{ \small An illustration of biased measurements of 2 sensors for {a} target with a constant velocity $\mathbf{\bar{v}}$.}
	\label{fig:AcySC}
\end{figure}

The asynchronous multi-sensor registration problem considered in this paper is to estimate sensor biases $\{\bm{\bar{\theta}}_m\}_{m=1}^M$ from the noisy measurements $\{\mathbf{z}_k\}_{k=1}^K$, where $K$ is the total number of measurements.
\textcolor{black}{Notice that there are $2K+2M$ unknown parameters $\{\bm{\xi}_k\}_{k=1}^K$ and $\{\bm{\bar{\theta}}_{m}\}_{m=1}^M$ but in total only $2K$ measurements $\{\mathbf{z}_k\}_{k=1}^K$ and thus the estimation problem is underdetermined.
	Assuming that the target moves with a \textcolor{black}{nearly constant velocity} and by exploiting this side information, we can estimate $\{\bm{\bar{\theta}}_m\}_{m=1}^M$ from measurements $\{\mathbf{z}_k\}_{k=1}^K$.}


\section{A Nonlinear Least Squares Formulation}
{\color{black}{In this section, we first introduce a 2-dimensional target motion model and then propose a nonlinear {LS} formulation for the sensor registration problem.
		Specifically, we assume that the target motion model follows a {nearly-constant-velocity model} \cite{li2003survey}, which is widely used for modeling the target motion with an approximated constant velocity:
		\begin{subequations}\label{eq:motion}
			\begin{align}
			\bm{\xi}_{k+1}&=\bm{\xi}_{k}+T_k\bm{\dot{\xi}}_{k}+\mathbf{n}_k,\label{eq:pos}\\
			\bm{\dot{\xi}}_{k+1}&=\bm{\dot{\xi}}_{k} + \mathbf{\dot{n}}_k,\label{eq:vel}
			\end{align}
		\end{subequations}
		where $T_k$ is the time difference between time instances $k$ and $k+1$, $\bm{\xi}_{k}\in\mathbb{R}^2$ and $\bm{\dot{\xi}}_{k}\in\mathbb{R}^2$ are the target position and velocity at time instance $k$ respectively, $\mathbf{n}_k\in\mathbb{R}^2$ and $\mathbf{\dot{n}}_k\in\mathbb{R}^2$ are the zero-mean motion process noise for the target position and velocity at time instance $k$. The covariance of the motion process noise is defined as
		$$
		\mathbf{Q}_k\triangleq\mathbb{E}\{ [\mathbf{n}_k^T,\mathbf{\dot{n}}_k^T]^T[\mathbf{n}_k^T,\mathbf{\dot{n}}_k^T]\}=q\begin{bmatrix}
		T_k^3/3&T_k^2/2\\
		T_k^2/2&T_k
		\end{bmatrix}\otimes\mathbf{I}_2 ,
		$$
		where $\otimes$ denotes the Kronecker operation and $q>0$ is the noise density which captures the magnitude of the uncertainty level in the target motion. Furthermore, assume the initial target states obey the Gaussian distribution, modeled as
\begin{equation}\label{barv}\bm{\xi}_1=\bm{\bar{\xi}}+\mathbf{n}_0,\ \bm{\dot{\xi}}_1=\mathbf{\bar{v}}+\mathbf{\dot{n}}_0,\end{equation}
		where $\bm{\bar{\xi}}$ and $\mathbf{\bar{v}}$ are the expected initial position and velocity, and $\mathbf{n}_0$ and $\mathbf{\dot{n}}_0$ are the zero-mean Gaussian noise for the initial position and velocity respectively. The covariance matrices of $\mathbf{n}_0$ and $\mathbf{\dot{n}}_0$ are assumed to be $c\mathbf{I}_2$ and $\dot{c}\mathbf{I}_2$, where $c\geq0$ and $\dot{c}\geq0$ are some constants that capture the magnitude of the uncertainty of the target's initial position and velocity.

Based on the measurement model \eqref{eq:MeasureModel} and the target motion model \eqref{eq:motion}, we have the following proposition which relates the asynchronous measurements $\mathbf{z}_{k+1}$ and $\mathbf{z}_k$.
		\begin{prop}\label{prop:ls}
			Suppose that the measurement model and the target motion model are given in \eqref{eq:MeasureModel} and \eqref{eq:motion}, respectively.
		Then
		\begin{equation}\label{eq:exp_model2}
		\begin{aligned}
		&\mathbb{E}_{
		\mathbf{n}_0,\mathbf{\dot{n}}_0,\mathbf{n}_{1},\mathbf{\dot{n}}_1,\ldots,\mathbf{n}_{k-1},\mathbf{\dot n}_{k-1},\mathbf{{n}}_{k}}\left\{ \mathbb{E}_{\mathbf{w}_{k+1}}\left\{ \bar{h}(\mathbf{z}_{k+1}+\bm{\bar{\theta}}_{s_{k+1}})\right\} \right\}+\mathbf{p}_{s_{k+1}}\\
		&=
		\mathbb{E}_{
		\mathbf{n}_0,\mathbf{\dot{n}}_0,\mathbf{n}_{1},\mathbf{\dot{n}}_1,\ldots,\mathbf{n}_{k-1}}\left\{ \mathbb{E}_{\mathbf{w}_k}\left\{ \bar{h}(\mathbf{z}_{k}+\bm{\bar{\theta}}_{s_{k}})\right\} \right\}+\mathbf{p}_{s_{k}}+T_k\mathbf{\bar{v}},
		\end{aligned}
		\end{equation}
where $\bar{h}(\cdot)$ is defined as
			\begin{equation}\label{EQ_h}
			\bar{{h}}(\rho,\phi)\triangleq\begin{bmatrix}\lambda^{-1}\rho\cos\phi\\ \lambda^{-1}\rho\sin\phi\end{bmatrix}~\text{and}~\lambda\triangleq e^{-\sigma_{\phi}^2/2}.
			\end{equation}	
\end{prop}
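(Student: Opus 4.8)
The plan is to establish \eqref{eq:exp_model2} in two stages: first average out the measurement noise $\mathbf{w}_k$, which should turn $\bar{h}(\mathbf{z}_k+\bm{\bar{\theta}}_{s_k})+\mathbf{p}_{s_k}$ into the \emph{true} target position $\bm{\xi}_k$; and then average out the motion process noises, which should turn $\mathbb{E}\{\bm{\xi}_{k+1}\}$ into $\mathbb{E}\{\bm{\xi}_k\}+T_k\mathbf{\bar{v}}$ by invoking the recursions \eqref{eq:motion}. The first stage rests on a Gaussian moment identity, the second is pure linearity of expectation, and the conclusion follows by iterating (tower property) the two.

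For the first stage I would start from \eqref{eq:MeasureModel}, which gives $\mathbf{z}_k+\bm{\bar{\theta}}_{s_k}=h^{-1}(\bm{\xi}_k-\mathbf{p}_{s_k})+\mathbf{w}_k$. Writing the true range and azimuth as $[r_k,\alpha_k]^T\triangleq h^{-1}(\bm{\xi}_k-\mathbf{p}_{s_k})$, so that $\mathbf{z}_k+\bm{\bar{\theta}}_{s_k}=[\,r_k+w^{\rho}_k,\ \alpha_k+w^{\phi}_k\,]^T$, I would substitute into the definition \eqref{EQ_h} of $\bar{h}$ and expand $\cos(\alpha_k+w^{\phi}_k)$ and $\sin(\alpha_k+w^{\phi}_k)$ with the angle-addition formulas. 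Taking $\mathbb{E}_{\mathbf{w}_k}$ and using that $w^{\rho}_k$ and $w^{\phi}_k$ are independent and zero-mean with $w^{\phi}_k\sim\mathcal{N}(0,\sigma_\phi^2)$, the only nontrivial moments needed are $\mathbb{E}\{\cos w^{\phi}_k\}=e^{-\sigma_\phi^2/2}=\lambda$ and $\mathbb{E}\{\sin w^{\phi}_k\}=0$ (the real and imaginary parts of the Gaussian characteristic function evaluated at $1$). Then the factor $\lambda^{-1}$ built into $\bar{h}$ exactly cancels the factor $\lambda$ produced by the azimuth noise, leaving $\mathbb{E}_{\mathbf{w}_k}\{\bar{h}(\mathbf{z}_k+\bm{\bar{\theta}}_{s_k})\}=[\,r_k\cos\alpha_k,\ r_k\sin\alpha_k\,]^T=h(h^{-1}(\bm{\xi}_k-\mathbf{p}_{s_k}))=\bm{\xi}_k-\mathbf{p}_{s_k}$, i.e.\ $\mathbb{E}_{\mathbf{w}_k}\{\bar{h}(\mathbf{z}_k+\bm{\bar{\theta}}_{s_k})\}+\mathbf{p}_{s_k}=\bm{\xi}_k$, and likewise at index $k+1$.

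For the second stage it remains to average the identity from the first stage over the process noises and to show $\mathbb{E}\{\bm{\xi}_{k+1}\}=\mathbb{E}\{\bm{\xi}_k\}+T_k\mathbf{\bar{v}}$. From \eqref{eq:pos}, $\bm{\xi}_{k+1}=\bm{\xi}_k+T_k\bm{\dot{\xi}}_k+\mathbf{n}_k$; since $\mathbf{n}_k$ is zero-mean, linearity gives $\mathbb{E}\{\bm{\xi}_{k+1}\}=\mathbb{E}\{\bm{\xi}_k\}+T_k\,\mathbb{E}\{\bm{\dot{\xi}}_k\}$, and telescoping \eqref{eq:vel} from the initialization \eqref{barv} yields $\bm{\dot{\xi}}_k=\mathbf{\bar{v}}+\mathbf{\dot{n}}_0+\sum_{j=1}^{k-1}\mathbf{\dot{n}}_j$, whence $\mathbb{E}\{\bm{\dot{\xi}}_k\}=\mathbf{\bar{v}}$ because every $\mathbf{\dot{n}}_j$ is zero-mean. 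The last bookkeeping point is that $\bm{\xi}_k$ depends only on $\mathbf{n}_0,\mathbf{\dot{n}}_0,\dots,\mathbf{\dot{n}}_{k-2},\mathbf{n}_{k-1}$, so its expectation over the longer noise list appearing on the right-hand side of \eqref{eq:exp_model2} coincides with its expectation over the variables it actually depends on; chaining this with the first-stage identity at indices $k$ and $k+1$ then gives \eqref{eq:exp_model2}. The only genuinely substantive step is the first-stage moment computation: one must recognize that $\lambda=e^{-\sigma_\phi^2/2}$ is precisely the debiasing constant for the polar-to-Cartesian conversion under azimuth noise, and this hinges both on the independence $w^{\rho}_k\perp w^{\phi}_k$ (so the expectation factorizes) and on the Gaussian identity $\mathbb{E}\{\cos w^{\phi}_k\}=e^{-\sigma_\phi^2/2}$; everything else is routine.
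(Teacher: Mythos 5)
Your proposal is correct and follows essentially the same route as the paper: first remove the measurement noise via the debiased conversion (the paper's identity $\mathbb{E}_{\mathbf{w}_k}\{\bar{h}(\mathbf{z}_k+\bm{\bar{\theta}}_{s_k})\}=\bm{\xi}_k-\mathbf{p}_{s_k}$), then use linearity of expectation on the motion model to get $\mathbb{E}\{\bm{\xi}_{k+1}\}=\mathbb{E}\{\bm{\xi}_k\}+T_k\mathbf{\bar{v}}$, and chain the two. The only difference is that you carry out the Gaussian moment computation $\mathbb{E}\{\cos w^{\phi}_k\}=e^{-\sigma_\phi^2/2}$ explicitly (together with the independence of $w^{\rho}_k$ and $w^{\phi}_k$), whereas the paper outsources this debiasing fact to a citation; your version is, if anything, slightly more self-contained.
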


\begin{proof}
		First of all, it follows from \eqref{eq:motion} that, for each $k\geq 1,$ $\bm{\xi}_{k+1}$ depends on $\mathbf{n}_0,\mathbf{\dot{n}}_0,\mathbf{n}_1,\mathbf{\dot{n}}_1,\ldots,\mathbf{n}_{k-1},\mathbf{\dot{n}}_{k-1},\mathbf{n}_k$ and $\bm{\dot \xi}_{k+1}$ depends on $\mathbf{\dot{n}}_0,\mathbf{\dot{n}}_1,\ldots,\mathbf{\dot{n}}_k.$  Then, for each $k\geq 1,$ taking expectation of $\bm{\xi}_{k+1}$ and $\bm{\dot{\xi}}_{k+1}$ with respect to $\mathbf{n}_i$ and $\mathbf{\dot{n}}_i$ for all $i=1,2,\ldots,k$, we get
		\begin{subequations}\label{eq:exp_motion}
			\begin{align}
			\mathbb{E}_{
			\mathbf{n}_0,\mathbf{\dot{n}}_0,\mathbf{n}_{1},\mathbf{\dot{n}}_1,\ldots,\mathbf{n}_{k-1},\mathbf{\dot{n}}_{k-1},\mathbf{n}_k}\left\{\bm{\xi}_{k+1}\right\}
		&=\mathbb{E}_{
			\mathbf{n}_0,\mathbf{\dot{n}}_0,\mathbf{n}_{1},\mathbf{\dot{n}}_1,\ldots,\mathbf{n}_{k-1}}\left\{\bm{\xi}_{k}\right\}\nonumber\\
             &~~+T_k\mathbb{E}_{\mathbf{\dot{n}}_0,\mathbf{\dot{n}}_{1},\ldots,\mathbf{\dot{n}}_{k-1}}\left\{\bm{\dot{\xi}}_{k}\right\}+\mathbb{E}_{\mathbf{{n}}_k}\left\{\bf{{n}}_{k}\right\},\label{eq:exp_pos}\\
			\mathbb{E}_{
				\mathbf{\dot{n}}_0,\mathbf{\dot{n}}_{1},\ldots,\mathbf{\dot{n}}_k}\{\bm{\dot{\xi}}_{k+1}\}
			&=\mathbb{E}_{
				\mathbf{\dot{n}}_0,\mathbf{\dot{n}}_{1},\ldots,\mathbf{\dot{n}}_{k-1}}\{\bm{\dot{\xi}}_{k}\}+\mathbb{E}_{\mathbf{\dot{n}}_k}\left\{\bm{\dot{n}}_{k}\right\}. \label{eq:exp_vel} 
			\end{align}
		\end{subequations}
		Combining \eqref{barv}, \eqref{eq:exp_pos}, and \eqref{eq:exp_vel} yields
		\begin{equation}
		\begin{aligned}
		&\mathbb{E}_{
			\mathbf{n}_0,\mathbf{\dot{n}}_0,\mathbf{n}_{1},\mathbf{\dot{n}}_1,\ldots,\mathbf{n}_{k-1},\mathbf{\dot n}_{k-1},\mathbf{n}_k}\{\bm{\xi}_{k+1}\}=\mathbb{E}_{
			\mathbf{n}_0,\mathbf{\dot{n}}_0,\mathbf{n}_{1},\mathbf{\dot{n}}_1,\ldots,\mathbf{n}_{k-1}}\{\bm{\xi}_{k}\}
+T_k\mathbf{\bar{v}}.\label{eq:exp_model}
		\end{aligned}
		\end{equation}
Moreover, the measurement model \eqref{eq:MeasureModel} can be rewritten as $${h}(\mathbf{z}_{k}-\mathbf{w}_{k}+\bm{\bar{\theta}}_{s_{k}})=\bm{\xi}_k-\mathbf{p}_{s_{k}},$$
		which, together with the definition of $\bar{h}(\cdot)$ in \eqref{EQ_h} \cite{xiaoquan1998unbiased}, further implies
		\begin{equation}\label{eq:exp_z}
		 \mathbb{E}_{\mathbf{w}_k}\{\bar{{h}}(\mathbf{z}_{k}+\bm{\bar{\theta}}_{s_{k}})\}={h}\left(\mathbb{E}_{\mathbf{w}_k}\{\mathbf{z}_{k}+\bm{\bar{\theta}}_{s_{k}}\}\right)=\bm{\xi}_k-\mathbf{p}_{s_{k}}.
		\end{equation}
		Now, we can see from \eqref{eq:exp_model} and \eqref{eq:exp_z} that the measurements $\mathbf{z}_{k+1}$ and $\mathbf{z}_k$ satisfies the desired equation \eqref{eq:exp_model2}.  \qed
		\end{proof}

}}

\textcolor{black}{From Proposition \ref{prop:ls}}, we can formulate the problem of estimating sensor range biases $\bm{\Delta\rho}=[\Delta\rho_1,\Delta\rho_2,\ldots,\Delta\rho_M]^T$, azimuth biases $\bm{\Delta\phi}=[\Delta\phi_1,\Delta\phi_2,\ldots,\Delta\phi_M]^T$, and the expected target velocity $\mathbf{v}$ as the following nonlinear LS problem
\begin{equation}\label{eq:multi_optori}
\min_{ \bm{\Delta\rho}, \bm{\Delta\phi}, \mathbf{v} } \sum_{k=1}^{K-1} \| {g}_{k+1}\left( \bm{\theta}_{s_{k+1}}\right)-{g}_k\left(\bm{\theta}_{s_{k}}\right)-T_k\mathbf{v}\|^2,
\end{equation}
where
\begin{align*}
{g}_k(\bm{\theta}_{s_k})&=\bar{{h}}\left(\mathbf{z}_{k}+\bm{\theta}_{s_k}\right)+\mathbf{p}_{s_k},\  k=1,2,\ldots,K,\\ \bm{\theta}_{s_{k}}&=[\Delta\rho_{s_k},\Delta\phi_{s_k}]^T,\  k=1,2,\ldots,K.
\end{align*}

Problem \eqref{eq:multi_optori} is a non-convex problem due to the nonlinearity of $\bar{h}(\cdot)$. \textcolor{black}{In the sequel, we shall develop efficient algorithms to solve problem \eqref{eq:multi_optori} by exploiting its special structure.} In particular, we first study the single-sensor case of problem \eqref{eq:multi_optori} with $M=1$ in Section \ref{sec:single} and then study the multi-sensor case of problem \eqref{eq:multi_optori} with $M\geq 2$ in Section \ref{sec:multi}.

\section{\textcolor{black}{Single-Sensor Case: Separation Property}}
\label{sec:single}
Consider problem \eqref{eq:multi_optori} with $M=1$. With a slight abuse of \textcolor{black}{notation}, we still use $\{\mathbf{z}_k\}_{k=1}^K$ to denote the measurements from the sensor and use $\bm{\theta}=[\Delta \rho,\Delta \phi]^T$ to denote the range and azimuth biases to be estimated.

Problem \eqref{eq:multi_optori} with only one sensor can be simplified to:
\begin{equation}\label{EQ_OptSingle}
\min_{\bm{\theta},\mathbf{v}}f(\bm{\theta},\mathbf{v})\triangleq\sum_{k=1}^{K-1} \| \bar{{h}}(\mathbf{z}_{k+1}+\bm{\theta})-\bar{{h}}(\mathbf{z}_k+\bm{\theta})-T_k\mathbf{v}\|^2 .
\end{equation}
In light of \eqref{EQ_h}, $f(\bm{\theta},\mathbf{v})$ is a convex quadratic function with respect to $\Delta \rho$ and $\mathbf{v}$, \textcolor{black}{but is nonlinear and non-convex in terms of $\Delta\phi$.}
For any fixed $\Delta\phi$, problem \eqref{EQ_OptSingle} can be equivalently rewritten as the following convex quadratic program with respect to $\Delta\rho$ and $\mathbf{v}$:

\begin{equation}\label{eq:single_fixphi}
\min_{\Delta\rho,\mathbf{v}}\quad  \|\mathbf{H}_{0}\Delta\rho+\mathbf{H}_{1}\mathbf{v}-\mathbf{y}\|^2,
\end{equation}
where
\begin{equation}\label{eq:single_Handy}
\begin{aligned}
\mathbf{H}_{0}&=
[
c_1,
s_1,
c_2,
s_2,
\cdots,
c_{K-1},
s_{K-1}
]^T\in\mathbb{R}^{2(K-1)},
\\
\mathbf{H}_{1}&=
[-T_1,-T_2,\cdots,-T_{K-1}]^T\otimes \mathbf{I}_2\in\mathbb{R}^{2(K-1)\times 2},\\
\mathbf{y}&=
\left[
y_1^c,y_1^s,y_2^c,y_2^s, \ldots,y_{K-1}^c,y_{K-1}^s
\right]^T\in\mathbb{R}^{2(K-1)}.
\end{aligned}
\end{equation}
In the above,
\begin{align}
c_k=\lambda^{-1}\left[\cos(\phi_{k+1}+\Delta\phi)-\cos(\phi_{k}+\Delta\phi) \right],\label{eq:c_k}
\\
s_k=\lambda^{-1}\left[\sin(\phi_{k+1}+\Delta\phi)-\sin(\phi_{k}+\Delta\phi) \right],\label{eq:s_k}
\end{align}
and
{
\begin{align}
y_k^c=-\lambda^{-1}\left[\rho_{k+1}\cos(\phi_{k+1}+\Delta\phi)-\rho_{k}\cos(\phi_{k}+\Delta\phi) \right],\label{eq:yc_k}
\\
y_k^s=-\lambda^{-1}\left[\rho_{k+1}\sin(\phi_{k+1}+\Delta\phi)-\rho_{k}\sin(\phi_{k}+\Delta\phi) \right].\label{eq:ys_k}
\end{align}
}
Suppose $K\geq3$ (such that $\mathbf{H}\triangleq[\mathbf{H}_1,\mathbf{H}_2]$ is generically full column rank\footnote{\textcolor{black}{There exist degenerate scenarios where $\textrm{rank}(\mathbf{H})< 3$ even when $K\geq3$, e.g., the target trajectory and the sensor are on the same straight line. In this paper, we will focus on generic scenarios where $\text{rank}(\mathbf{H})=3$ when $K\geq3$.}}), then the unique optimal solution of problem (\ref{eq:single_fixphi}) is \cite{boyd2004convex}
\begin{equation}\label{EQ_SoluRho}
[
\Delta\rho^*,
\mathbf{v}^*
]^T=\left(\mathbf{H}^T\mathbf{H}\right)^{-1}\mathbf{H}^T\mathbf{y}.
\end{equation}
The following Theorem \ref{thm:single} shows, somewhat surprisingly, that $\Delta \rho^*$ in \eqref{EQ_SoluRho} does not depend on $\Delta \phi$ and hence is optimal to problem \eqref{EQ_OptSingle}.
\begin{thm}\label{thm:single}
	For any given $\Delta \phi$, problems \eqref{EQ_OptSingle} and \eqref{eq:single_fixphi} have the same optimal $\Delta \rho^*$ given by \eqref{EQ_SoluRho} and the same optimal objective value. However, the optimal $\mathbf{v}^*$ in \eqref{EQ_SoluRho} depends on $\Delta \phi.$
\end{thm}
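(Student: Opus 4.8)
The driving observation is a rotational symmetry: shifting the azimuth bias $\Delta\phi$ rotates every summand of the objective in \eqref{EQ_OptSingle} by the same angle, and—since a rotation is orthogonal—that rotation can be absorbed into the free velocity variable $\mathbf{v}$, but it cannot be absorbed into $\Delta\rho$. This is precisely why the optimal $\Delta\rho^{*}$ turns out to be independent of $\Delta\phi$ while the optimal $\mathbf{v}^{*}$ is not.

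First I would record that, for any fixed $\Delta\phi$, problems \eqref{EQ_OptSingle} and \eqref{eq:single_fixphi} are literally identical: by \eqref{EQ_h} and the definitions \eqref{eq:c_k}--\eqref{eq:ys_k}, the $k$-th length-$2$ block of the residual $\mathbf{H}_0\Delta\rho+\mathbf{H}_1\mathbf{v}-\mathbf{y}$ equals $\bar{h}(\mathbf{z}_{k+1}+\bm{\theta})-\bar{h}(\mathbf{z}_{k}+\bm{\theta})-T_k\mathbf{v}$ with $\bm{\theta}=[\Delta\rho,\Delta\phi]^{T}$. Hence, for the theorem, it only remains to show that (i) the optimal value and the optimal $\Delta\rho$ of the linear least squares problem \eqref{eq:single_fixphi} do not depend on $\Delta\phi$, and (ii) this common $\Delta\rho^{*}$ is then also optimal for \eqref{EQ_OptSingle}.

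For (i), let $R(\alpha)$ be the $2\times 2$ rotation by angle $\alpha$ and $\widetilde{R}(\alpha)=\mathbf{I}_{K-1}\otimes R(\alpha)$, which is orthogonal. Grouping the entries of $\mathbf{H}_0$ and $\mathbf{y}$ into $K-1$ consecutive $2$-vector blocks and $\mathbf{H}_1$ into $K-1$ stacked $2\times 2$ blocks, the angle-addition identities underlying \eqref{eq:c_k}--\eqref{eq:ys_k} give $\mathbf{H}_0=\widetilde{R}(\Delta\phi)\,\mathbf{H}_0^{(0)}$ and $\mathbf{y}=\widetilde{R}(\Delta\phi)\,\mathbf{y}^{(0)}$, where the superscript $(0)$ denotes the same quantity with $\Delta\phi$ replaced by $0$; moreover $\mathbf{H}_1$ does not depend on $\Delta\phi$, and a short Kronecker-product computation yields $\widetilde{R}(\alpha)\,\mathbf{H}_1=\mathbf{H}_1\,R(\alpha)$. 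Combining these, the residual of \eqref{eq:single_fixphi} at azimuth bias $\Delta\phi$ and point $(\Delta\rho,\mathbf{v})$ equals $\widetilde{R}(\Delta\phi)$ times the residual of the $\Delta\phi=0$ problem at the point $(\Delta\rho,\,R(-\Delta\phi)\mathbf{v})$; taking norms (orthogonality removes $\widetilde{R}(\Delta\phi)$) and substituting $\mathbf{v}'=R(-\Delta\phi)\mathbf{v}$—a bijection of $\mathbb{R}^{2}$—turns problem \eqref{eq:single_fixphi} for an arbitrary $\Delta\phi$ into the \emph{same} problem with $\Delta\phi=0$. It follows at once that the optimal value is independent of $\Delta\phi$, and that the minimizer of \eqref{eq:single_fixphi} is $[\Delta\rho^{*},\mathbf{v}^{*}]^{T}$ with $\Delta\rho^{*}$ equal to the first coordinate of $(\mathbf{G}^{T}\mathbf{G})^{-1}\mathbf{G}^{T}\mathbf{y}^{(0)}$, where $\mathbf{G}=[\mathbf{H}_0^{(0)},\mathbf{H}_1]$, and $\mathbf{v}^{*}=R(\Delta\phi)\,\mathbf{v}^{*}|_{\Delta\phi=0}$. (The hypothesis $K\ge 3$ together with the generic rank-$3$ assumption on $\mathbf{H}=[\mathbf{H}_0,\mathbf{H}_1]$ ensures uniqueness here, and the same identity shows $\mathrm{rank}(\mathbf{H})$ is itself $\Delta\phi$-invariant.) Comparing with the normal-equation formula \eqref{EQ_SoluRho} then shows that $\Delta\rho^{*}$ is the same number for every $\Delta\phi$, whereas $\mathbf{v}^{*}=R(\Delta\phi)\,\mathbf{v}^{*}|_{\Delta\phi=0}$ genuinely changes with $\Delta\phi$ whenever $\mathbf{v}^{*}|_{\Delta\phi=0}\neq\mathbf{0}$, i.e.\ in the generic moving-target case.

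Finally, for (ii): for each fixed $\Delta\phi$ the problem \eqref{EQ_OptSingle} coincides with \eqref{eq:single_fixphi}, and by (i) every such slice attains the same minimal value with the same first coordinate $\Delta\rho^{*}$; hence minimizing additionally over $\Delta\phi$ changes neither the optimal value nor the optimal $\Delta\rho$, so $\Delta\rho^{*}$ given by \eqref{EQ_SoluRho} solves \eqref{EQ_OptSingle} and the two problems share the optimal objective value. I expect the only non-routine step to be spotting the rotational reparametrization and verifying $\widetilde{R}(\alpha)\mathbf{H}_1=\mathbf{H}_1 R(\alpha)$—that is, confirming that the induced rotation can be pushed onto $\mathbf{v}$ but not onto $\mathbf{H}_0\Delta\rho$; everything else is routine manipulation of the least squares normal equations.
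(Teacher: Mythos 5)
Your proposal is correct, and it takes a genuinely different route from the paper's proof. The paper first eliminates $\mathbf{v}$ in closed form, projects onto the orthogonal complement of the range of $\mathbf{H}_1$, and then reduces \eqref{eq:single_fixphi} to a scalar least squares problem in $\Delta\rho$ whose normal-equation coefficients ($\mathbf{a}_k^T\mathbf{a}_k$ and $\mathbf{a}_k^T\mathbf{b}_k$) are shown to be $\Delta\phi$-free by a direct trigonometric computation using the angle-difference identity. You instead exhibit an exact equivariance of the data: writing $\widetilde{R}(\Delta\phi)=\mathbf{I}_{K-1}\otimes R(\Delta\phi)$, you have $\mathbf{H}_0=\widetilde{R}(\Delta\phi)\mathbf{H}_0^{(0)}$, $\mathbf{y}=\widetilde{R}(\Delta\phi)\mathbf{y}^{(0)}$, and $\widetilde{R}(\alpha)\mathbf{H}_1=\mathbf{H}_1R(\alpha)$ (all three identities check out against \eqref{eq:single_Handy}--\eqref{eq:ys_k}), so the $\Delta\phi$-problem is an isometric reparametrization of the $\Delta\phi=0$ problem with $\mathbf{v}$ rotated, and the conclusion about $\Delta\rho^*$, the optimal value, and the minimization over $\Delta\phi$ in \eqref{EQ_OptSingle} follows without any trigonometric bookkeeping. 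What your approach buys: it formalizes precisely the geometric picture the paper only offers as intuition (Fig.~3), and it yields the explicit transformation law $\mathbf{v}^*(\Delta\phi)=R(\Delta\phi)\,\mathbf{v}^*(0)$, which makes the final claim of the theorem --- that $\mathbf{v}^*$ in \eqref{EQ_SoluRho} genuinely depends on $\Delta\phi$ (when $\mathbf{v}^*(0)\neq\mathbf{0}$) --- explicit, whereas the paper leaves that part essentially implicit. What the paper's computation buys in exchange is the concrete reduced problem in $\Delta\rho$ alone (the $\mathbf{a}_k,\mathbf{b}_k$ form after eliminating $\mathbf{v}$), which is of independent use. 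One minor point of care in your write-up: the uniqueness of the minimizer rests on the generic full-column-rank assumption on $\mathbf{H}=[\mathbf{H}_0,\mathbf{H}_1]$ (note the paper's ``$[\mathbf{H}_1,\mathbf{H}_2]$'' is a typo), and your observation that this rank is itself $\Delta\phi$-invariant under the reparametrization closes that loop correctly.
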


The proof of Theorem \ref{thm:single} is given in Appendix A. Here we give an intuitive explanation of Theorem \ref{thm:single} \textcolor{black}{using} Fig.$\ $\ref{fig:geo_single} below. Suppose that there is no motion process noise or measurement noise.
	Given the original measurements $\mathbf{z}_1,\mathbf{z}_2,\mathbf{z}_3$ (green points), problem \eqref{EQ_OptSingle} aims at finding an azimuth bias $\Delta \phi,$ a range bias $\Delta \rho,$ and a velocity vector $\mathbf{v}$ to minimize the matching errors (corresponding to the square sum of the length of those black segments in Fig.$\ $\ref{fig:geo_single}).
	As shown in Fig.$\ $\ref{fig:geo_single}, when we rotate green points to blue points by $\Delta \phi$ or to blue circles by $\Delta \phi',$ the relative positions between the red and orange points (circles) do not change. In other words, \textcolor{black}{the optimal} $\Delta \rho$ and the optimal value of problem \eqref{EQ_OptSingle} are the same for both $\Delta \phi$ and $\Delta \phi'$.
	However, the optimal velocity of problem \eqref{EQ_OptSingle} indeed changes, i.e., it changes from $\mathbf{v}$ to $\mathbf{v}^{\prime}$ when the rotation angle changes from $\Delta \phi$ to $\Delta \phi^{\prime}.$

\begin{figure}[H]
	\centering
	\includegraphics [width=3.5in]{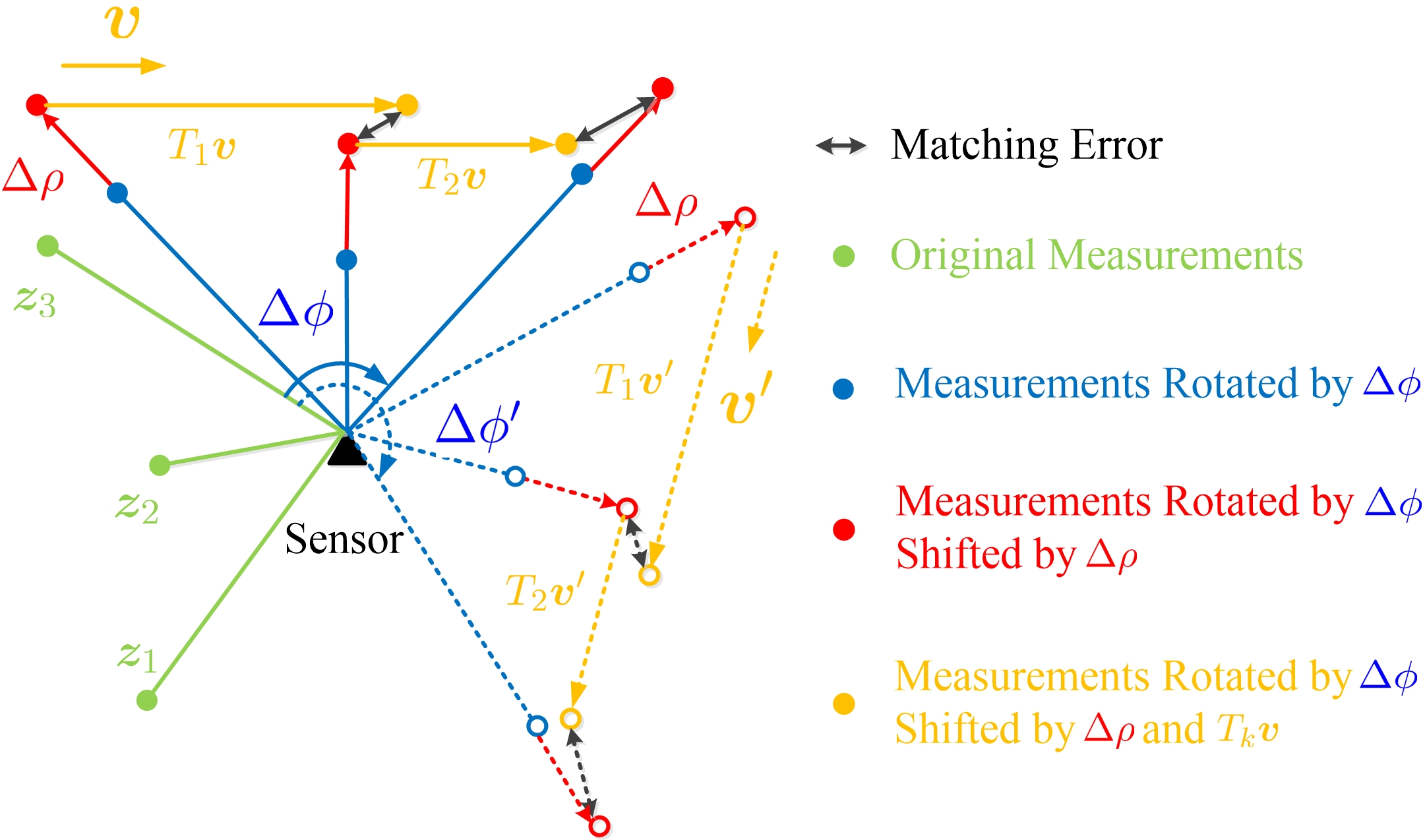}
	\caption{ \small A geometrical explanation of Theorem \ref{thm:single}. }
	\label{fig:geo_single}
\end{figure}

Based on Theorem \ref{thm:single}, we have the following corollary.

{\color{black}{
		\begin{coro}\label{coro:single}
			If there is no noise, i.e., $\mathbf{n}_k=\mathbf{\dot{n}}_k=\mathbf{0}$ for all $k$ and $\sigma_\rho^2=\sigma_\phi^2=0$, then solving problem \eqref{eq:single_fixphi} can recover the true range bias, $\Delta\rho^*=\Delta\bar{\rho}$, where $\Delta\bar{\rho}$ is the true range bias.
		\end{coro}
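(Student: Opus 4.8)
\noindent The plan is to derive Corollary~\ref{coro:single} from Proposition~\ref{prop:ls} and Theorem~\ref{thm:single}, reducing it to the uniqueness of the minimizer of a strictly convex quadratic program.

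First I would record what the noiseless assumption buys us. Since $\sigma_\phi^2=0$, we have $\lambda=e^{-\sigma_\phi^2/2}=1$, so the map $\bar{h}(\cdot)$ in \eqref{EQ_h} is precisely the spherical-to-Cartesian transformation ${h}(\cdot)$. Because $\mathbf{w}_k=\mathbf{0}$, the measurement model \eqref{eq:MeasureModel} gives ${h}(\mathbf{z}_k+\bm{\bar{\theta}})=\bm{\xi}_k-\mathbf{p}$ for every $k$ (dropping the single-sensor subscript on $\mathbf{p}_{s_k}$ and $\bm{\bar{\theta}}_{s_k}$), hence $g_k(\bm{\bar{\theta}})=\bar{h}(\mathbf{z}_k+\bm{\bar{\theta}})+\mathbf{p}=\bm{\xi}_k$. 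Likewise, with $\mathbf{n}_k=\mathbf{\dot{n}}_k=\mathbf{0}$ for all $k$ (including the initial $\mathbf{n}_0,\mathbf{\dot{n}}_0$), the motion model \eqref{eq:motion} together with \eqref{barv} collapses to $\bm{\dot{\xi}}_k\equiv\mathbf{\bar{v}}$ and $\bm{\xi}_{k+1}=\bm{\xi}_k+T_k\mathbf{\bar{v}}$, where $\mathbf{\bar{v}}$ is the true target velocity. Substituting these into the objective of \eqref{EQ_OptSingle} at $(\bm{\theta},\mathbf{v})=(\bm{\bar{\theta}},\mathbf{\bar{v}})$ yields
\[
f(\bm{\bar{\theta}},\mathbf{\bar{v}})=\sum_{k=1}^{K-1}\bigl\|\bm{\xi}_{k+1}-\bm{\xi}_k-T_k\mathbf{\bar{v}}\bigr\|^2=0,
\]
and since $f\ge 0$ always, the optimal value of \eqref{EQ_OptSingle} is $0$, attained at $(\bm{\bar{\theta}},\mathbf{\bar{v}})$.

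Next I would combine this with Theorem~\ref{thm:single}. That theorem tells us the $\Delta\rho^*$ produced by \eqref{EQ_SoluRho} does not depend on $\Delta\phi$ and that, for every fixed $\Delta\phi$, the optimal value of \eqref{eq:single_fixphi} equals that of \eqref{EQ_OptSingle}; by the previous paragraph this common value is $0$. Now specialize to $\Delta\phi=\Delta\bar{\phi}$. By construction, for this fixed value the objective of \eqref{eq:single_fixphi} is exactly $f([\Delta\rho,\Delta\bar{\phi}]^T,\mathbf{v})$, so on the one hand $\|\mathbf{H}_0\Delta\bar{\rho}+\mathbf{H}_1\mathbf{\bar{v}}-\mathbf{y}\|^2=f(\bm{\bar{\theta}},\mathbf{\bar{v}})=0$, i.e.\ $(\Delta\bar{\rho},\mathbf{\bar{v}})$ is a minimizer of \eqref{eq:single_fixphi}; on the other hand, under the generic assumption $\textrm{rank}(\mathbf{H})=3$ for $K\ge3$, the matrix $\mathbf{H}=[\mathbf{H}_0,\mathbf{H}_1]$ has full column rank, so \eqref{eq:single_fixphi} is strictly convex and its minimizer \eqref{EQ_SoluRho} is unique. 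Equating the two minimizers forces the first block of \eqref{EQ_SoluRho} to be $\Delta\rho^*=\Delta\bar{\rho}$ (and incidentally $\mathbf{v}^*=\mathbf{\bar{v}}$), which is the claim. Since $\Delta\rho^*$ is independent of $\Delta\phi$ by Theorem~\ref{thm:single}, the same conclusion holds no matter which $\Delta\phi$ is used when solving \eqref{eq:single_fixphi}.

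I expect no genuine obstacle here: once Proposition~\ref{prop:ls} and Theorem~\ref{thm:single} are available, the argument is essentially bookkeeping. The only two points that require attention are the identification $\lambda=1$ in the noiseless regime --- which is exactly what makes $g_k(\bm{\bar{\theta}})$ reproduce the true positions $\bm{\xi}_k$, so that the true parameters give a zero objective --- and the appeal to the generic full-column-rank hypothesis on $\mathbf{H}$, which is what upgrades ``$(\Delta\bar{\rho},\mathbf{\bar{v}})$ attains the minimum'' to ``$(\Delta\bar{\rho},\mathbf{\bar{v}})$ is \emph{the} minimizer.''
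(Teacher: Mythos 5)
Your proof is correct and follows essentially the same route as the paper: show the true parameters $(\bm{\bar{\theta}},\mathbf{\bar{v}})$ make the objective of \eqref{EQ_OptSingle} vanish in the noiseless case, specialize \eqref{eq:single_fixphi} to $\Delta\phi=\Delta\bar{\phi}$ so that uniqueness of the LS solution forces $\Delta\rho^*=\Delta\bar{\rho}$, and then invoke Theorem \ref{thm:single} to transfer this to every fixed $\Delta\phi$. Your explicit remarks that $\lambda=1$ when $\sigma_\phi^2=0$ and that the full-column-rank assumption on $\mathbf{H}$ is what gives uniqueness are details the paper leaves implicit, but they do not change the argument.
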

}}
\begin{proof}
	\textcolor{black}{In the absence of the motion process noise and measurement noise}, we have, from measurement model \eqref{eq:MeasureModel}, target motion model \eqref{eq:motion}, and definition of $\bar{h}(\cdot)$, that
	$$\bar{{h}}(\mathbf{z}_{k+1}+\bm{\bar{\theta}})-\bar{{h}}(\mathbf{z}_k+\bm{\bar{\theta}})-T_k\mathbf{\bar{v}}=\mathbf{0}.$$
	In other words, the optimal value of problem \eqref{EQ_OptSingle} is zero and thus {{$\bm{\bar{\theta}}$}} is its global minimizer. Suppose that $\Delta\phi$ in problem \eqref{eq:single_fixphi} is the true azimuth bias $\Delta\bar{\phi}$. Then, $\Delta\rho^*$ in \eqref{EQ_SoluRho} should be equal to $\Delta\bar{\rho}$. By Theorem \ref{thm:single}, $\Delta\rho^*$ is independent of $\Delta\phi$. Therefore, for any fixed $\Delta\phi$,  $\Delta\rho^*=\Delta\bar{\rho}$ always holds. \qed
\end{proof}

Theorem \ref{thm:single} tells us that each sensor can estimate its range bias $\Delta \rho$ independently by solving problem \eqref{EQ_OptSingle} (or problem \eqref{eq:single_fixphi}). Moreover, \textcolor{black}{in the absence of motion process noise and measurement noise,} the range bias can be exactly recovered by \eqref{EQ_SoluRho}, as shown in Corollary \ref{coro:single}. However, each sensor cannot estimate its azimuth bias $\Delta \phi$ and target velocity $\mathbf{v}$ independently by solving problem \eqref{EQ_OptSingle} due to the ambiguity of $\Delta \phi$ and $\mathbf{v}$ in problem \eqref{EQ_OptSingle}, i.e., problem \eqref{EQ_OptSingle} has multiple optimal pairs $(\Delta \phi, \mathbf{v})$. The ambiguity of $\Delta \phi$ and $\mathbf{v}$ arising in the single-sensor case can be solved by combining measurements from different sensors. In particular, both range and azimuth biases as well as target velocity can be estimated by solving our proposed nonlinear LS formulation \eqref{eq:multi_optori} as shown in Section \ref{sec:multi}.

We conclude this section by using Fig. \ref{fig:geo_amg}, which illustrates how the ambiguity of {{$\Delta \phi$}} and $\mathbf{v}$ in the single-sensor case can be resolved by combining measurements from two different sensors. \textcolor{black}{Assume that all kinds of noises are absent}, by solving problem \eqref{EQ_OptSingle} (or problem \eqref{eq:single_fixphi}), each sensor can recover its true range bias. \textcolor{black}{In other words,} their local measurements can be aligned onto a straight line (corresponding to the black dash lines in Fig. \ref{fig:geo_amg}) by compensating range biases. Since each sensor's measurements are from one target with a constant velocity, there is only one possibility to rotate those dash lines onto a \textcolor{black}{common} line (corresponding to the black solid line in Fig. \ref{fig:geo_amg}). Therefore, there is no ambiguity of azimuth biases and target velocity in the two-sensor case.
\begin{figure}[h]
	\centering
	\includegraphics [width=0.6\linewidth]{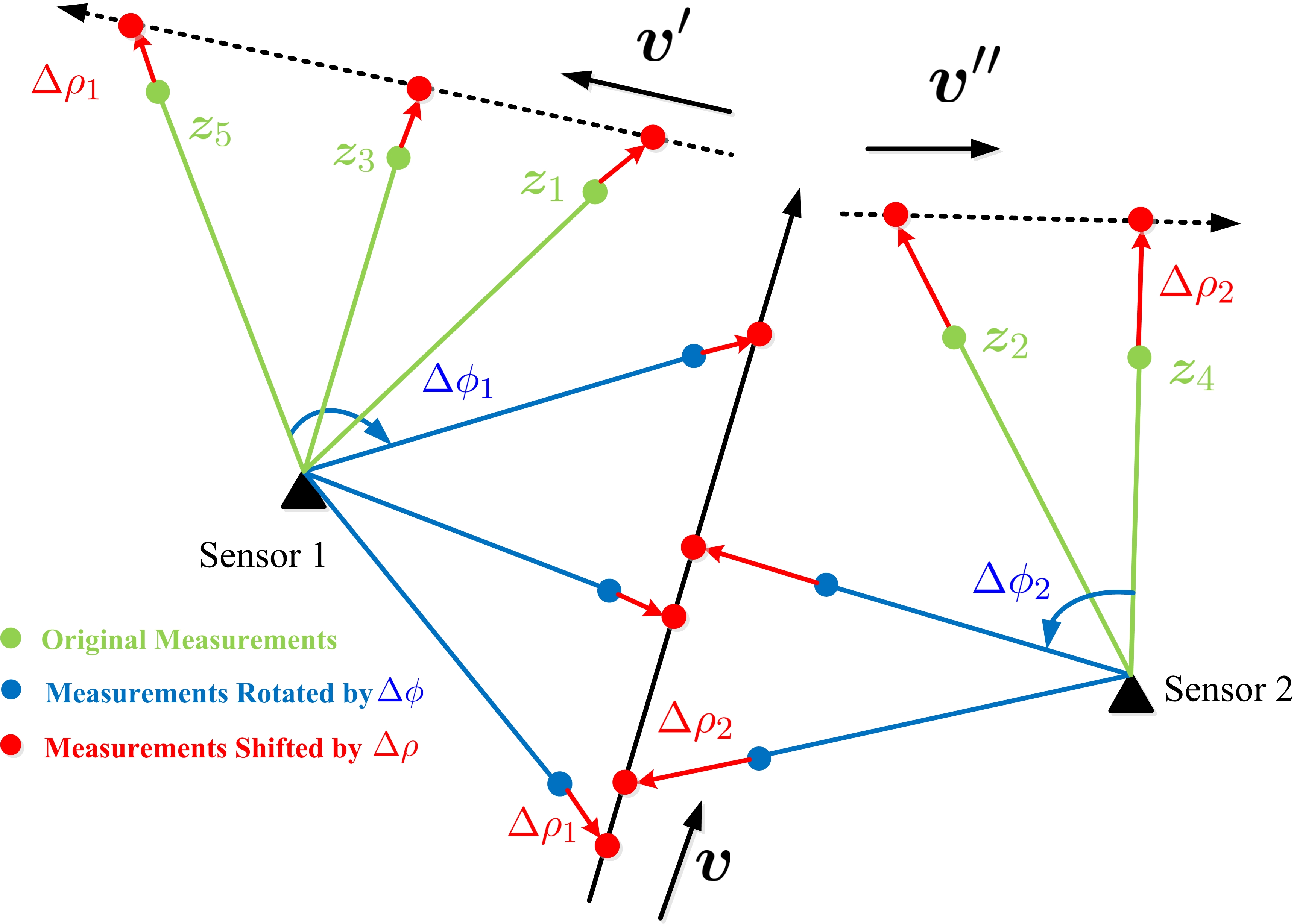}
	\caption{ \small A geometrical explanation of resolving ambiguity of azimuth biases and target velocity by combining measurements from 2 different sensors.}
	\label{fig:geo_amg}
\end{figure}

\section{\textcolor{black}{Multi-Sensor Case: A Block Coordinate Descent Algorithm}}
\label{sec:multi}
In this section, we study problem \eqref{eq:multi_optori} in the multi-sensor case, i.e., $M\geq 2$. In this case, the optimal $\bm{\Delta\rho}$ of problem \eqref{eq:multi_optori} depends on $\bm{\Delta\phi}$; see Eq. \eqref{eq:multi_solu_rho} further ahead. This is different from the case where the optimal $\Delta\rho$ in \eqref{EQ_SoluRho} does not depend on $\Delta\phi$ when there is only one sensor (see Theorem \ref{thm:single}).

The non-convexity of problem \eqref{eq:multi_optori} comes from the nonlinear terms $\Delta\rho_m\sin{\Delta\phi_m}$ and $\Delta\rho_m\cos{\Delta\phi_m}$. To handle such difficulty, we propose to use the {BCD algorithm} to alternately minimize $f(\bm{\Delta\rho}, \bm{\Delta\phi}, \mathbf{v})$ with respect to two blocks $\Delta\bm{\rho}$ and $(\Delta\bm{\phi},\mathbf{v})$ as follows:
\begin{subequations}
	\begin{equation}\label{eq:multi_bcd_pro1}
	\bm{\Delta\rho}^{t+1}=\arg\min_{\bm{\Delta\rho}}f(\bm{\Delta\rho}, \bm{\Delta\phi}^t, \mathbf{v}^t),
	\end{equation}
	\begin{equation}\label{eq:multi_bcd_pro2}
	(\bm{\Delta\phi}^{t+1},\mathbf{v}^{t+1})=\arg\min_{\bm{\Delta\phi},\mathbf{v}}f(\bm{\Delta\rho}^{t+1}, \bm{\Delta\phi}, \mathbf{v}),
	\end{equation}
\end{subequations}
	where $t\geq 1$ is the iteration index. \textcolor{black}{The (intuitive) reason for an alternate optimization of $\bm{\Delta\rho}$ and $(\bm{\Delta\phi},\mathbf{v})$ is that these two types of variables have different influences on the measurements as illustrated in Fig.  \ref{fig:geo_amg}, i.e., $\bm{\Delta\rho}$ `shifts' and $\bm{\Delta\phi}$ `rotates' the measurements. Another reason, from the mathematical point of view, is that both subproblems \eqref{eq:multi_bcd_pro1} and \eqref{eq:multi_bcd_pro2} can be solved globally and efficiently (under mild conditions), which will be shown in Sections \ref{subsec:multi_range} and \ref{subsec:multi_azi}, respectively. In Section \ref{subsec:multi_alg}, we give the BCD algorithm to solve problem \eqref{eq:multi_optori}.}

\subsection{Solution for Subproblem \eqref{eq:multi_bcd_pro1}}
\label{subsec:multi_range}
\textcolor{black}{For {notation}al simplicity, we omit the iteration index here. With fixed $(\bm{\Delta\phi},\mathbf{v})$, subproblem \eqref{eq:multi_bcd_pro1} can be reformulated as an unconstrained convex quadratic problem as follows:
	\begin{equation}\label{eq:multi_opt_rho}
	\min_{\bm{\Delta\rho}}\ \|\mathbf{G}\bm{\Delta\rho} -  \mathbf{y}\|^2.
	\end{equation}
	In the above, $\mathbf{y}= [\mathbf{y}_1^T,\mathbf{y}_2^T,\cdots,\mathbf{y}_{K-1}^T]^T\in\mathbb{R}^{2(K-1)}$, and
		\begin{equation}
		\begin{aligned}
		\mathbf{y}_k&=\textcolor{black}{-}[y_{k}^c+(p_{s_{k+1}}^x-p_{s_{k}}^x)-T_{k}v_x,
		y_{k}^s+(p_{s_{k+1}}^y-p_{s_{k}}^y)-T_{k}v_y]^T,\\
		y_k^c&=\lambda^{-1}\left[\rho_{k+1}\cos(\phi_{k+1}+\Delta\phi_{s_{k+1}})-\rho_k\cos(\phi_k+\Delta\phi_{s_k})\right],\\
		y_k^s&=\lambda^{-1}\left[\rho_{k+1}\sin(\phi_{k+1}+\Delta\phi_{s_{k+1}})-\rho_k\sin(\phi_k+\Delta\phi_{s_k})\right];
		\end{aligned}
		\end{equation}
	and the $(n,m)$-th entry of $\mathbf{G}\in\mathbb{R}^{2(K-1)\times M}$ is
	\begin{equation*}
	\left[\mathbf{G}\right]_{nm}=
	\left\lbrace
	\begin{aligned}
	&\lambda^{-1}\cos(\phi_{k+1}),&&\ \textrm{if $n$ is odd and $m=s_{k+1}$ },\\
	&-\lambda^{-1}\cos(\phi_{k}),&&\ \textrm{if $n$ is odd and $m=s_{k}$ },\\
	&\lambda^{-1}\sin(\phi_{k+1}),&&\ \textrm{if $n$ is even and $m=s_{k+1}$ },\\
	&-\lambda^{-1}\sin(\phi_{k}),&&\ \textrm{if $n$ is even and $m=s_{k}$ },\\
	&0, &&\  \textrm{otherwise},
	\end{aligned}
	\right.
	\end{equation*}
	where $k={\lfloor \frac{n} {2}\rfloor}$ and $\lfloor\cdot\rfloor$ denotes the floor operation. 
The closed-form solution of problem \eqref{eq:multi_opt_rho} is given by 
	\begin{equation}\label{eq:multi_solu_rho}
	\bm{\Delta\rho}^*=(\mathbf{G}^T\mathbf{G})^{-1}\mathbf{G}^T\mathbf{y}.
	\end{equation}} 
\subsection{Solution for Subproblem \eqref{eq:multi_bcd_pro2}}
\label{subsec:multi_azi}
To efficiently solve subproblem \eqref{eq:multi_bcd_pro2}, we first reformulate it as an equivalent complex quadratically constrained quadratic program (QCQP) as follows:
\begin{equation}
\begin{aligned}[c]\label{EQ_Reformulation_Complex}
\min_{\mathbf{x}\in\mathbb{C}^{M},v\in\mathbb{C}}\quad &\| \mathbf{H}\mathbf{x}-\mathbf{t}v + \mathbf{c}\|^2 \\
\textrm{s.t. } \quad\quad &|\mathbf{x}_{m}|^2 = 1,\quad \forall \  m,
\end{aligned}
\end{equation}
where $\mathbf{x}_{m}$ \textcolor{black}{is an equivalent representation of} the azimuth bias of sensor $m$ in the sense that $\Delta\phi_m=\angle \mathbf{x}_m$, and the complex scalar $v=v_x+jv_y$ represents the constant velocity.
In \eqref{EQ_Reformulation_Complex}, matrix $\mathbf{H}\in\mathbb{C}^{(K-1)\times M}$ are determined by sensor measurements $\{\mathbf{z}_k\}_{k=1}^K$ as follows:

\begin{equation}\label{eq:DefineH}
\left[\mathbf{H}\right]_{km}=
\left\lbrace
\begin{aligned}
&\lambda^{-1}(\rho_{k+1}+\Delta\rho_{s_{k+1}}) e^{j\phi_{k+1}},&&\ \textrm{if $m=s_{k+1}$},\\
&-\lambda^{-1}(\rho_{k}+\Delta\rho_{s_{k}}) e^{j\phi_{k}},&&\ \textrm{if $m=s_{k}$},\\
&0,&&\  \textrm{otherwise};
\end{aligned}
\right.
\end{equation}
vector $\mathbf{t}\in \mathbb{R}^{K-1}$ is related to time differences $\{T_k\}_{k=1}^{K-1}$ as follows:
\begin{equation*}
\mathbf{t}=[T_1,T_2,\ldots,T_{K-1} ]^T,
\end{equation*}
and $\mathbf{c}=[c_1,c_2,\ldots,c_{K-1}]^T\in\mathbb{C}^{K-1}$ is related to sensor positions $\{\mathbf{p}_m\}_{m=1}^M$ as follows:
\begin{equation*}
c_k=(p_{s_{k+1}}^x-p_{s_{k}}^x)+j(p_{s_{k+1}}^y-p_{s_{k}}^y),\  k=1,2,\ldots,K-1.
\end{equation*}

As an unconstrained quadratic program in $v$, problem \eqref{EQ_Reformulation_Complex} has a closed-form solution given by
\begin{equation}\label{EQ_Close_v}
v=(\mathbf{t}^\dagger\mathbf{t})^{-1}\mathbf{t}^\dagger(\mathbf{H}\mathbf{x}+\mathbf{c}).
\end{equation}
Plugging (\ref{EQ_Close_v}) into (\ref{EQ_Reformulation_Complex}), we get
\begin{equation}\label{EQ_OptDropv}
\begin{aligned}
\min_{\mathbf{x}\in\mathbb{C}^{M}}\quad&\|\mathbf{PH}\mathbf{x}+\mathbf{P}\mathbf{c}\|^2\\
\textrm{s.t. }\quad&|\mathbf{x}_m|^2=1,\  m=1,2,\ldots,M,
\end{aligned}
\end{equation}
where $\mathbf{P}=\mathbf{I}-\mathbf{t}\mathbf{t}^\dagger/\|\mathbf{t}\|^2$.

Problem (\ref{EQ_OptDropv}) is a non-convex QCQP, and \textcolor{black}{this} class of problems is known to be NP-hard in general \cite{luo20104}. One efficient convex relaxation technique to solve such class of problems, semidefinite relaxation (SDR), has shown its effectiveness in signal processing and communication communities \cite{luo2010semidefinite,lu2017tightness}.
We also apply the SDR technique to solve problem \eqref{EQ_OptDropv}. To do so, we further reformulate problem (\ref{EQ_OptDropv}) in a homogeneous form as follows:
\begin{equation}
\begin{aligned}\label{EQ_OptHomeQCQP}
\min_{\mathbf{x}\in\mathbb{C}^{M+1}}\quad&\mathbf{x}^\dagger\mathbf{C}\mathbf{x}\\
\textrm{s.t.}\quad\quad&|\mathbf{x}_m|^2=1,\  m=1,2,\ldots,M+1,
\end{aligned}
\end{equation}
where
\begin{equation*}
\mathbf{C}=\begin{bmatrix}\mathbf{H}^\dagger\mathbf{PH}&\mathbf{H}^\dagger\mathbf{P}\mathbf{c} \\ \mathbf{c}^\dagger\mathbf{PH}&0\end{bmatrix}.
\end{equation*}
It is simple to show that problems (\ref{EQ_OptDropv}) and (\ref{EQ_OptHomeQCQP}) are equivalent in the sense that $\mathbf{x}^*\in\mathbb{C}^{M+1}$ is the optimal solution for problem (\ref{EQ_OptHomeQCQP}) if and only if  ${\mathbf{x}^*_{1:M}}/{\mathbf{x}^*_{M+1}}\in\mathbb{C}^M$ is the optimal solution for problem (\ref{EQ_OptDropv}).

The SDP relaxation of problem \eqref{EQ_OptHomeQCQP} is
\begin{equation}
\begin{aligned}\label{eq:SDP}
\min_{\mathbf{X}\in\mathbb{H}^{M+1}}\quad&\textrm{Tr}(\mathbf{CX})\\
\textrm{s.t.}\quad\quad&\textrm{diag}(\mathbf{X})=\mathbf{1}, \\
&\mathbf{X}\succeq\mathbf{0}.
\end{aligned}
\end{equation}
Problem (\ref{eq:SDP}) can be efficiently solved by the interior-point algorithm \cite{helmberg1996interior}. If the optimal solution $\mathbf{X}^*$ for problem (\ref{eq:SDP}) is of rank one, i.e., $\mathbf{X}^*=\mathbf{x}^*(\mathbf{x}^*)^\dagger$, then the optimal solution for problem (\ref{eq:multi_bcd_pro2}) is obtained as follows:
\begin{equation}
\begin{aligned}\label{EQ_ExtracPhi}
\Delta\phi_m^*=\angle \frac{\mathbf{x}^*_m}{\mathbf{x}^*_{M+1}},\quad m=1,\ldots,M,
\end{aligned}
\end{equation}
and
\begin{equation}
\begin{aligned}\label{EQ_ExtracV}
\mathbf{v}^*=[\textrm{Re}\{ v^* \},\textrm{Im}\{ v^* \}]^T,\ v^* = \left(\mathbf{t}^\dagger\mathbf{t}\right)^{-1}\mathbf{t}^\dagger\left(\mathbf{H}\frac{\mathbf{x}^*_{1:M}}{\mathbf{x}^*_{M+1}}+\mathbf{c}\right).
\end{aligned}
\end{equation}
The dual problem of SDP \eqref{eq:SDP} is
\begin{equation*}
\begin{aligned}
\min_{\mathbf{y}\in\mathbb{R}^{M+1}}&\quad \mathbf{1}^T\mathbf{y}\\
\textrm{s.t.}\quad&\quad \mathbf{C}+\textrm{Diag}(\mathbf{y})\succeq \mathbf{0}.
\end{aligned}
\end{equation*}
The following lemma states sufficient conditions that SDP \eqref{eq:SDP} admits a unique rank-one solution.
\begin{lem}\label{lem:sufcond}
	SDP \eqref{eq:SDP} has a unique minimizer $\mathbf{X}$ of rank one if there exists $\mathbf{X}\in\mathbb{H}^{M+1}$ and $\mathbf{y}\in\mathbb{R}^{M+1}$ such that \\
	1. $\textrm{diag}(\mathbf{X})=\mathbf{1}$ and $\mathbf{X}\succeq\mathbf{0}$;\\
	2. $\mathbf{C}+\textrm{Diag}(\mathbf{y})\succeq\mathbf{0}$;\\
	3. $\left[ \mathbf{C}+\textrm{Diag}(\mathbf{y}) \right]\mathbf{X}=\mathbf{0}$; and\\
	4. $\mathbf{H}^\dagger\mathbf{P}\mathbf{H}+\textrm{Diag}(\mathbf{y}_{1:M})\succ\mathbf{0}$.
\end{lem}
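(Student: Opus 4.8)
The plan is to establish that conditions 1--4 are sufficient for SDP \eqref{eq:SDP} to have a unique rank-one minimizer by combining standard strong-duality / complementary-slackness arguments with a rank analysis of the dual certificate. First I would observe that condition 1 says $\mathbf{X}$ is feasible for the primal SDP \eqref{eq:SDP} and condition 2 says $\mathbf{y}$ is feasible for its dual. Condition 3 is exactly the complementary slackness condition $\langle \mathbf{C}+\textrm{Diag}(\mathbf{y}),\mathbf{X}\rangle = 0$ in a strong form (the product of the two matrices vanishes, not merely their trace inner product). From $\big[\mathbf{C}+\textrm{Diag}(\mathbf{y})\big]\mathbf{X}=\mathbf{0}$ together with $\textrm{diag}(\mathbf{X})=\mathbf{1}$ I would deduce $\textrm{Tr}(\mathbf{C}\mathbf{X}) = -\textrm{Tr}(\textrm{Diag}(\mathbf{y})\mathbf{X}) = -\mathbf{1}^T\mathbf{y}$, so the primal objective at $\mathbf{X}$ equals the (negated) dual objective at $\mathbf{y}$; by weak duality this certifies that $\mathbf{X}$ is primal-optimal and $\mathbf{y}$ is dual-optimal. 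Hence the optimal value is attained and $\mathbf{X}$ is \emph{a} minimizer.

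The next step is to upgrade ``a minimizer'' to ``the unique minimizer, of rank one.'' Let $\mathbf{Z}\triangleq\mathbf{C}+\textrm{Diag}(\mathbf{y})\succeq\mathbf{0}$. For any optimal $\mathbf{X}'$ of \eqref{eq:SDP}, the same strong-duality bookkeeping forces $\langle\mathbf{Z},\mathbf{X}'\rangle=0$, and since both are PSD this gives $\mathbf{Z}\mathbf{X}'=\mathbf{0}$, i.e. $\textrm{range}(\mathbf{X}')\subseteq\ker(\mathbf{Z})$. Thus every optimal solution is supported on $\ker(\mathbf{Z})$, and uniqueness plus rank-one-ness will follow once I show $\dim\ker(\mathbf{Z})=1$. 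Here condition 4 enters: writing $\mathbf{Z}$ in the $(M+1)\times(M+1)$ block form induced by $\mathbf{C}$, its leading $M\times M$ block is $\mathbf{H}^\dagger\mathbf{P}\mathbf{H}+\textrm{Diag}(\mathbf{y}_{1:M})$, which condition 4 asserts is \emph{positive definite}. I would use this to argue that any vector in $\ker(\mathbf{Z})$ is determined (up to scale) by its last coordinate: if $\mathbf{Z}\mathbf{u}=\mathbf{0}$ with $\mathbf{u}=(\mathbf{u}_{1:M},u_{M+1})$, the first $M$ block-rows of $\mathbf{Z}\mathbf{u}=\mathbf{0}$ read $(\mathbf{H}^\dagger\mathbf{P}\mathbf{H}+\textrm{Diag}(\mathbf{y}_{1:M}))\mathbf{u}_{1:M} + u_{M+1}\,\mathbf{H}^\dagger\mathbf{P}\mathbf{c}=\mathbf{0}$, so invertibility of the leading block gives $\mathbf{u}_{1:M} = -u_{M+1}(\mathbf{H}^\dagger\mathbf{P}\mathbf{H}+\textrm{Diag}(\mathbf{y}_{1:M}))^{-1}\mathbf{H}^\dagger\mathbf{P}\mathbf{c}$. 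Hence $\ker(\mathbf{Z})$ is at most one-dimensional. It is exactly one-dimensional because $\mathbf{X}\neq\mathbf{0}$ (it has unit diagonal) lies in it. Therefore every optimal $\mathbf{X}'$ satisfies $\textrm{range}(\mathbf{X}')\subseteq\ker(\mathbf{Z})$ with $\dim\ker(\mathbf{Z})=1$, forcing $\textrm{rank}(\mathbf{X}')\le 1$; combined with $\textrm{diag}(\mathbf{X}')=\mathbf{1}$ (which rules out $\mathbf{X}'=\mathbf{0}$), we get $\textrm{rank}(\mathbf{X}')=1$, say $\mathbf{X}'=\mathbf{x}'(\mathbf{x}')^\dagger$ with $\mathbf{x}'\in\ker(\mathbf{Z})$. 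Since $\ker(\mathbf{Z})$ is spanned by a single vector and the normalization $|\mathbf{x}'_m|^2=1$ (in particular $|\mathbf{x}'_{M+1}|=1$) pins down the scale up to a global unimodular phase, the rank-one matrix $\mathbf{x}'(\mathbf{x}')^\dagger$ is uniquely determined. This gives uniqueness of the minimizer.

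I would also check the routine consistency point that the $\mathbf{X}$ furnished by the hypotheses is itself this rank-one matrix: from $\mathbf{Z}\mathbf{X}=\mathbf{0}$ and $\dim\ker(\mathbf{Z})=1$ we already get $\textrm{rank}(\mathbf{X})\le 1$, and $\textrm{diag}(\mathbf{X})=\mathbf{1}$ gives $\textrm{rank}(\mathbf{X})=1$, so no separate verification is needed --- the hypothesized $\mathbf{X}$ is automatically rank one. The main obstacle I anticipate is the rank/uniqueness step: the duality bookkeeping in the first paragraph is essentially boilerplate, but pinning down $\dim\ker(\mathbf{Z})=1$ cleanly requires using condition 4 in precisely the block-elimination way sketched above (a weaker statement such as $\mathbf{H}^\dagger\mathbf{P}\mathbf{H}+\textrm{Diag}(\mathbf{y}_{1:M})\succeq\mathbf{0}$ would not suffice, since then the leading block could be singular and $\ker(\mathbf{Z})$ could be higher-dimensional). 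One should be slightly careful that condition 4 only controls the leading $M\times M$ block of $\mathbf{Z}$, not $\mathbf{Z}$ itself; the Schur-complement-style elimination above is exactly what converts positive definiteness of that block into the one-dimensionality of $\ker(\mathbf{Z})$, and that is the crux of the argument.
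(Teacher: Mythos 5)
Your proof is correct, and it reaches the conclusion by a genuinely more self-contained route than the paper. The paper certifies optimality of $\mathbf{X}$ by noting that conditions 1--3 are the KKT conditions and that Slater's condition holds for both primal and dual; it then bounds $\textrm{rank}(\mathbf{X})\leq 1$ by combining condition 4 (which gives $\textrm{rank}(\mathbf{C}+\textrm{Diag}(\mathbf{y}))\geq M$) with Sylvester's rank inequality applied to $[\mathbf{C}+\textrm{Diag}(\mathbf{y})]\mathbf{X}=\mathbf{0}$, and finally obtains uniqueness by observing that $\textrm{rank}(\mathbf{C}+\textrm{Diag}(\mathbf{y}))=M$ makes the dual solution nondegenerate and invoking Theorem 10 of \cite{Alizadeh1997Complementarity}. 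You instead certify optimality with a bare weak-duality computation (no Slater argument is needed, since only sufficiency of the certificate is claimed), and you obtain rank one and uniqueness in a single stroke: complementarity forces every optimal solution to be supported on $\ker(\mathbf{C}+\textrm{Diag}(\mathbf{y}))$, and the block elimination using $\mathbf{H}^\dagger\mathbf{P}\mathbf{H}+\textrm{Diag}(\mathbf{y}_{1:M})\succ\mathbf{0}$ shows this kernel is exactly one-dimensional, after which the unit-diagonal constraint pins down the unique rank-one optimizer. What your approach buys is an elementary, fully self-contained uniqueness argument that avoids the appeal to dual nondegeneracy theory; what the paper's approach buys is brevity, outsourcing the uniqueness step to a cited theorem. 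Both proofs isolate condition 4 as the ingredient that converts complementary slackness into a one-dimensional kernel (equivalently $\textrm{rank}(\mathbf{C}+\textrm{Diag}(\mathbf{y}))=M$), and your caveat that mere positive semidefiniteness of the leading block would not suffice is exactly the right observation.
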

\begin{proof}
	Notice that conditions 1, 2, and 3 are the Karush-Kuhn-Tucker (KKT) conditions of SDP \eqref{eq:SDP} \cite{boyd2004convex}. Since both primal and dual problems are strictly feasible, i.e., $\mathbf{I}_{M+1}$ is strictly feasible for the primal problem and $(|\lambda_{\textrm{min}}(\mathbf{C})|+\epsilon)\mathbf{1}$ with any $\epsilon>0$ is strictly feasible for the dual problem, it follows that Slater's condition holds true. Then, the KKT conditions 1, 2, and 3 are sufficient and necessary for optimality of primal and dual problems. Condition 4 implies $\mathbf{H}^\dagger\mathbf{P}\mathbf{H}+\textrm{Diag}(\mathbf{y}_{1:M})$ is nonsingular and thus $\textrm{rank}(\mathbf{C}+\textrm{Diag}(\mathbf{y}))\geq M$. This, together with  Sylvester’s rank inequality \cite{Horn1985matrix} and condition 3, immediately shows $\textrm{rank}(\mathbf{X})\leq 1$. Moreover, since $\mathbf{X}$ is non-zero (by condition 1), we have $\textrm{rank}(\mathbf{X})=1$ and $\textrm{rank}(\mathbf{C}+\textrm{Diag}(\mathbf{y}))=M$, which further implies $\mathbf{C}+\textrm{Diag}(\mathbf{y})$ is \textit{dual nondegenerate} \cite{Alizadeh1997Complementarity}. Therefore, it follows from Theorem 10 in \cite{Alizadeh1997Complementarity} that $\mathbf{X}$ is unique. \qed
\end{proof}

Lemma \ref{lem:sufcond} gives sufficient conditions on the existence and uniqueness of rank-one solution of SDP \eqref{eq:SDP}. However, these conditions are not always satisfied, because they indeed depend on the structure of $\mathbf{H}$, $\mathbf{P}$, $\mathbf{c}$, and the true azimuth biases {{$\bm{\Delta\bar{\phi}}=[\Delta\bar{\phi}_1,\Delta\bar{\phi}_2,\ldots,\Delta\bar{\phi}_M]^T$}} (as shown in Theorem \ref{thm:multi} later). In the following, we will present a mild condition such that SDP \eqref{eq:SDP} admits a unique minimizer of rank one. To begin with, we divide $\mathbf{H}$ in \eqref{EQ_Reformulation_Complex} into two parts as follows:
\begin{equation}\label{eq:divide_H}
\mathbf{H}=\mathbf{\tilde{H}}+\mathbf{\Delta H}.
\end{equation}
In \eqref{eq:divide_H}, $\mathbf{\tilde{H}}$ denotes the true part of $\mathbf{H}$ and is defined as
\begin{equation}\label{eq:tilde_H}
\left[\mathbf{\tilde{H}}\right]_{km}=
\left\lbrace
\begin{aligned}
&(\tilde{\rho}_{k+1}+\Delta\bar{\rho}_{s_{k+1}}) e^{j\tilde{\phi}_{k+1}},&&\ \textrm{if $m=s_{k+1}$},\\
&-(\tilde{\rho}_{k}+\Delta\bar{\rho}_{s_{k}}) e^{j\tilde{\phi}_{k}},&&\ \textrm{if $m=s_{k}$},\\
&0,&&\  \textrm{otherwise},
\end{aligned}
\right.
\end{equation}
where $\tilde{\rho}_{k}=\rho_{k}-w_{k}^{\rho}$ and $\tilde{\phi}_{k}=\phi_{k}-w_{k}^{\phi}$ ($w_{k}^{\rho}$ and $w_{k}^{\phi}$ are measurement noise defined in \eqref{eq:MeasureModel});
$\mathbf{\Delta H}=\mathbf{H}-\mathbf{\tilde{H}}$ represents \textcolor{black}{the error part of $\mathbf{H}$ caused by the motion process noise, the measurement noise, and possibly the inaccuracy of the fixed $\bm{\Delta\rho}$.
	Notice that if all kinds of noises are absent, and $\bm{\Delta\rho}$ in subproblem \eqref{eq:multi_bcd_pro2} (equivalent to \eqref{EQ_Reformulation_Complex}) is the true range biases {{$\bm{\Delta\bar{\rho}}=[\Delta\bar{\rho}_1,\Delta\bar{\rho}_2,\ldots,\Delta\bar{\rho}_M]^T$}}, then $\bm{\Delta}\mathbf{H}=\mathbf{0}$ and $\mathbf{H}=\mathbf{\tilde{H}}$.} In this case, SDP \eqref{eq:SDP} has the following exact recovery property.

\begin{coro}\label{coro:multi}
	Suppose $\bm{\Delta}\mathbf{H}=\mathbf{0}.$ Then, SDP \eqref{eq:SDP} always has a unique minimizer of rank one, i.e., $\mathbf{X}^*=\mathbf{x}^*(\mathbf{x}^*)^\dagger$. Furthermore, $\mathbf{x}^*$ exactly recovers the true azimuth biases
	{$$
		\Delta\bar{\phi}_m=\angle\frac{\mathbf{x}_m^*}{\mathbf{x}^*_{M+1}},\  m=1,2,\ldots,M.
		$$}
\end{coro}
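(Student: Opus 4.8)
The plan is to prove the statement by verifying the sufficient conditions of Lemma~\ref{lem:sufcond} for an explicitly constructed primal--dual pair, and then reading off the azimuth biases from complementary slackness. First I would record the consequence of the hypothesis $\bm{\Delta}\mathbf{H}=\mathbf{0}$: this places us in the fully noiseless setting with the true range biases fixed (in particular the reference target moves at exactly the constant velocity $\mathbf{\bar{v}}$), so, just as in the proof of Corollary~\ref{coro:single}, the noiseless measurement and motion models give $g_{k+1}(\bm{\bar{\theta}}_{s_{k+1}})-g_k(\bm{\bar{\theta}}_{s_k})-T_k\mathbf{\bar{v}}=\mathbf{0}$ for every $k$. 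Interpreting each of these two-dimensional identities as a complex number, and using that the $k$-th entry of $\mathbf{H}\mathbf{x}-\mathbf{t}v+\mathbf{c}$ in \eqref{EQ_Reformulation_Complex} is precisely the complex form of $g_{k+1}(\bm{\theta}_{s_{k+1}})-g_k(\bm{\theta}_{s_k})-T_k\mathbf{v}$ when $\mathbf{x}_m=e^{j\Delta\phi_m}$ and $v=v_x+jv_y$, I obtain, with $\bar{\mathbf{x}}:=[e^{j\Delta\bar{\phi}_1},\ldots,e^{j\Delta\bar{\phi}_M},1]^T$ and $\bar v:=\bar v_x+j\bar v_y$, that $\mathbf{H}\bar{\mathbf{x}}_{1:M}+\mathbf{c}=\mathbf{t}\bar v$. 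Since $\mathbf{P}\mathbf{t}=\mathbf{0}$, this gives the key identity $\mathbf{P}(\mathbf{H}\bar{\mathbf{x}}_{1:M}+\mathbf{c})=\mathbf{0}$.

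Next I would propose the certificate $\mathbf{X}:=\bar{\mathbf{x}}\bar{\mathbf{x}}^\dagger$ and $\mathbf{y}:=\|\mathbf{P}\mathbf{c}\|^2\,\mathbf{e}_{M+1}\in\mathbb{R}^{M+1}$, where $\mathbf{e}_{M+1}$ is the last canonical basis vector, and verify conditions 1--3 of Lemma~\ref{lem:sufcond}. Condition~1 is immediate since $|\bar{\mathbf{x}}_m|=1$. The point is that, because $\|\mathbf{P}\mathbf{c}\|^2=\mathbf{c}^\dagger\mathbf{P}\mathbf{c}$,
\[
\mathbf{C}+\textrm{Diag}(\mathbf{y})=\begin{bmatrix}\mathbf{H}^\dagger\mathbf{P}\mathbf{H}&\mathbf{H}^\dagger\mathbf{P}\mathbf{c}\\ \mathbf{c}^\dagger\mathbf{P}\mathbf{H}&\mathbf{c}^\dagger\mathbf{P}\mathbf{c}\end{bmatrix}=\begin{bmatrix}\mathbf{H}^\dagger\\ \mathbf{c}^\dagger\end{bmatrix}\mathbf{P}\begin{bmatrix}\mathbf{H}&\mathbf{c}\end{bmatrix}\succeq\mathbf{0}
\]
(as $\mathbf{P}$ is an orthogonal projection), giving condition~2; and since $[\mathbf{H}\ \ \mathbf{c}]\bar{\mathbf{x}}=\mathbf{H}\bar{\mathbf{x}}_{1:M}+\mathbf{c}$, the key identity yields $\mathbf{P}[\mathbf{H}\ \ \mathbf{c}]\bar{\mathbf{x}}=\mathbf{0}$ and hence $(\mathbf{C}+\textrm{Diag}(\mathbf{y}))\mathbf{X}=(\mathbf{C}+\textrm{Diag}(\mathbf{y}))\bar{\mathbf{x}}\bar{\mathbf{x}}^\dagger=\mathbf{0}$, which is condition~3.

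Then I would dispatch condition~4, which, since $\mathbf{y}_{1:M}=\mathbf{0}$, is exactly $\mathbf{H}^\dagger\mathbf{P}\mathbf{H}\succ\mathbf{0}$; equivalently, $[\mathbf{H}\ \ \mathbf{t}]$ has full column rank $M+1$, i.e.\ no nonzero linear combination of the sensor columns of $\mathbf{H}$ is proportional to $\mathbf{t}$. This is a generic non-degeneracy condition on the target trajectory and the sensor schedule, entirely analogous to the rank condition highlighted in the footnote of Section~\ref{sec:single}, and I would either inherit it as a standing assumption or state it explicitly. Granting it, Lemma~\ref{lem:sufcond} shows SDP~\eqref{eq:SDP} has a unique rank-one minimizer; since $(\mathbf{X},\mathbf{y})$ satisfies the KKT conditions 1--3 and Slater's condition holds, $\mathbf{X}$ is optimal, so by uniqueness $\mathbf{X}^*=\bar{\mathbf{x}}\bar{\mathbf{x}}^\dagger$ and $\mathbf{x}^*=e^{j\beta}\bar{\mathbf{x}}$ for some real $\beta$. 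Because $\bar{\mathbf{x}}_{M+1}=1\neq 0$, the ratio $\mathbf{x}^*_m/\mathbf{x}^*_{M+1}=\bar{\mathbf{x}}_m/\bar{\mathbf{x}}_{M+1}=e^{j\Delta\bar{\phi}_m}$ does not depend on $\beta$, whence $\angle(\mathbf{x}^*_m/\mathbf{x}^*_{M+1})=\Delta\bar{\phi}_m$, the claimed exact recovery.

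I expect the main obstacle to be condition~4: all remaining steps are essentially mechanical once the noiseless identity of the first paragraph is in hand, but $\mathbf{H}^\dagger\mathbf{P}\mathbf{H}\succ\mathbf{0}$ genuinely requires the measurement geometry to be non-degenerate (it can fail, e.g., with too few measurements or a collinear trajectory/sensor configuration), so formulating and justifying the precise assumption under which ``always'' is to be read is the delicate point. A secondary care point is the bookkeeping in the first paragraph, namely the correct complex encoding of the residual $g_{k+1}(\bm{\theta}_{s_{k+1}})-g_k(\bm{\theta}_{s_k})-T_k\mathbf{v}$ through $\mathbf{H}$, $\mathbf{t}$, and $\mathbf{c}$.
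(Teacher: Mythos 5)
Your proposal is correct and follows essentially the same route as the paper's proof: both construct the primal--dual pair $\mathbf{X}=\bar{\mathbf{x}}\bar{\mathbf{x}}^\dagger$ with $\mathbf{y}_{1:M}=\mathbf{0}$ (your choice $\mathbf{y}_{M+1}=\|\mathbf{P}\mathbf{c}\|^2$ coincides, via the noiseless identity $\mathbf{P}(\mathbf{H}\bar{\mathbf{x}}_{1:M}+\mathbf{c})=\mathbf{0}$, with the paper's $\mathbf{c}^\dagger\mathbf{P}^\dagger\mathbf{H}(\mathbf{H}^\dagger\mathbf{P}\mathbf{H})^{-1}\mathbf{H}^\dagger\mathbf{P}\mathbf{c}$), verify the four conditions of Lemma \ref{lem:sufcond}, and rely on the same full-column-rank assumption on $\mathbf{P}\mathbf{H}$ that the paper invokes as ``each sensor has sufficient measurements.'' Your minor variations---taking the true-phase vector directly rather than the least-squares expression, and checking positive semidefiniteness via the factorization $\left[\mathbf{H}\ \ \mathbf{c}\right]^\dagger\mathbf{P}\left[\mathbf{H}\ \ \mathbf{c}\right]$ instead of a Schur complement---do not change the argument.
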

\begin{proof}
	The proof consists of two parts. We first construct a pair of primal and dual solutions $\mathbf{X}^*$ and $\mathbf{y}^*$ and then prove their optimality and exact recovery property.
	
	Without loss of generality, we assume that each sensor has \textcolor{black}{sufficient} number of measurements such that $\mathbf{P}\mathbf{H}$ is of full column rank and $\mathbf{H}^\dagger\mathbf{P}^\dagger\mathbf{P}\mathbf{H}=\mathbf{H}^\dagger\mathbf{P}\mathbf{H}$ is invertible.
	Let
	\begin{equation}\label{eq:construct_x}
	\mathbf{X}^*=\mathbf{x}^*(\mathbf{x}^*)^\dagger,\ \textrm{where }\mathbf{x}^*=\begin{bmatrix}-(\mathbf{H}^\dagger\mathbf{P}\mathbf{H})^{-1}\mathbf{H}^\dagger\mathbf{P}\mathbf{c}\\ 1\end{bmatrix}\in\mathbb{C}^{M+1},
	\end{equation}
	and
	\begin{equation}\label{eq:construct_y}
	\mathbf{y}^*=\begin{bmatrix}\mathbf{0}\\ \mathbf{c}^\dagger\mathbf{P}^\dagger\mathbf{H}(\mathbf{H}^\dagger\mathbf{P}\mathbf{H})^{-1}\mathbf{H}^\dagger\mathbf{P}\mathbf{c}\end{bmatrix}\in\mathbb{R}^{M+1}.
	\end{equation}
	Notice that problem \eqref{EQ_OptDropv} is a reformulation of problem \eqref{eq:multi_optori} with fixed $\bm{\Delta\rho}$. In the absence of measurement noise, the optimal objective value of problem \eqref{eq:multi_optori} is zero and the optimal solutions are true sensor biases $\bm{\Delta\bar{\phi}}=[\Delta\bar{\phi}_1,\Delta\bar{\phi}_2,\ldots,\Delta\bar{\phi}_M]^T$ and true target velocity $\mathbf{\bar{v}}$. Therefore, if $\mathbf{H}=\mathbf{\tilde{H}}$, then the following equation holds true:
	\begin{equation}\label{eq:NoiseFree}
	\mathbf{P}\mathbf{\tilde{H}}[e^{j\bar{\phi}_1},e^{j\bar{\phi}_2}, \cdots, e^{j\bar{\phi}_M}]^T+\mathbf{P}\mathbf{c}=\mathbf{0}.
	\end{equation}
	Since $\mathbf{H}=\mathbf{\tilde{H}}$ and $\mathbf{PH}$ is of full column rank, it follows
	$$
	[e^{j\bar{\phi}_1},e^{j\bar{\phi}_2},\cdots , e^{j\bar{\phi}_M}]^T=-\left(\mathbf{H}^\dagger\mathbf{P}\mathbf{H}\right)^{-1}\mathbf{H}^\dagger\mathbf{P}\mathbf{c}.
	$$
	Consequently, $\mathbf{X}^*$ in \eqref{eq:construct_x} satisfies condition 1 in Lemma \ref{lem:sufcond}.
	
	Recall the definitions of $\mathbf{C}$ and $\mathbf{y}^*$. Then,
	$$
	\mathbf{C}+\textrm{Diag}(\mathbf{y}^*)=\begin{bmatrix}\mathbf{H}^\dagger\mathbf{PH}&\mathbf{H}^\dagger\mathbf{P}\mathbf{c} \\ \mathbf{c}^\dagger\mathbf{PH}&\mathbf{c}^\dagger\mathbf{P}^\dagger\mathbf{H}(\mathbf{H}^\dagger\mathbf{P}\mathbf{H})^{-1}\mathbf{H}^\dagger\mathbf{P}\mathbf{c}\end{bmatrix}.
	$$
	Since $\mathbf{H}^\dagger\mathbf{P}\mathbf{H}\succ \mathbf{0}$ and its Schur complement is zero, we know that $\mathbf{C}+\textrm{Diag}(\mathbf{y}^*)\succeq \mathbf{0}$, which shows that condition 2 in Lemma \ref{lem:sufcond} is true. Moreover, it is simple to check $\left[ \mathbf{C}+\textrm{Diag}(\mathbf{y}^*)\right]\mathbf{x}^*=\mathbf{0}$, which implies condition 3 in Lemma \ref{lem:sufcond}.  Since $\mathbf{H}^\dagger\mathbf{P}\mathbf{H}+\textrm{Diag}(\mathbf{y}^*_{1:M})=\mathbf{H}^\dagger\mathbf{P}\mathbf{H}\succ \mathbf{0}$, condition 4 in Lemma \ref{lem:sufcond} also holds. Therefore, the constructed solutions $\mathbf{X}^*$ in \eqref{eq:construct_x} and $\mathbf{y}^*$ in \eqref{eq:construct_y} satisfy all conditions in Lemma \ref{lem:sufcond}.  Hence $\mathbf{X}^*$ is the unique solution of SDP \eqref{eq:SDP} and  $\mathbf{x}^*$ exactly recovers the true azimuth biases. \qed
\end{proof}

Corollary \ref{coro:multi} shows that, if $\mathbf{\Delta H}=\mathbf{0}$ in \eqref{eq:divide_H}, the solution of SDP \eqref{eq:SDP} is rank one and exactly recovers the true azimuth biases. \textcolor{black}{The following Theorem \ref{thm:multi} generalizes this result to sufficient small $\mathbf{\Delta H} \neq \mathbf{0}$. In other words, if $\mathbf{\Delta H}$ is sufficiently small, SDP \eqref{eq:SDP} always admits a unique solution of rank one. The proof of Theorem \ref{thm:multi} is relegated to \textcolor{black}{Appendix B}. A sufficient condition on how $\mathbf{\Delta H}$ depends on the problem data and how small it should be to guarantee the unique rank-one solution of SDP \eqref{eq:SDP} is given in \textcolor{black}{\eqref{conditionpositive} \textcolor{black}{in Claim 2 in Appendix B}.}}

\begin{thm}\label{thm:multi}
	If $\mathbf{\Delta H}$ is sufficiently small, 
	then SDP \eqref{eq:SDP} admits a unique solution of rank one.
\end{thm}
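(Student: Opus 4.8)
The plan is to prove Theorem~\ref{thm:multi} by a perturbation argument around the noiseless construction of Corollary~\ref{coro:multi}: when $\mathbf{\Delta H}$ is small I will produce a primal--dual pair $(\mathbf{X},\mathbf{y})$ that still satisfies all four conditions of Lemma~\ref{lem:sufcond}, and Lemma~\ref{lem:sufcond} then immediately delivers the unique rank-one minimizer of SDP~\eqref{eq:SDP}.

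First I would fix the candidate primal point. At $\mathbf{\Delta H}=\mathbf{0}$ the vector $\mathbf{x}_0=[e^{j\bar{\phi}_1},\ldots,e^{j\bar{\phi}_M},1]^T$ is, by \eqref{eq:NoiseFree} together with the full-column-rank assumption on $\mathbf{P}\tilde{\mathbf{H}}$, the global minimizer (unique up to a global phase) of the homogenized QCQP~\eqref{EQ_OptHomeQCQP}. For general small $\mathbf{\Delta H}$, let $\mathbf{x}^\star$ be any global minimizer of~\eqref{EQ_OptHomeQCQP}, normalized so that $\mathbf{x}^\star_{M+1}=1$; equivalently, $\mathbf{x}^\star_{1:M}$ minimizes $\|\mathbf{P}\mathbf{H}\mathbf{x}+\mathbf{P}\mathbf{c}\|^2$ over the torus $\{|\mathbf{x}_m|=1\}$, i.e.\ it solves~\eqref{EQ_OptDropv}. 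The key lemma I would establish is a \emph{stability} statement: for every $\epsilon>0$ there is $\delta>0$ such that $\|\mathbf{\Delta H}\|\le\delta$ forces $\|\mathbf{x}^\star-\mathbf{x}_0\|<\epsilon$. This follows from compactness of the torus: since $\mathbf{P}\tilde{\mathbf{H}}$ has full column rank, the unperturbed objective $\|\mathbf{P}\tilde{\mathbf{H}}\mathbf{x}+\mathbf{P}\mathbf{c}\|^2$ has $\mathbf{x}_0$ as its only zero and is therefore bounded below by some $g(\epsilon)>0$ on the compact set of feasible points at distance $\ge\epsilon$ from $\mathbf{x}_0$; as the objective is polynomial in $(\mathbf{x},\mathbf{\Delta H})$ on a bounded set, a perturbation of size $O(\|\mathbf{\Delta H}\|)$ cannot relocate the global minimizer out of that $\epsilon$-ball once $\|\mathbf{\Delta H}\|$ is small relative to $g(\epsilon)$. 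Taking $\mathbf{X}:=\mathbf{x}^\star(\mathbf{x}^\star)^\dagger$ then gives a rank-one positive semidefinite matrix with unit diagonal, so condition~1 of Lemma~\ref{lem:sufcond} holds.

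Next I would build the dual variable. Since $\mathbf{x}^\star$ is a global minimizer of~\eqref{EQ_OptHomeQCQP}, the (complex) KKT conditions of the torus-constrained QCQP furnish a unique real multiplier vector $\mathbf{y}\in\mathbb{R}^{M+1}$ with $[\mathbf{C}+\textrm{Diag}(\mathbf{y})]\mathbf{x}^\star=\mathbf{0}$, namely $y_m=-[\mathbf{C}\mathbf{x}^\star]_m\overline{\mathbf{x}^\star_m}$ (well defined because $|\mathbf{x}^\star_m|=1$, and invariant under a global phase of $\mathbf{x}^\star$); this is precisely condition~3, and it reduces to~\eqref{eq:construct_y} at $\mathbf{\Delta H}=\mathbf{0}$. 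Because $\mathbf{C}$ depends continuously on $\mathbf{H}$ and $\mathbf{x}^\star\to\mathbf{x}_0$ as $\mathbf{\Delta H}\to\mathbf{0}$, we get $\mathbf{C}+\textrm{Diag}(\mathbf{y})\to\mathbf{C}_0+\textrm{Diag}(\mathbf{y}_0)$ and $\mathbf{H}^\dagger\mathbf{P}\mathbf{H}+\textrm{Diag}(\mathbf{y}_{1:M})\to\tilde{\mathbf{H}}^\dagger\mathbf{P}\tilde{\mathbf{H}}$. In the proof of Corollary~\ref{coro:multi} these limit matrices satisfy $\tilde{\mathbf{H}}^\dagger\mathbf{P}\tilde{\mathbf{H}}\succ\mathbf{0}$ and $\mathbf{C}_0+\textrm{Diag}(\mathbf{y}_0)\succeq\mathbf{0}$ with rank exactly $M$, i.e.\ its smallest eigenvalue is $0$ and simple and its second-smallest eigenvalue $\lambda_2$ is strictly positive. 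Condition~4 is a strict (open) inequality, so it persists for small $\mathbf{\Delta H}$. For condition~2, Weyl's inequality keeps the second-smallest eigenvalue of $\mathbf{C}+\textrm{Diag}(\mathbf{y})$ bounded below by $\lambda_2-o(1)>0$, while its smallest eigenvalue equals $0$ because $\mathbf{x}^\star$ lies in its kernel by condition~3; hence $\mathbf{C}+\textrm{Diag}(\mathbf{y})\succeq\mathbf{0}$. With conditions~1--4 verified, Lemma~\ref{lem:sufcond} shows that SDP~\eqref{eq:SDP} has a unique rank-one minimizer. Tracking the $\epsilon$--$\delta$ bookkeeping through the eigenvalue gaps $\lambda_{\min}(\tilde{\mathbf{H}}^\dagger\mathbf{P}\tilde{\mathbf{H}})$ and $\lambda_2$ and the relevant Lipschitz constants makes ``sufficiently small'' quantitative, which is the explicit condition recorded in Claim~2 of Appendix~B.

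I expect the main obstacle to be the stability statement for the QCQP minimizer --- namely, that \emph{every} global minimizer of the perturbed torus-constrained problem lies near $\mathbf{x}_0$, not merely some critical point --- since this is exactly what allows the dual-certificate construction of Corollary~\ref{coro:multi} to be perturbed verbatim; the full-column-rank hypothesis on $\mathbf{P}\tilde{\mathbf{H}}$ and the uniqueness of the noiseless minimizer are what drive this step. A minor technical nuisance is keeping the construction phase-consistent, which is handled by working with the phase-invariant objects $\mathbf{X}=\mathbf{x}^\star(\mathbf{x}^\star)^\dagger$ and $y_m=-[\mathbf{C}\mathbf{x}^\star]_m\overline{\mathbf{x}^\star_m}$, equivalently by normalizing $\mathbf{x}^\star_{M+1}=1$.
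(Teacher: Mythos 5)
Your proposal is correct in outline, but it takes a genuinely different route from the paper. The paper never argues about the global minimizer of the perturbed QCQP at all: it applies the implicit function theorem to the stationarity equation \eqref{eq:lem1cond3a}, viewed as an equation in the azimuth rotation $\bm{\psi}$ (a diagonal unitary $\mathbf{W}$) and the dual block $\mathbf{y}_{1:M}$, with Jacobian invertibility at the noiseless point reduced to $\tilde{\mathbf{H}}^\dagger\mathbf{P}\tilde{\mathbf{H}}\succ\mathbf{0}$; condition 4 of Lemma \ref{lem:sufcond} is then enforced through the explicit first-order condition \eqref{conditionpositive} (Claim 2), and conditions 2--3 are completed by choosing $\mathbf{y}_{M+1}$ via the Schur complement \eqref{constructyM+1}. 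This buys an explicit, data-dependent smallness threshold on $\mathbf{\Delta H}$ and a certificate that depends differentiably on $\mathbf{H}$, and it certifies optimality directly without first knowing where the global minimizer of the nonconvex problem lies. Your argument instead takes the perturbed QCQP's global minimizer as the primal candidate, proves a stability lemma by compactness (every global minimizer stays near $\mathbf{x}_0$ because the noiseless objective has a unique zero and is uniformly bounded away from zero off an $\epsilon$-ball), builds the dual from the KKT multipliers $y_m=-[\mathbf{C}\mathbf{x}^\star]_m\overline{\mathbf{x}^\star_m}$ (valid since LICQ holds on the torus), and recovers conditions 2 and 4 by Weyl's inequality together with the observation that $0$ must be the smallest eigenvalue since $\mathbf{x}^\star$ lies in the kernel while all other eigenvalues stay near $\lambda_2>0$. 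This is a more elementary and standard dual-certificate perturbation argument, and it is sound; what you give up relative to the paper is constructiveness --- your ``sufficiently small'' is purely qualitative, whereas the paper's Claim 2 records the quantitative condition \eqref{conditionpositive} --- and you need the extra stability-of-global-minimizers step, which the implicit-function-theorem construction bypasses entirely.
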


\subsection{A BCD Algorithm}
\label{subsec:multi_alg}
\textcolor{black}{Now, we present a BCD algorithm  (Algorithm \ref{Alg} below) to solve the asynchronous multi-sensor registration problem \eqref{eq:multi_optori}. {The proposed BCD algorithm solves a quadratic program \eqref{eq:single_fixphi} (or \eqref{eq:multi_opt_rho}) and a SDP \eqref{eq:SDP} alternately. These steps represent the dominant per-iteration computational cost of the proposed BCD algorithm.} More specifically, the problem of estimating the range biases is equivalent to solving some linear equations, either $M$ of $3\times 3$ linear equations with positive definite coefficient matrices at iteration $t=0$ with a complexity of $\mathcal{O}(M)$ or a $M\times M$ linear equation with a positive definite coefficient matrix at $t\geq1$ with a complexity of $\mathcal{O}(M^3 )$; the worst-case complexity of solving a $M+1$ dimensional SDP in the form of \eqref{eq:SDP} is $\mathcal{O}(M^{4.5} )$ \cite{luo2010semidefinite}.}
\renewcommand{\algorithmicrequire}{\textbf{Input:}} 
\renewcommand{\algorithmicensure}{\textbf{Output:}} 
\begin{algorithm}[ht]
	\caption{A BCD Algorithm for Problem \eqref{eq:multi_optori}}
	\label{Alg}
	\begin{algorithmic}[1]
		\Require
		Measurements $\{\mathbf{z}_k\}_{k=1}^{K}$ collected by all sensors.
		\For{$t=0,1,2,\ldots,$}
		\If{$t=0$}
		\State Obtain $\Delta\rho_m^{t+1}$ by \eqref{EQ_SoluRho}, $m=1,2,\ldots,M$;
		\Else
		\State Obtain $\bm{\Delta\rho}^{t+1}$ by \eqref{eq:multi_solu_rho};
		\EndIf
		\State Solve SDP \eqref{eq:SDP} to obtain $\mathbf{X}^*$;
		\State \textcolor{black}{Extract $\{\Delta\phi_{m}^{*}\}_{m=1}^M$ and $\mathbf{v}^{*}$ by \eqref{EQ_ExtracPhi} and \eqref{EQ_ExtracV};}
		\EndFor
		\Ensure
		Estimated biases $\{\bm{\theta}_{m}^*\}_{m=1}^M$ and velocity $\mathbf{v}^*$.
	\end{algorithmic}
\end{algorithm}

The performance of the BCD algorithm generally depends on the choice of the initial point (especially when it is applied to solve the non-convex optimization problems). In our proposed BCD algorithm, we initialize it with \eqref{EQ_SoluRho} by using the separation property of the range bias estimation in the single-sensor case. Furthermore, global convergence of our proposed algorithm can be established by using similar arguments in \cite{grippo2000convergence}.
Based on Corollary \ref{coro:single} and Corollary \ref{coro:multi}, we further have the following result.
{{
		\begin{coro}\label{coro:exact}
			\textcolor{black}{If there is no noise, i.e., $\mathbf{n}_k=\mathbf{\dot{n}}_k=\mathbf{0}$ for all $k$ and $\sigma_\rho^2=\sigma_\phi^2=0$, Algorithm \ref{Alg} exactly recovers the true biases of all sensors and the true velocity of the target.}

		\end{coro}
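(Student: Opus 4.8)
The plan is to obtain the result by composing Corollary \ref{coro:single} and Corollary \ref{coro:multi} and then verifying that the BCD iterates, once they hit the true parameters, stay there. First I would look at the initialization ($t=0$) in Algorithm \ref{Alg}. Line 3 computes $\Delta\rho_m^{*}$ for each sensor $m$ via the single-sensor formula \eqref{EQ_SoluRho} applied to that sensor's own measurements. By Corollary \ref{coro:single}, in the absence of motion-process noise and measurement noise this yields $\Delta\rho_m^{*}=\Delta\bar{\rho}_m$ for every $m$, and crucially this holds regardless of the azimuth-bias value used in forming \eqref{eq:single_Handy}. Hence after the first range-bias step the range block equals the true $\bm{\Delta\bar{\rho}}$.

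Next I would analyze the first SDP step. Since the current range estimates equal $\bm{\Delta\bar{\rho}}$ and all noises vanish, the matrix $\mathbf{H}$ assembled in \eqref{EQ_Reformulation_Complex}--\eqref{eq:DefineH} coincides with its true part $\mathbf{\tilde{H}}$ in \eqref{eq:tilde_H}, i.e.\ $\mathbf{\Delta H}=\mathbf{0}$. Corollary \ref{coro:multi} then guarantees that SDP \eqref{eq:SDP} has a unique rank-one minimizer $\mathbf{X}^*=\mathbf{x}^*(\mathbf{x}^*)^\dagger$ and that \eqref{EQ_ExtracPhi} recovers the true azimuth biases $\Delta\bar{\phi}_m=\angle(\mathbf{x}_m^*/\mathbf{x}^*_{M+1})$. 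For the velocity, I would note that, with no noise, the identity underlying Proposition \ref{prop:ls} is exact (not merely in expectation), so the true pair $(\bm{\Delta\bar{\phi}},\mathbf{\bar{v}})$ drives the objective of \eqref{EQ_Reformulation_Complex} to zero; since $\mathbf{t}\neq\mathbf{0}$, the closed-form minimizer \eqref{EQ_Close_v}/\eqref{EQ_ExtracV} evaluated at $\mathbf{x}=\mathbf{x}^*$ must return $v^*$ with $\mathbf{t}\,v^*=\mathbf{H}(\mathbf{x}^*_{1:M}/\mathbf{x}^*_{M+1})+\mathbf{c}$, which, because $\mathbf{H}=\mathbf{\tilde{H}}$, equals $\mathbf{t}\,\bar v$; thus $\mathbf{v}^*=\mathbf{\bar{v}}$.

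Finally I would check that the iterates are stationary for $t\geq 1$. With $\bm{\Delta\phi}=\bm{\Delta\bar{\phi}}$ and $\mathbf{v}=\mathbf{\bar{v}}$ fixed, the range subproblem \eqref{eq:multi_opt_rho} has objective value zero at $\bm{\Delta\bar{\rho}}$ (again by the exact noiseless identity), and under the generic full-column-rank assumption on $\mathbf{G}$ this minimizer is unique, so \eqref{eq:multi_solu_rho} returns $\bm{\Delta\bar{\rho}}$ again; the subsequent SDP step then repeats the argument above and again outputs $\bm{\Delta\bar{\phi}}$ and $\mathbf{\bar{v}}$. By induction the algorithm is fixed at the true biases and true velocity from the first iteration onward, which is exactly the claim. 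The main obstacle I anticipate is not any single hard estimate but keeping the uniqueness bookkeeping airtight: one must invoke the generic full-column-rank hypotheses on $\mathbf{H}$ (for \eqref{EQ_SoluRho}), on $\mathbf{G}$ (for \eqref{eq:multi_solu_rho}), and on $\mathbf{PH}$ (already used in Corollary \ref{coro:multi}) so that each closed-form or SDP solution is \emph{the} unique minimizer, and to confirm that the noiseless version of Proposition \ref{prop:ls} gives a genuinely zero residual in every subproblem so that the true parameters are indeed the minimizers. Once these points are in place, Corollary \ref{coro:exact} follows immediately by chaining Corollaries \ref{coro:single} and \ref{coro:multi}.
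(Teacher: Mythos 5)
Your argument is exactly the route the paper intends: the paper derives Corollary~\ref{coro:exact} by chaining Corollary~\ref{coro:single} (the initialization in \eqref{EQ_SoluRho} recovers $\bm{\Delta\bar{\rho}}$ in the noiseless case) with Corollary~\ref{coro:multi} (with $\mathbf{\Delta H}=\mathbf{0}$ the SDP \eqref{eq:SDP} has a unique rank-one solution recovering $\bm{\Delta\bar{\phi}}$), just as you do. Your additional bookkeeping on velocity recovery via \eqref{EQ_Close_v}--\eqref{EQ_ExtracV} and on the stationarity of subsequent iterates is a correct elaboration of the same approach, not a different one.
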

}}
In the noiseless case, solving the sensor registration problem is equivalent to solving a set of nonlinear equations \eqref{eq:MeasureModel} and \eqref{eq:motion}. \textcolor{black}{Corollary \ref{coro:exact} shows that our proposed Algorithm \ref{Alg} is able to exactly solve these equations. This exact recovery property distinguishes our proposed algorithm from the existing approaches \cite{fischer1980registration,helmick1993removal,zhou1997an,lin2005multisensor,zhou2004asy,Ristic2003Sensor,Fortunati2011Least,Fortunati2013On,Dana1990Registration,leung1994least}, which use the first-order approximation to handle the nonlinearity in the sensor registration problem and cannot recover the sensor biases even if all kinds of noises are absent.}

{\color{black}{
		\section{Numerical Simulation }
		\label{sec:numexample}
		In this section, we evaluate the estimation performance of the proposed approach and compare it with five other approaches in the literature. The first approach is the recently proposed two-stage approach in \cite{pu2017two}, which is a special case of the proposed BCD approach with only one iteration. The second approach is a linearized LS approach for problem \eqref{eq:multi_optori} which can be regarded as a direct extension of approaches in \cite{Dana1990Registration,Fortunati2011Least,leung1994least} to solve our proposed nonlinear LS formulation \eqref{eq:multi_optori}. The third one is the approach proposed in \cite{lin2005multisensor,Lin2006Multisensor}, which smartly constructs pseudo-measurements for sensor biases by eliminating the target states. For the constant bias model considered in this paper, the solution of this approach becomes a weighted LS solution as discussed in \cite{lin2005multisensor,Lin2006Multisensor}. For simplicity, we refer to this approach as the pseudo-measurement (PM) approach. The fourth approach is the expectation-maximization (EM) approach developed in \cite{Fortunati2013On} for the relative sensor registration problem, which can be directly applied to solve the asynchronous multi-sensor model considered in this paper. The last approach is the augmented state Kalman filter (ASKF) approach proposed in \cite{zhou2004asy}, which treats the sensor biases as augmented states and simultaneously estimates the target state and sensor biases by a Kalman filter. In our numerical simulations, we implement these approaches on a laptop (Intel Core i5) with Matlab2016 software.
		
		We compare our proposed approach with the aforementioned five approaches in a scenario with $3$ sensors and a moving target. Suppose that the target follows a nearly-constant-velocity model and its initial position and velocity obey $\bm{\xi}_1\sim\mathcal{N}(\bm{\bar{\xi}},10q\mathbf{I}_2)$ and $\bm{\dot{\xi}}_1\sim\mathcal{N}(\mathbf{\bar{v}},q\mathbf{I}_2)$, where $\bm{\bar{\xi}}=[-10,0]^T$ km and $\mathbf{\bar{v}}=[200,0]^T$ m/s, $q$ is the density of the target motion process noise.
		In the scenario, each sensor observes the target every 5 seconds with different starting time. The observation lasts 98 seconds in total and each sensor has 20 measurements. The parameters of the simulation setup are listed in Table \ref{tab:sensors} and the simulation scenario is illustrated in Fig. \ref{fig:simul_single_sc}. We use the root mean square error (RMSE) as the estimation performance metric and use the hybrid Cramer-Rao lower bound (HCRLB) \cite{Fortunati2011Least} as the benchmark. The RMSE and HCRLB are averaged over 500 independent Monte Carlo runs.  In our numerical simulations, {we solve SDP \eqref{eq:SDP} by CVX \cite{Grant2014CVX}. It is observed that the obtained solution of SDP \eqref{eq:SDP} is always of rank one.}
		\begin{table}[h]
			\caption{Simulation setup for the 3-sensor case.} 
			\centering 
			\footnotesize
			\begin{tabular}{ c c c c c c}
				\toprule
				&\textbf{Position} &\textbf{Starting Time} &{{$\Delta \bar{\rho}_m$}}&{{$\Delta \bar{\phi}_m$}}\\
				\midrule
				\textbf{Sensor 1} & $[-5,-10]^T$ km& $0$ s& $-0.8$ km& $2^{\circ}$\\
				\textbf{Sensor 2} & $[5,-10]^T$ km& $1.5$ s& $0.6$ km& $-3^{\circ}$\\
				\textbf{Sensor 3} & $[0,10]^T$ km& $3$ s& $0.8$ km& $-2^{\circ}$\\
				\bottomrule
			\end{tabular}
			
			\label{tab:sensors}
		\end{table}
		
		\begin{figure}[h]
			\centering
			\includegraphics [width=0.6 \linewidth]{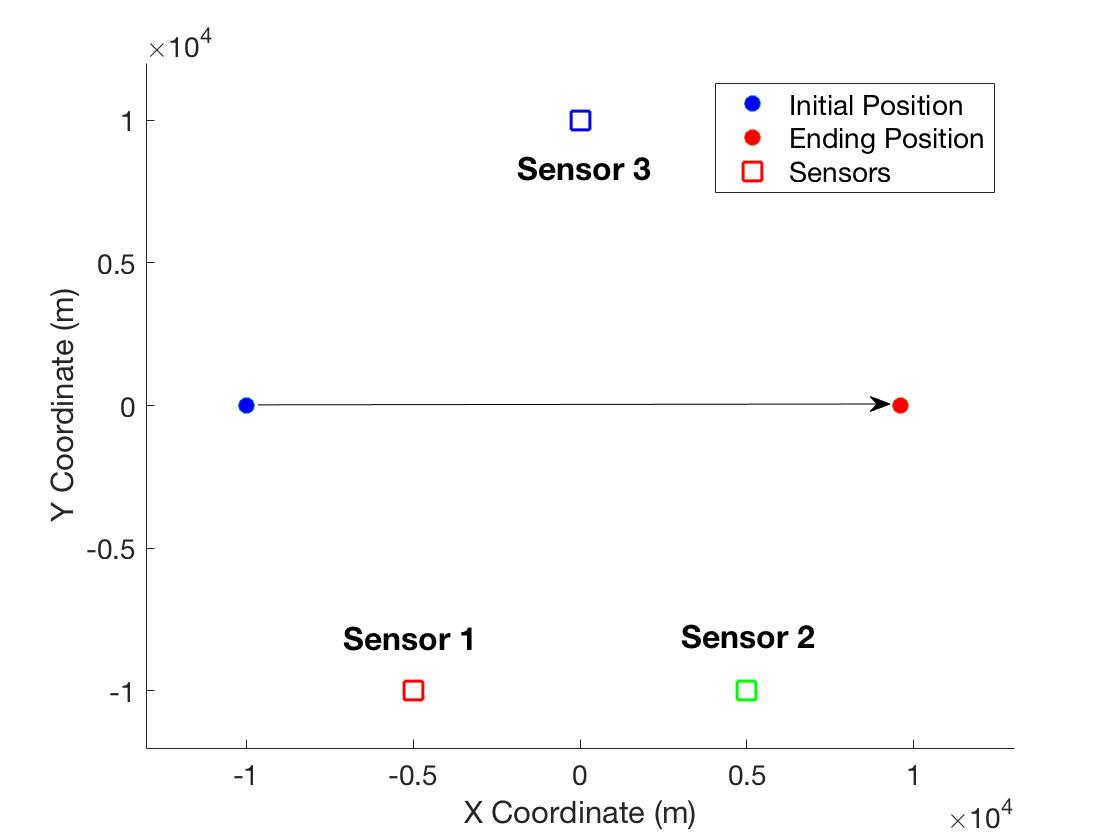}
			\caption{Simulation scenario of three sensors and one target.}
			\label{fig:simul_single_sc}
		\end{figure}
		
		\subsection{Effect of Measurement Noise}\label{subsec:simul_meanoise}
		In this subsection, we compare the performance of the six approaches under different levels of the measurement noise with $q=0.05\ \text{m}^2/\text{s}^3$.
		The RMSEs of the six approaches for the three sensors' range and azimuth biases and the corresponding HCRLBs are plotted as Figs. \ref{fig:simul_multi_s1r1_r}, \ref{fig:simul_multi_s2r1_r}, and \ref{fig:simul_multi_s3r1_r}, where those RMSEs which exceed the range of the Y-axis are plotted by dash lines and the corresponding values are marked. 
		
		We can observe from these figures that the proposed approach always achieves the smallest RMSE among all approaches. The two-stage approach performs the worst and its RMSE increases rapidly as the noise level increases. This is because this approach estimates the range biases only based on each sensor's local measurements and only iterates once. The high level of noise dramatically degrades the range bias estimation and the bad range bias estimation further degrades the estimation accuracy of the azimuth biases.
		\begin{figure}[H]
			\centering
			\includegraphics [width=0.48 \linewidth]{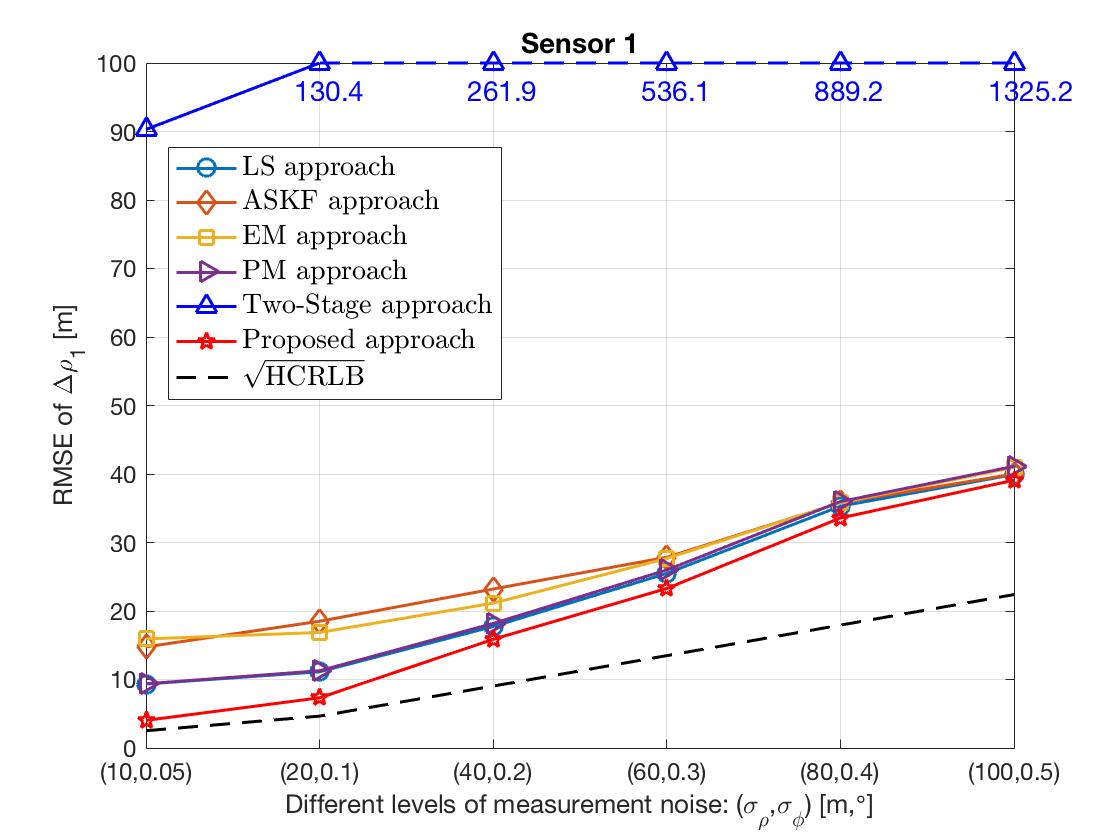}
			\includegraphics [width=0.48 \linewidth]{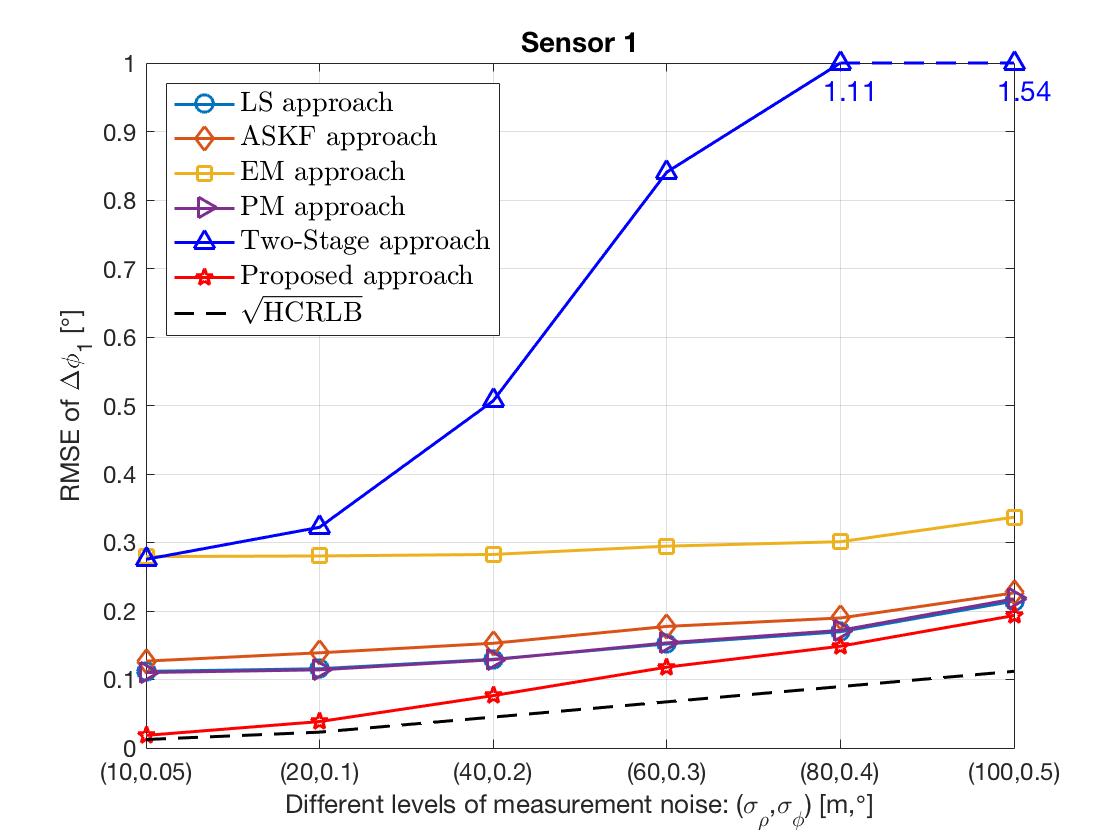}
			\caption{ RMSE and HCRLB of sensor 1's range and azimuth biases under different $\sigma_\rho$ and $\sigma_\phi$.}
			\label{fig:simul_multi_s1r1_r}
		\end{figure}
		
		\begin{figure}[H]
			\centering
			\includegraphics [width=0.48 \linewidth]{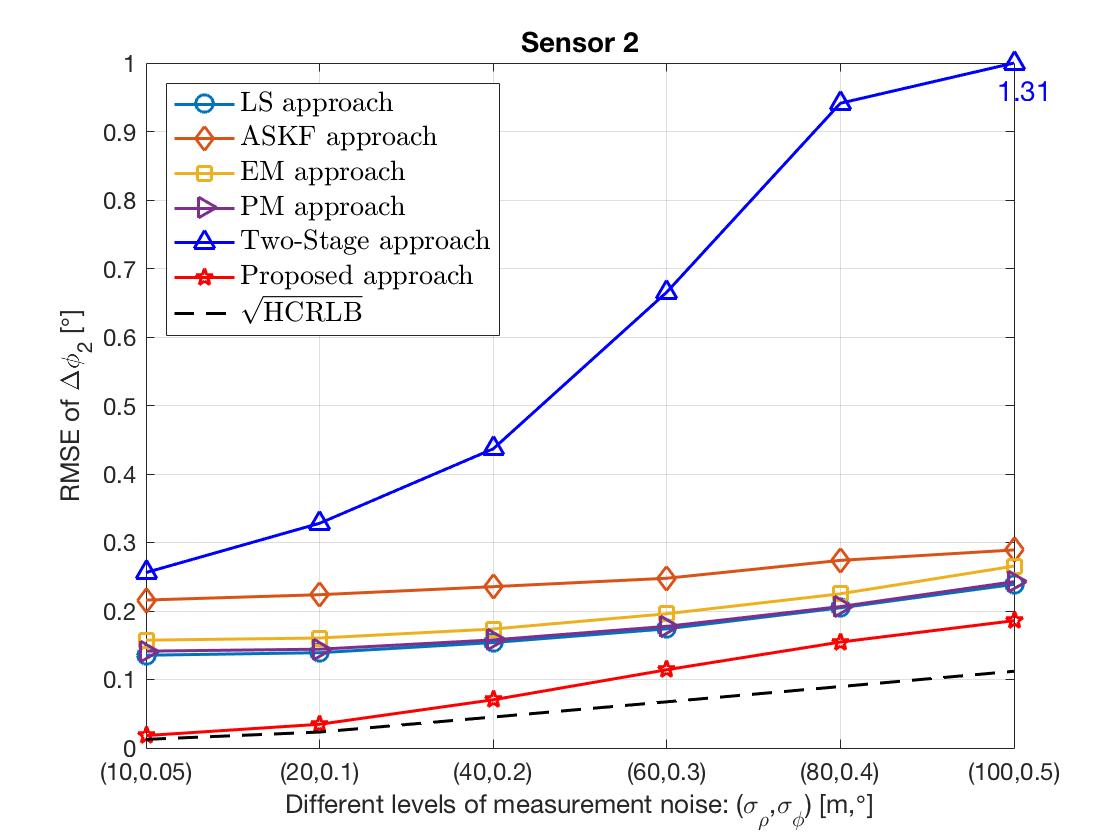}
			\includegraphics [width=0.48 \linewidth]{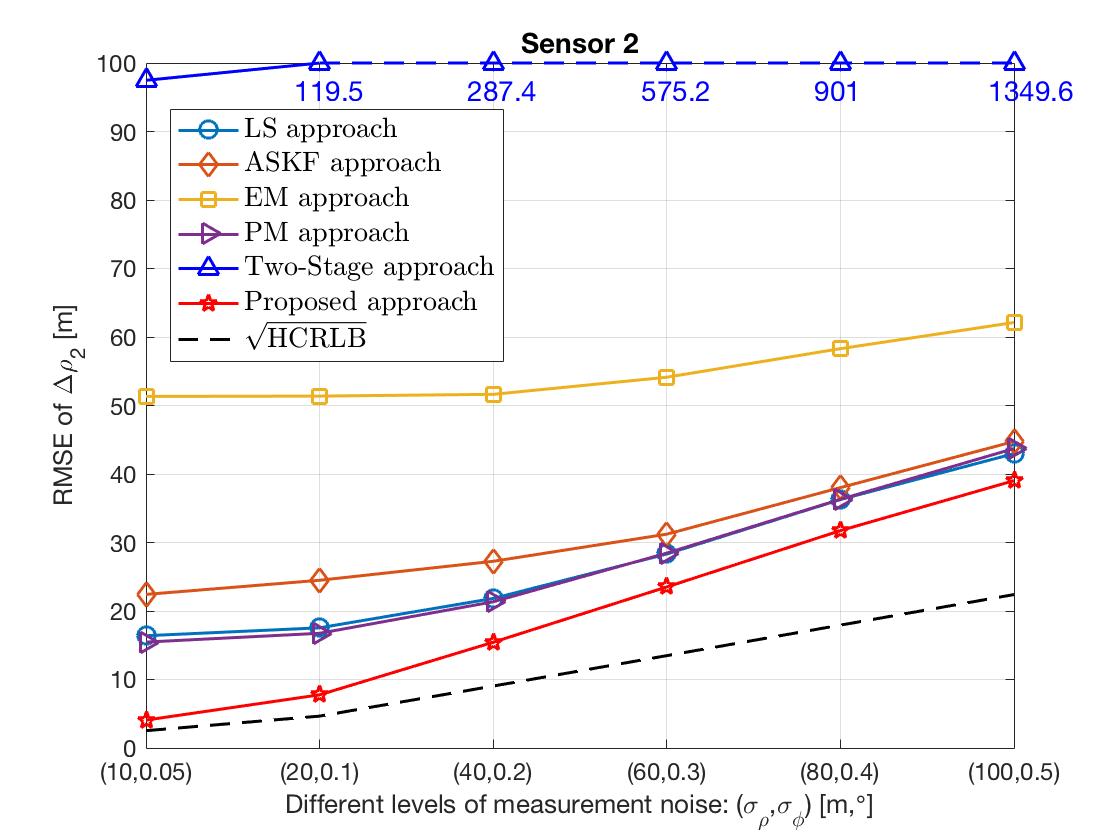}
			\caption{ RMSE and HCRLB of sensor 2's range and azimuth biases under different $\sigma_\rho$ and $\sigma_\phi$.}
			\label{fig:simul_multi_s2r1_r}
		\end{figure}
		
		\begin{figure}[H]
			\centering
			\includegraphics [width=0.48 \linewidth]{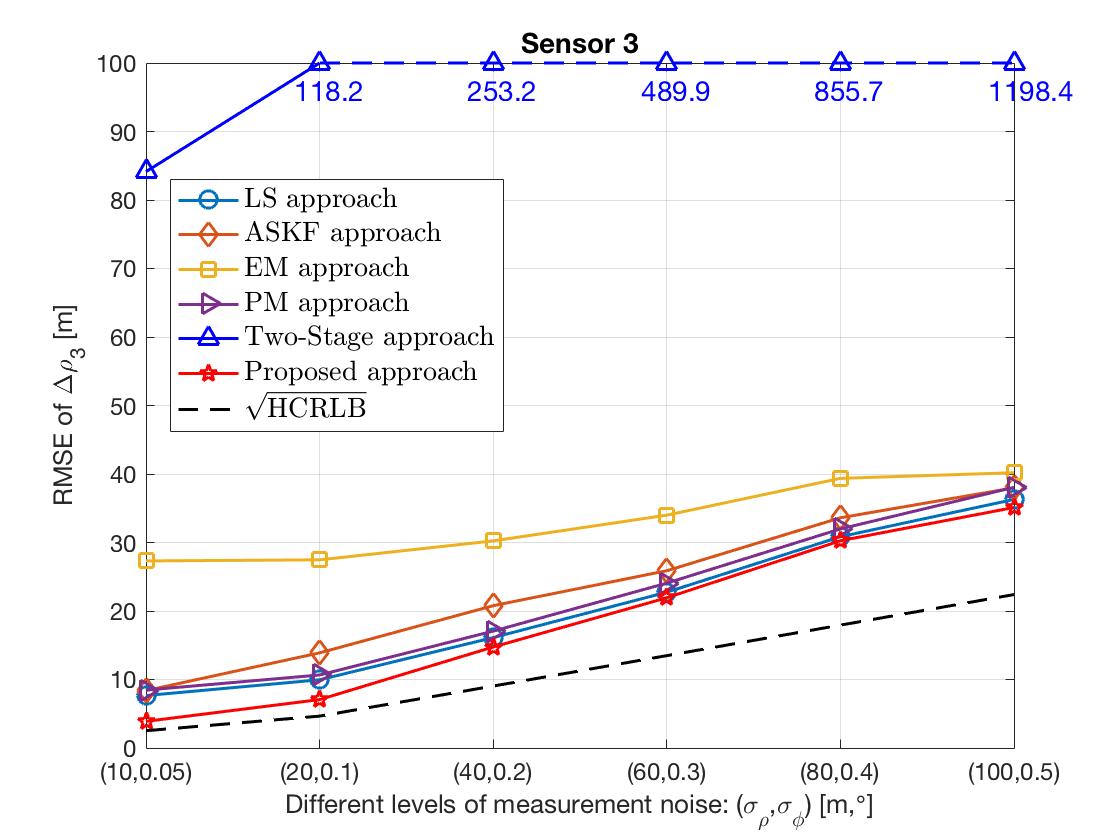}
			\includegraphics [width=0.48 \linewidth]{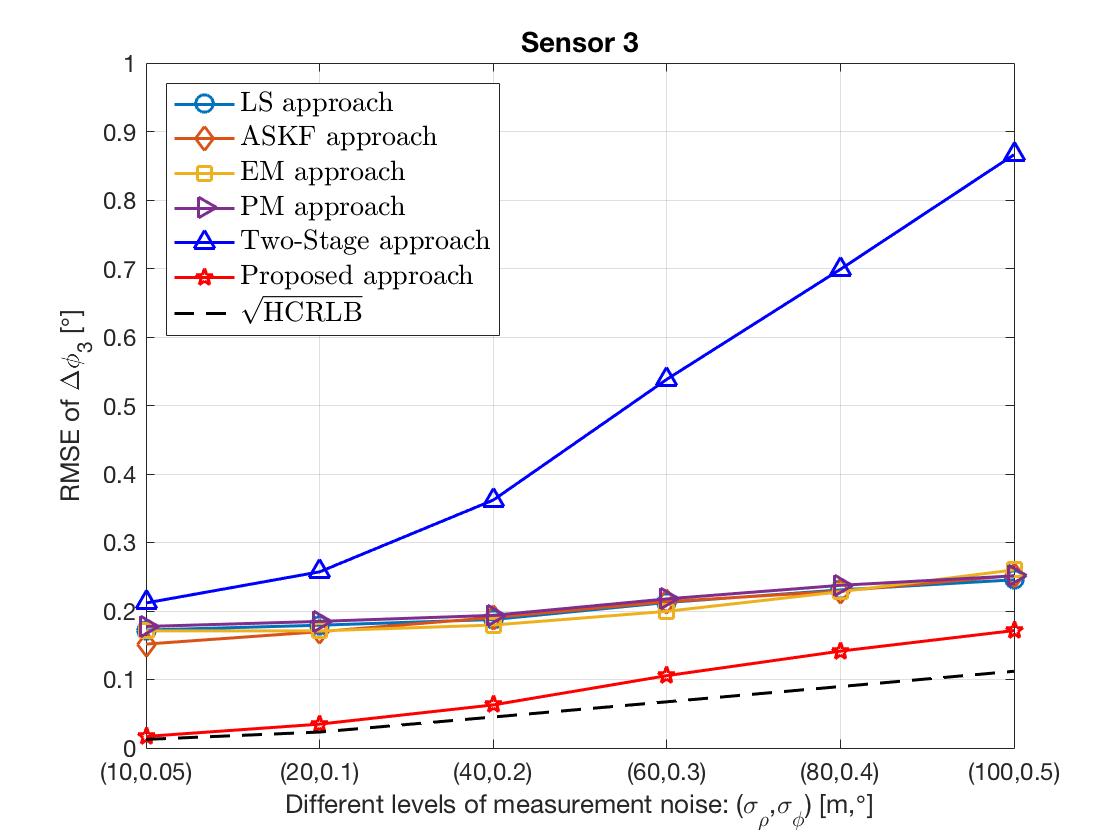}
			\caption{ RMSE and HCRLB of sensor 3's range and azimuth biases under different $\sigma_\rho$ and $\sigma_\phi$.}
			\label{fig:simul_multi_s3r1_r}
		\end{figure}
		
		The linearized LS and PM approaches have a similar performance because both of them are LS type of approaches based on the same first-order approximation procedure that transforms polar measurements into Cartesian measurements. As for the ASKF and EM approaches, it can be found that these two approaches perform similarly but are worse than the linearized LS and PM approaches. This is because both of these two approaches utilize the Kalman filter to estimate the target's states based on an approximated linear measurement model \cite{zhou2004asy}. As the Kalman filter is not robust to the model mismatch, the mismatch error in the approximated linear measurement model degrades the estimation performance of the ASKF and EM approaches even more than the one of the linearized LS and PM approaches.		
		
		Finally, due to the model mismatch error introduced by the first-order approximation procedure, the linearized LS, ASKF, EM, and PM approaches all suffer a `threshold' effect (even) when the measurement noise is very small (i.e., $\sigma_\rho^2=10 \text{ m},\ \sigma_\phi^2=0.05^\circ$). In sharp contrast, the proposed approach exploits the special problem structure and utilizes the nonlinear optimization techniques to deal with the nonlinearity issue, it thus does not have the `threshold' and its RMSE is quite close to the HCRLB when the noise is small.
		
		\subsection{Effect of the Target Motion Process Noise}\label{subsec:simul_qnoise}
		In {this} subsection, we present some simulation results to illustrate the effect of the target motion process noise on the estimation performance. The measurement noise level is fixed at $\sigma_\rho=20$ m and $\sigma_\phi=0.1^{\circ}$ and the process noise density $q$ changes in the interval $[0,10]\ \text{m}^2/\text{s}^3$. The RMSEs of the six approaches and the corresponding HCRLBs are plotted in Figs. \ref{fig:simul_multi_s1_r}, \ref{fig:simul_multi_s2_r}, and \ref{fig:simul_multi_s3_r}.
		\begin{figure}[H]
			\centering
			\includegraphics [width=0.48\linewidth]{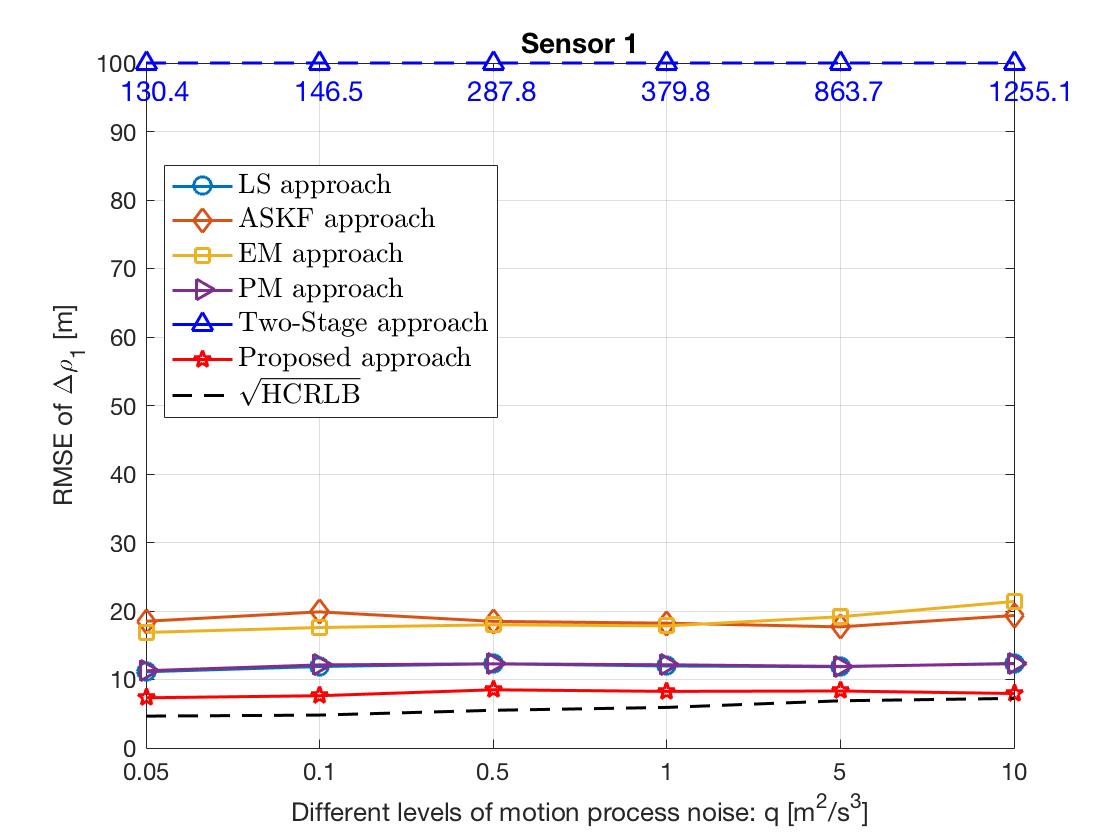}
			\includegraphics [width=0.48\linewidth]{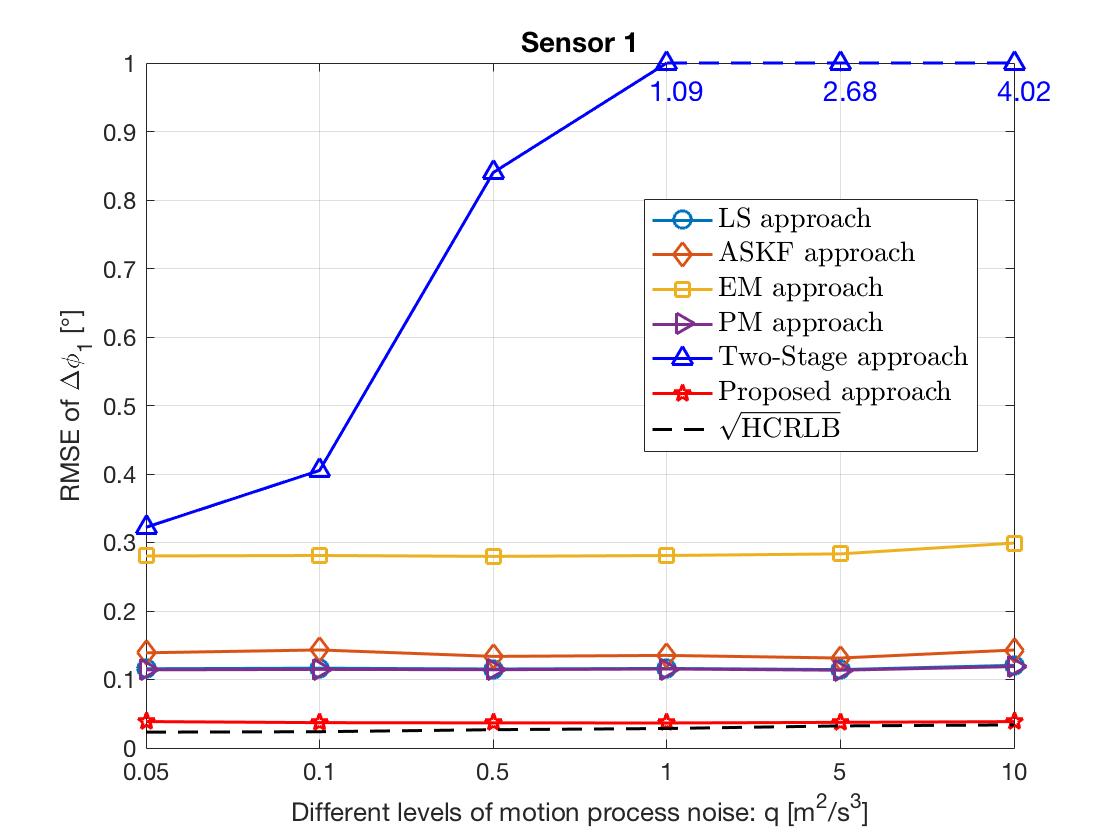}
			\caption{ RMSE and HCRLB of sensor 1's range and azimuth biases under different $q$.}
			\label{fig:simul_multi_s1_r}
		\end{figure}
		
		\begin{figure}[H]
			\centering
			\includegraphics [width=0.48\linewidth]{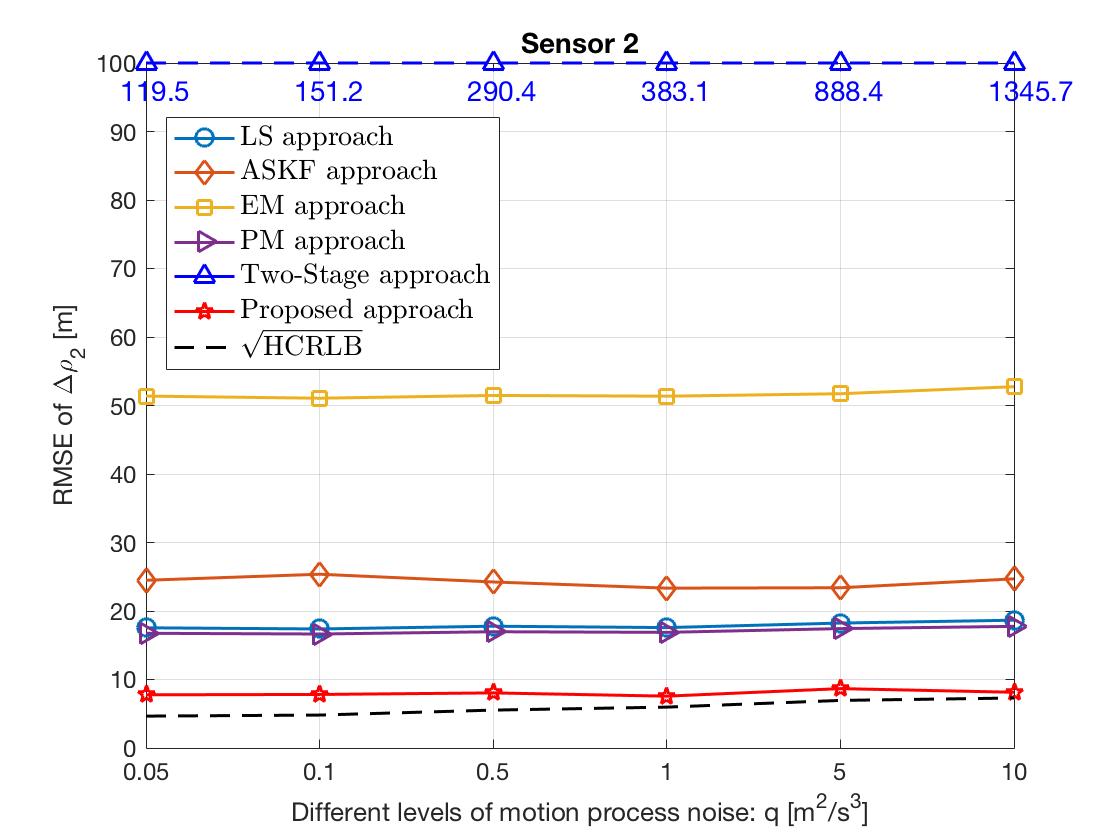}
			\includegraphics [width=0.48\linewidth]{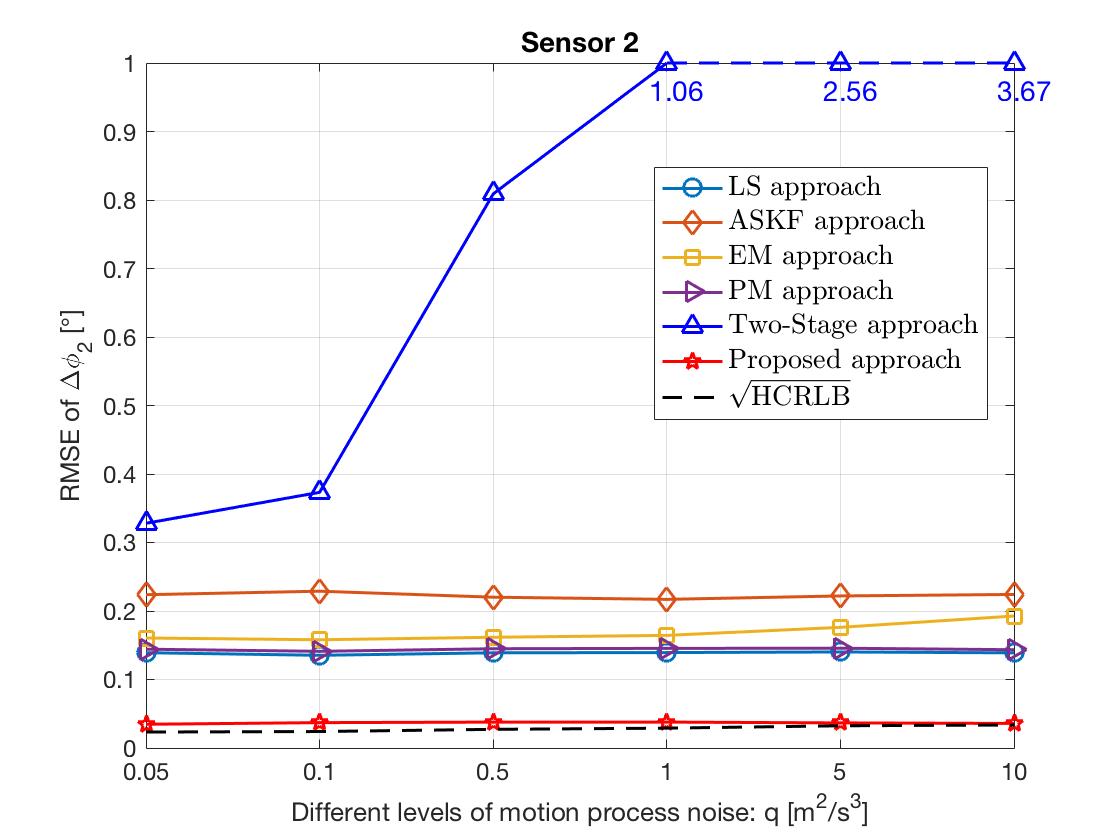}
			\caption{ RMSE and HCRLB of sensor 2's range and azimuth biases under different $q$.}
			\label{fig:simul_multi_s2_r}
		\end{figure}
		
		\begin{figure}[H]
			\centering
			\includegraphics [width=0.48\linewidth]{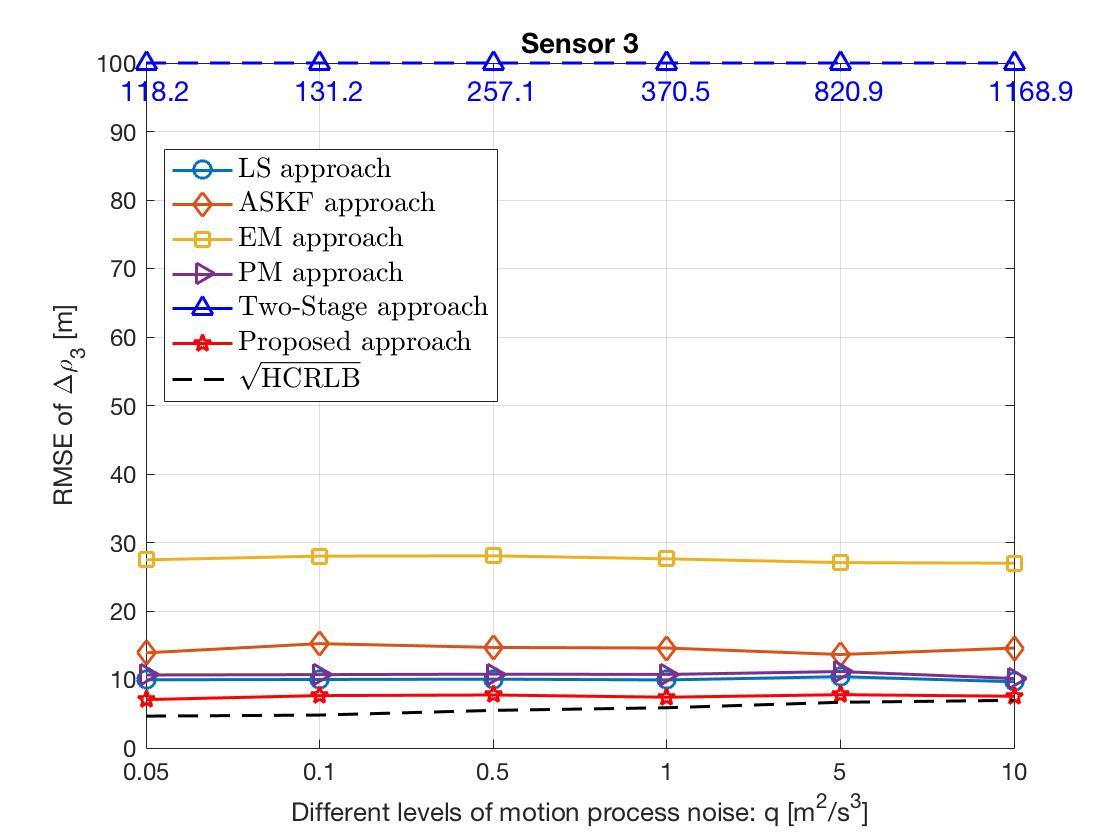}
			\includegraphics [width=0.48\linewidth]{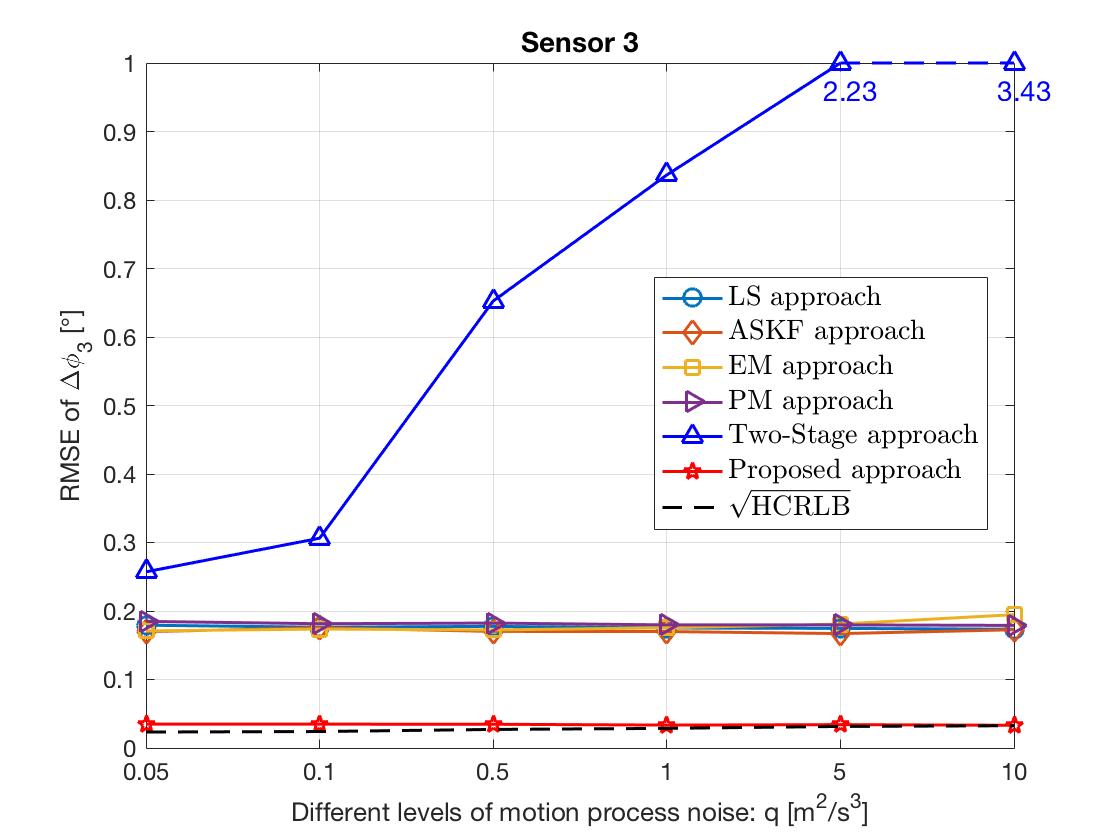}
			\caption{ RMSE and HCRLB of sensor 3's range and azimuth biases under different $q$.}
			\label{fig:simul_multi_s3_r}
		\end{figure}
		
		We can make similar observations from Figs. \ref{fig:simul_multi_s1_r}, \ref{fig:simul_multi_s2_r}, \ref{fig:simul_multi_s3_r} as those from Figs. \ref{fig:simul_multi_s1r1_r}, \ref{fig:simul_multi_s2r1_r}, \ref{fig:simul_multi_s3r1_r}: (i) the two-stage approach is sensitive to the target motion process noise (since it utilizes the sensor's local measurements to estimate the range bias and only iterates once); (ii) compared to the two-stage approach, all other five approaches are somehow robust to the motion process noise (possibly because they combine all sensors' measurements to estimate the range bias and iterate multiple times); (iii) the proposed approach performs better than the other five approaches; (iv) the linearized LS and PM approaches perform similarly and compare favorably against the ASKF and EM approaches.
}}

\subsection{Computational Efficiency Comparison}\label{subsec:simul_time}
{
In this subsection, we compare the computational efficiency of the five approaches (including the LS, EM, ASKF, PM, and proposed approaches) with different numbers of sensors. Specifically, we compare the computational time of these approaches with different numbers of sensors, $M=3,6,12,24.$ The number of measurements of each sensor is fixed to be $10$ and the observation starting time of each sensor is randomly generated in the first $10$ seconds. Each sensor observes the target every $10$ seconds and the locations of those sensors are randomly generated within the area $[-50, 50]$ km $\times$ $[-50, 50]$ km. The range and azimuth biases of  each sensor are also randomly generated within $[-1.5, 1.5]$ km and $[-5^\circ, 5^\circ],$ respectively. The setups for the target is the same as in Subsections \ref{subsec:simul_meanoise} and \ref{subsec:simul_qnoise}. The measurement noise level is set to be $\sigma_\rho=20$ m, $\sigma_\phi=1^{\circ}$, and the noise density of the target motion process is set to be $q=0.05$ $\textrm{m}^2/\textrm{s}^3$.
}

{
In addition to solving QCQP \eqref{EQ_OptDropv} via solving SDP \eqref{eq:SDP}, a more computationally efficient way is to use the gradient projection (GP) method to solve \eqref{EQ_OptDropv}. Although the theoretical convergence of the GP method for solving QCQP \eqref{EQ_OptDropv} is not known, we found that the method always finds the same solution as the solution of SDP \eqref{eq:SDP} in our simulation tests. The computational times of different approaches are listed in Table \ref{tab:time}, where BCD-SDP is the proposed Algorithm 1 and BCD-GP is Algorithm 1 where the GP method is used to solve QCQP \eqref{EQ_OptDropv}.
}
\begin{table}[H]
	\centering
	\caption{Computational Time [seconds].}
	\begin{tabular}{ c c c c c c c}
		\toprule
		&\textbf{LS} &\textbf{EM} &  \textbf{ASKF}&  \textbf{PM}&  \textbf{BCD-SDP} & \textbf{BCD-GP}\\
		\midrule
		$M=3$& 0.002& 0.122& 0.006& 0.004 &0.623& 0.041 \\
		$M=6$& 0.005& 0.335& 0.012& 0.011 & 1.893 & 0.057\\
		$M=12$& 0.007& 0.853& 0.042& 0.025& 3.275&0.083\\
		$M=24$& 0.011& 1.135& 0.143& 0.045& 6.532& 0.325\\
		\bottomrule
	\end{tabular}
	\label{tab:time}
\end{table}
{
It can be observed from Table \ref{tab:time} that the proposed BCD-SDP approach requires more computational time than other approaches as it needs to solve a SDP per iteration. Compared to the BCD-SDP approach, the BCD-GP approach returns the same solution but with significantly less computational time in all cases. Moreover, the computational times of the LS and PM approaches are smaller than all the other iterative approaches (EM, ASKF, BCD-SDP and BCD-GP) in all cases, because the optimization problems in these two approaches have closed-form solutions.
}

\section{Concluding Remarks}

{\textcolor{black}{In this paper, we proposed an effective BCD algorithm for the asynchronous multi-sensor registration problem by exploiting the presence of a reference target moving {with} a \textcolor{black}{nearly} constant velocity. Our proposed algorithm effectively handles the range and azimuth biases by exploiting the special structure of the problem. In particular, we revealed an interesting separation property of the range bias estimation problem, which shows that each sensor can actually estimate its range bias from its local measurements. We then applied the SDR technique to solve the azimuth bias estimation problem and further showed that the corresponding SDR is tight (under mild conditions, see Theorem \ref{thm:multi}). In sharp contrast to the existing existing approaches, our proposed algorithm is capable of exactly recovering sensor biases when there is no noise. The simulation results show the effectiveness of our proposed algorithm in various noisy scenarios. 
}}

{
As shown in Subsection \ref{subsec:simul_time}, the GP method is a more computationally efficient algorithm for solving QCQP \eqref{EQ_OptDropv}.
Although there is no theoretical guarantee, the GP algorithm always finds the same solution as the SDP solution in our simulation tests. Some recent theoretical progress on using the GP algorithm to solve a similar non-convex problem has been reported in \cite{liu2017discrete}. It would be interesting to also obtain some convergence and optimality guarantee of the GP algorithm for solving QCQP \eqref{EQ_OptDropv}. Moreover, it will be interesting to generalize the results in the paper from the 2-dimensional case to the 3-dimensional case. The results developed in the 2-dimensional case should provide good insight to solve the problem in the 3-dimensional case, although more efforts are still needed to achieve high estimation accuracy in view of the higher nonlinearity as well as more involved noise \cite{Fortunati2011Least,Fortunati2013On} in the 3-dimensional case. Finally, we plan to study the statistical property of the optimization model in two steps as we outline below. First, recall we have already shown that if the noise level is sufficiently low, then the optimization model will be able to give a globally optimal solution. To understand the statistical quality of such an optimal solution, we need to relate it to the maximum likelihood estimate (MLE) of the registration problem, and then bound the distance of MLE to the true biases. The latter problem is classical for a maximum likelihood estimation problem (related to Cramer-Rao lower bound). Second, for a given noise level, more data samples effectively reduce the noise level in the system. We need to estimate how many data samples are needed to bring the equivalent noise level to be under the `threshold' established in Theorem \ref{thm:multi}. Combining the above two steps will yield a statistical statement of the form: if the number of data samples are above a certain bound, then the optimization model would give a rank-one solution which is within a certain distance from the true bias parameters.
}

\section*{Appendix A: Proof of Theorem \ref{thm:single}}\label{appendix-theorem1}	
	We first reformulate problem \eqref{eq:single_fixphi} as
	\begin{equation}\label{eq:thm1_H0H1}
	\min_{\Delta\rho}\min_{\mathbf{v}}~~\left\|\mathbf{H}_{0}\Delta\rho+\mathbf{H}_{1}\mathbf{v}-\mathbf{y} \right\|^2.
	\end{equation} The closed-form solution of the above problem with respect to $\mathbf{v}$ is 
	\begin{equation} \label{eq:thm1_v}
	\mathbf{v}^*=-\left(\mathbf{H}_1^T\mathbf{H}_1\right)^{-1}\mathbf{H}_1^T\left(\mathbf{H}_0\Delta\rho-\mathbf{y}\right).
	\end{equation}
	Substituting \eqref{eq:thm1_v} into \eqref{eq:thm1_H0H1}, we obtain the following linear {LS} problem with respect to $\Delta\rho:$
	\begin{equation}\label{eq:thm1_rho}
	\min_{\Delta\rho}~~\left\| \left(\mathbf{I}_{2(K-1)}-\mathbf{\bar{H}}_1\right)\left(\mathbf{H}_0\Delta\rho-\mathbf{y}\right)\right\|^2,
	\end{equation}
	where $\mathbf{\bar{H}}_1=\mathbf{H}_1\left(\mathbf{H}_1^T\mathbf{H}_1\right)^{-1}\mathbf{H}_1^T$. By the definitions of $\mathbf{H}_0,~\mathbf{H}_1,$ and $\mathbf{y}$ in \eqref{eq:single_Handy}, problem \eqref{eq:thm1_rho} can be further rewritten as
	\begin{equation*}
	\min_{\Delta\rho}~~\sum_{k=1}^{K-1}\displaystyle \left\|
	\mathbf{a}_k\Delta\rho - \mathbf{b}_k
	\right\|^2,
	\end{equation*}
	where
	\begin{align*}
	\mathbf{a}_k=\left[\displaystyle\sum_{\ell=1}^{K-1}\bar{T}_\ell^kc_\ell,~\displaystyle \sum_{\ell=1}^{K-1}\bar{T}_\ell^ks_\ell\right]^T,\quad
	\mathbf{b}_k=\left[\displaystyle\sum_{\ell=1}^{K-1}\bar{T}_\ell^ky_\ell^c,~\displaystyle \sum_{\ell=1}^{K-1}\bar{T}_\ell^ky_\ell^s\right]^T,
	\end{align*}
	and
	\begin{equation*}
	\bar{T}_\ell^k=\left\lbrace \begin{aligned}&1-\frac{T_k^2}{\sum_{i=1}^{K-1}T_i^2},\ &\ell=k,\\[5pt]
	&-\frac{T_kT_\ell}{\sum_{i=1}^{K-1}T_i^2},\ &\ell\neq k.\end{aligned}\right.
	\end{equation*}
	
	To prove Theorem \ref{thm:single}, it suffices to show that $\mathbf{a}_k^T\mathbf{a}_k$ and $\mathbf{a}_k^T\mathbf{b}_k$ for all $k=1,2,\ldots,K-1$ are independent of $\Delta\phi.$ Next, we show that this is indeed true.
	It is simple to compute, for $k=1,2,\ldots,K,$
	\begin{align*}
	\mathbf{a}_k^T\mathbf{a}_k
	=\sum_{\ell=1}^{K-1}\sum_{j=1}^{K-1}\bar{T}_\ell^k\bar{T}_j^k\left(c_\ell c_j+s_\ell s_j\right),\quad 
	\mathbf{a}_k^T\mathbf{b}_k
	=\sum_{\ell=1}^{K-1}\sum_{j=1}^{K-1}\bar{T}_\ell^k\bar{T}_j^k\left(c_\ell y^c_j+s_\ell y^s_j\right).
	\end{align*}
	Moreover, by the definitions of $c_k, s_k, y_k^c,$ and $y_k^s$ in \eqref{eq:c_k}, \eqref{eq:s_k}, \eqref{eq:yc_k}, and \eqref{eq:ys_k} and the fact
	\begin{align*}
	\cos(\alpha_1+\Delta\phi)\cos(\alpha_2+\Delta\phi)+\sin(\alpha_1+\Delta\phi)\sin(\alpha_2+\Delta\phi)
	=\cos(\alpha_1-\alpha_2),
	\end{align*}
	we can obtain, for $\ell, j=1,2,\ldots,K-1,$
	\begin{align*}c_\ell c_j+s_\ell s_j=&\lambda^{-2}[\cos\left(\phi_{\ell+1}-\phi_{j+1}\right)+\cos\left(\phi_{\ell}-\phi_{j}\right)\\
	&-\cos\left(\phi_{\ell+1}-\phi_{j}\right)-\cos\left(\phi_{j+1}-\phi_{\ell}\right)],\end{align*}
	\begin{align*}c_\ell y_j^c+s_\ell y_j^s=&\lambda^{-2}[\rho_{j+1}\cos\left(\phi_{\ell+1}-\phi_{j+1}\right)+\rho_{j}\cos\left(\phi_{\ell}-\phi_{j}\right)\\
	&-\rho_{j}\cos\left(\phi_{\ell+1}-\phi_{j}\right)-{{\rho_{j+1}}}\cos\left(\phi_{j+1}-\phi_{\ell}\right)].\end{align*}
	This immediately shows that $\mathbf{a}_k^T\mathbf{a}_k$ and $\mathbf{a}_k^T\mathbf{b}_k$ for all $k=1,2,\ldots,K-1$ are independent of $\Delta\phi.$ This completes the proof of Theorem \ref{thm:single}. \qed

\section*{Appendix B: Proof of Theorem \ref{thm:multi}}\label{app:thm2}
To show that SDP \eqref{eq:SDP} admits a unique solution of rank one, it is sufficient to show that, if $\mathbf{\Delta H}$ is sufficiently small, there always exists $\mathbf{W}=\textrm{Diag}(\mathbf{w})$ with $|\mathbf{w}_m|=1$ for all $m=1,2,\ldots,M$ such that $\mathbf{1}\mathbf{1}^T\in\mathbb{H}^{M+1}$ is the unique solution of the following SDP
\begin{equation}
\begin{aligned}\label{eq:SDP2}
\min_{\mathbf{X}\in\mathbb{H}^{M+1}}\quad&\textrm{Tr}(\mathbf{\hat{C}X})\\
\textrm{s.t.}\quad\quad&\textrm{diag}(\mathbf{X})=\mathbf{1}, \\
&\mathbf{X}\succeq\mathbf{0},
\end{aligned}
\end{equation}%
where
\begin{equation*}
\mathbf{\hat{C}}=\begin{bmatrix}\mathbf{W}^\dagger\mathbf{H}^\dagger\mathbf{PHW}&\mathbf{W}^\dagger\mathbf{H}^\dagger\mathbf{P}\mathbf{c} \\ \mathbf{c}^\dagger\mathbf{PHW}&0\end{bmatrix}\in\mathbb{H}^{M+1}.
\end{equation*}
By Lemma \ref{lem:sufcond}, we only need to show that, if $\mathbf{\Delta H}$ is sufficiently small, there always exists $\mathbf{W}=\textrm{Diag}(\mathbf{w})$ with $|\mathbf{w}_m|=1$ for all $m=1,2,\ldots,M$ such that $\mathbf{1}\mathbf{1}^T\in\mathbb{H}^{M+1}$ and some $\mathbf{y}\in\mathbb{R}^{M+1}$ jointly satisfy
\begin{equation}\label{semidefinite}
\begin{aligned}
&\mathbf{\hat{C}}+\textrm{diag}(\mathbf{y})\\
&=\begin{bmatrix}\mathbf{W}^\dagger\mathbf{H}^\dagger\mathbf{PHW}+\textrm{Diag}(\mathbf{y}_{1:M})&\mathbf{W}^\dagger\mathbf{H}^\dagger\mathbf{P}\mathbf{c} \\ \mathbf{c}^\dagger\mathbf{PHW}&\mathbf{y}_{M+1}\end{bmatrix}\succeq \mathbf{0},
\end{aligned}
\end{equation}

\begin{equation}\label{eq:lem1cond3}
\begin{bmatrix}\mathbf{W}^\dagger\mathbf{H}^\dagger\mathbf{PHW}+\textrm{Diag}(\mathbf{y}_{1:M})&\mathbf{W}^\dagger\mathbf{H}^\dagger\mathbf{P}\mathbf{c} \\ \mathbf{c}^\dagger\mathbf{PHW}&\mathbf{y}_{M+1}\end{bmatrix}\mathbf{1}=\mathbf{0}\in\mathbb{R}^{M+1}, 
\end{equation} and
\begin{equation}\label{strictpositive}
\mathbf{W}^\dagger\mathbf{H}^\dagger\mathbf{PHW}+\textrm{Diag}(\mathbf{y}_{1:M}) \succ \mathbf{0}.
\end{equation}
Notice that the above \eqref{semidefinite}, \eqref{eq:lem1cond3}, and \eqref{strictpositive} correspond to conditions 2, 3, and 4 in Lemma \ref{lem:sufcond}, respectively, and $\mathbf{1}\mathbf{1}^T\in\mathbb{S}^{M+1}$ obviously satisfies condition 1 in Lemma \ref{lem:sufcond}. Moreover, we rewrite \eqref{eq:lem1cond3} as follows:
\begin{subequations}
	\begin{align}\label{eq:lem1cond3a}
	\left[\mathbf{W}^\dagger\mathbf{H}^\dagger\mathbf{PHW}+\textrm{Diag}(\mathbf{y}_{1:M}) \right]\mathbf{1}+\mathbf{W}^\dagger\mathbf{H}^\dagger\mathbf{P}\mathbf{c}=\mathbf{0}\in\mathbb{R}^{M},
	\end{align}
	\begin{align}\label{eq:lem1cond3b}
	\left[\mathbf{c}^\dagger\mathbf{PHW}\right]\mathbf{1}+\mathbf{y}_{M+1}=0.
	\end{align}
\end{subequations}

Next, we prove that, if $\mathbf{\Delta H}$ is sufficiently small, there always exist $\mathbf{W}=\textrm{Diag}(\mathbf{w})$ with $|\mathbf{w}_m|=1$ for all $m=1,2,\ldots,M$ and $\mathbf{y}\in\mathbb{R}^{M+1}$that satisfy \eqref{semidefinite}, \eqref{eq:lem1cond3a}, \eqref{eq:lem1cond3b}, and \eqref{strictpositive}. We divide the proof into two parts. More specifically, we first show that, if there exist $\mathbf{W}=\textrm{Diag}(\mathbf{w})$ with $|\mathbf{w}_m|=1$ for all $m=1,2,\ldots,M$ and $\mathbf{y}\in\mathbb{R}^{M+1}$ satisfying \eqref{eq:lem1cond3a} and \eqref{strictpositive}, then we can construct $\mathbf{y}_{M+1}$ that simultaneously satisfies \eqref{semidefinite} and \eqref{eq:lem1cond3b}. Then, we show that, if $\mathbf{\Delta H}$ is sufficiently small, there indeed exist $\mathbf{W}=\textrm{Diag}(\mathbf{w})$ with $|\mathbf{w}_m|=1$ for all $m=1,2,\ldots,M$ and $\mathbf{y}_{1:M}$ that satisfy \eqref{eq:lem1cond3a} and \eqref{strictpositive}.

Let us first assume that $\mathbf{W}=\textrm{Diag}(\mathbf{w})$ (with $|\mathbf{w}_m|=1$ for all $m=1,2,\ldots,M$) and $\mathbf{y}_{1:M}$ satisfy \eqref{strictpositive} and \eqref{eq:lem1cond3a}. We show that there exists $\mathbf{y}_{M+1}$ that simultaneously satisfies \eqref{semidefinite} and \eqref{eq:lem1cond3b}. Let
\begin{equation}\label{constructyM+1}
\mathbf{y}_{M+1}=\mathbf{c}^\dagger\mathbf{PHW}\left[\mathbf{W}^\dagger\mathbf{H}^\dagger\mathbf{PHW}+\textrm{Diag}(\mathbf{y}_{1:M}) \right]^{-1}\mathbf{W}^\dagger\mathbf{H}^\dagger\mathbf{P}\mathbf{c}.
\end{equation}
This, together with \eqref{strictpositive}, immediately implies \eqref{semidefinite}. Moreover, from \eqref{eq:lem1cond3a}, we have $$\mathbf{1}=-\left[\mathbf{W}^\dagger\mathbf{H}^\dagger\mathbf{PHW}+\textrm{Diag}(\mathbf{y}_{1:M}) \right]^{-1}\mathbf{W}^\dagger\mathbf{H}^\dagger\mathbf{P}\mathbf{c}\in\mathbb{R}^{M}.$$
Combining the above and \eqref{constructyM+1} immediately yields \eqref{eq:lem1cond3b}.

Now, let us argue that, if $\mathbf{\Delta H}$ is sufficiently small, then there always exist $\mathbf{W}=\textrm{Diag}(\mathbf{w})$ (with $|\mathbf{w}_m|=1$ for all $m=1,2,\ldots,M$) and $\mathbf{y}_{1:M}$ that satisfy \eqref{eq:lem1cond3a} and \eqref{strictpositive}. In particular, the following Claim 1 and Claim 2 guarantee the existence of $\mathbf{W}=\textrm{Diag}(\mathbf{w})$ (with $|\mathbf{w}_m|=1$ for all $m=1,2,\ldots,M$) and $\mathbf{y}_{1:M}$ that satisfy \eqref{eq:lem1cond3a} and \eqref{strictpositive}, respectively.



\noindent \textbf{Claim 1.} There always exist a neighborhood ${\mathcal{H}}\subseteq\mathbb{C}^{(K-1)\times M}$ containing $\mathbf{\tilde H}$ and two unique continuously differentiable functions $d_{\bm{\psi}}:\mathbb{C}^{(K-1)\times M}\mapsto\mathbb{R}^{M}$ and $d_{\mathbf{y}}:\mathbb{C}^{(K-1)\times M}\mapsto\mathbb{R}^{M}$ such that \eqref{eq:lem1cond3a} holds true for all $\mathbf{H}\in \mathcal{H}$ with $\mathbf{W}=\textrm{Diag}\left(\left[e^{j\psi_1}, e^{j\psi_2},\ldots, e^{j\psi_M}\right]^T\right)$, $\bm{\psi}=d_{\bm{\psi}}(\mathbf{H}),$ and $\mathbf{y}=d_{\mathbf{y}}(\mathbf{H})$.
\begin{proof}
	To prove Claim 1, we first reformulate complex {equation \eqref{eq:lem1cond3a} as an equivalent real form; 
		then we} apply the implicit function theorem \cite{border2013notes} based on the equivalent form to show the existence of the neighborhood $\mathcal{H}$ and two unique continuously differentiable functions $d_{\bm{\psi}}$ and $d_{\mathbf{y}}.$
	%
	%
	
	The equivalent form of \eqref{eq:lem1cond3a} is
	{{
			\begin{equation}\label{fequationf}f\left(\bm{\psi},\mathbf{y},\mathbf{H}\right)=\bm{0},\end{equation}
			where $f\left(\bm{\psi},\mathbf{y},\mathbf{H} \right):\mathbb{R}^M\times\mathbb{R}^M\times\mathbb{C}^{(K-1)\times M}\mapsto\mathbb{R}^{2M}$ is	 \begin{equation}\label{eq:definef}
			\begin{aligned}
			f\left(\bm{\psi},\mathbf{y},\mathbf{H}\right)
			=&\begin{bmatrix}\text{Re}\left\{ 	\left[\mathbf{W}^\dagger\mathbf{H}^\dagger\mathbf{PHW}+\textrm{diag}(\mathbf{y}_{1:M}) \right]\mathbf{1}+\mathbf{W}^\dagger\mathbf{H}^\dagger\mathbf{P}\mathbf{c}\right\}\\[3pt] \text{Im}\left\{ 	 \left[\mathbf{W}^\dagger\mathbf{H}^\dagger\mathbf{PHW}+\textrm{diag}(\mathbf{y}_{1:M}) \right]\mathbf{1}+\mathbf{W}^\dagger\mathbf{H}^\dagger\mathbf{P}\mathbf{c}\right\} \end{bmatrix}\\
			=&\begin{bmatrix}
			\mathbf{C_{\bm{\psi}}\text{Re}\{ A_H \}}\mathbf{C_{\bm{\psi}}}+\mathbf{S_{\bm{\psi}}\text{Im}\{ A_H \}}\mathbf{C_{\bm{\psi}}}\\
			\mathbf{C_{\bm{\psi}}\text{Re}\{ A_H \}}\mathbf{S_{\bm{\psi}}}+\mathbf{S_{\bm{\psi}}\text{Im}\{ A_H \}}\mathbf{S_{\bm{\psi}}}
			\end{bmatrix}\mathbf{1}\\
			 &+ \begin{bmatrix}
			-\mathbf{C_{\bm{\psi}}\text{Im}\{ A_H \}}\mathbf{S_{\bm{\psi}}}+\mathbf{S_{\bm{\psi}}\text{Re}\{ A_H \}}\mathbf{S_{\bm{\psi}}}\\
			\mathbf{C_{\bm{\psi}}\text{Im}\{ A_H \}}\mathbf{C_{\bm{\psi}}}-\mathbf{S_{\bm{\psi}}\text{Re}\{ A_H \}}\mathbf{C_{\bm{\psi}}}
			\end{bmatrix}\mathbf{1}\\
			& +
			\begin{bmatrix}
			\mathbf{C_{\bm{\psi}}}\mathbf{\text{Re}\{ b_H \}}+\mathbf{S_{\bm{\psi}}}\mathbf{\text{Im}\{ b_H \}}\\
			\mathbf{C_{\bm{\psi}}}\mathbf{\text{Im}\{ b_H \}}-\mathbf{S_{\bm{\psi}}}\mathbf{\text{Re}\{ b_H \}}
			\end{bmatrix}
			+\begin{bmatrix}\mathbf{y}\\\bm{0}\end{bmatrix}.
			\end{aligned}
			\end{equation}}}In the above, $\mathbf{C_{\bm{\psi}}}$ and $\mathbf{S_{\bm{\psi}}}$ are diagonal matrices related to $\bm{\psi}$ as follows:
	\begin{equation}\label{eq:defineCS}
	\begin{aligned}
	 \mathbf{C_{\bm{\psi}}}=\textrm{Diag}\left(\left[\cos\bm{\psi}_1,\ldots,\cos\bm{\psi}_M\right]^T\right),\\
	 \mathbf{S_{\bm{\psi}}}=\textrm{Diag}\left(\left[\sin\bm{\psi}_1,\ldots,\sin\bm{\psi}_M\right]^T\right);
	\end{aligned}
	\end{equation}
	and $\mathbf{A_H}\in\mathbb{H}^{M}$ and $\mathbf{ b_H}\in\mathbb{C}^M$ are determined by $\mathbf{H}$ as follows:
	\begin{equation*}
	\mathbf{A_H} = \mathbf{H^\dagger PH},~\mathbf{ b_H} = \mathbf{H^\dagger Pc}.
	\end{equation*}
	
	Now, we apply the implicit function theorem \cite[Theorem 5]{border2013notes} to Eq. \eqref{fequationf}.
	Notice that it is the same to define $f\left(\bm{\psi},\mathbf{y},\mathbf{H}\right)$ in \eqref{eq:definef} over $\left(\bm{\psi},\mathbf{y},\mathbf{H}\right)$ or define $f(\bm{\psi},\mathbf{y},\text{Re}\{\mathbf{{H}}\},\text{Im}\{\mathbf{{H}}\}): \mathbb{R}^M\times\mathbb{R}^M\times\mathbb{R}^{(K-1)\times M}\times\mathbb{R}^{(K-1)\times M}\mapsto\mathbb{R}^{2M}$ over $(\bm{\psi},\mathbf{y},\text{Re}\{\mathbf{{H}}\},\text{Im}\{\mathbf{{H}}\}).$ For {notation}al simplicity, we will keep using \eqref{eq:definef} (but we can think it as $f(\bm{\psi},\mathbf{y},\text{Re}\{\mathbf{{H}}\},\text{Im}\{\mathbf{{H}}\})$) in our proof. First of all, {$f(\bm{\psi},\mathbf{y},\mathbf{{H}})$} is continuously differentiable (i.e., $f(\bm{\psi},\mathbf{y},\text{Re}\{\mathbf{{H}}\},\text{Im}\{\mathbf{{H}}\})$ is continuously differentiable).
	%
	Moreover, it follows from \eqref{eq:NoiseFree} that
	{{
			\begin{equation}\label{eq:define_zeropoint}
			f\left(\bm{\tilde{\psi}},\mathbf{\tilde{y}},\mathbf{\tilde{H}}\right)=\mathbf{0},
			\end{equation}}}where $\bm{\tilde{\psi}}=\bm{\Delta\bar{\phi}},~\mathbf{\tilde{y}}=\mathbf{0},$ and $\mathbf{\tilde{H}}$ is defined in \eqref{eq:tilde_H}.
	If the Jacobian matrix $D_{(\bm{\psi},\mathbf{y})}f$ is invertible at point $(\bm{\tilde{\psi}},\mathbf{\tilde{y}},\mathbf{\tilde{H}})$, then it follows from the implicit function theorem \cite[Theorem 5]{border2013notes} that the desired $\mathcal{{H}}$, ${d}_{\bm{\psi}},$ and ${d}_{\mathbf{y}}$ in Claim 1 must exist.
	It remains to prove that the Jacobian matrix $D_{(\bm{\psi},\mathbf{y})}f$ at point $(\bm{\tilde{\psi}},\mathbf{\tilde{y}},\mathbf{\tilde{H}} )$ is invertible.
	
	By some calculations, we obtain
	\begin{equation}\label{eq:deriv}
	D_{(\bm{\psi},\mathbf{y})}f=\begin{bmatrix} \frac{\partial f}{\partial \bm{\psi}} & \frac{\partial f}{\partial \mathbf{y}} \end{bmatrix}=\begin{bmatrix} * & \mathbf{I}_M\\\mathbf{G}&\mathbf{0}
	\end{bmatrix}\in \mathbb{C}^{2M\times 2M},
	\end{equation}
	where%
	\begin{equation}\label{eq:G2}
	\begin{aligned}
	\mathbf{G}=&
	~\mathbf{C}_{\bm{\psi}}\mathbf{\text{Re}\{ A_H \}}\mathbf{C}_{\bm{\psi}}- \mathbf{S}_{\bm{\psi}}\textrm{Diag}(\mathbf{\text{Re}\{ A_H \}}\mathbf{s}_{\bm{\psi}})\\&
	+ \mathbf{S}_{\bm{\psi}}\mathbf{\text{Im}\{ A_H \}}\mathbf{C}_{\bm{\psi}}+ \mathbf{C}_{\bm{\psi}}\textrm{Diag}(\mathbf{\text{Im}\{ A_H \}}\mathbf{s}_{\bm{\psi}})\\&
	- \mathbf{C}_{\bm{\psi}}\mathbf{\text{Im}\{ A_H \}}\mathbf{S}_{\bm{\psi}}- \mathbf{S}_{\bm{\psi}}\textrm{Diag}(\mathbf{\text{Im}\{ A_H \}}\mathbf{c}_{\bm{\psi}})\\&
	+ \mathbf{S}_{\bm{\psi}}\mathbf{\text{Re}\{ A_H \}}\mathbf{S}_{\bm{\psi}}- \mathbf{C}_{\bm{\psi}}\textrm{Diag}(\mathbf{\text{Re}\{ A_H \}}\mathbf{c}_{\bm{\psi}})\\&
	+ \mathbf{S}_{\bm{\psi}}\textrm{Diag}(\mathbf{\text{Im}\{ b_H \}})+\mathbf{C}_{\bm{\psi}}\textrm{Diag}(\mathbf{\text{Re}\{ b_H \}}).
	\end{aligned}
	\end{equation}
	By \eqref{eq:define_zeropoint}, we have
	\begin{equation}\label{eq:tilde_eq_real}
	\begin{aligned}
	\mathbf{\text{Re}\{ A_{\tilde{H}}\}}\mathbf{{c}}_{\bm{\tilde{\psi}}}-\mathbf{\text{Im}\{ A_{\tilde{H}}\}}\mathbf{{s}}_{\bm{\tilde{\psi}}}=\mathbf{\text{Re}\{ A_{\tilde{H}}\}},\\
	\mathbf{\text{Im}\{ A_{\tilde{H}}\}}\mathbf{{c}}_{\bm{\tilde{\psi}}}+\mathbf{\text{Re}\{ A_{\tilde{H}}\}}\mathbf{{s}}_{\bm{\tilde{\psi}}}=\mathbf{\text{Im}\{ b_{\tilde{H}}\}}.
	\end{aligned}
	\end{equation}
	Substituting \eqref{eq:tilde_eq_real} into \eqref{eq:G2}, we obtain 
	\begin{align*}
	\mathbf{G}|_{(\bm{\tilde{\psi}},\mathbf{\tilde{y}},\mathbf{\tilde{H}})}=\begin{bmatrix}\mathbf{C}_{\bm{\tilde{\psi}}}&\mathbf{S}_{\bm{\tilde{\psi}}}\end{bmatrix}
	\begin{bmatrix}\mathbf{\text{Re}\{ A_{\tilde{H}}\}}&-\mathbf{\text{Im}\{ A_{\tilde{H}}\}}\\\mathbf{\text{Im}\{ A_{\tilde{H}}\}}&\mathbf{\text{Re}\{ A_{\tilde{H}}\}}\end{bmatrix}
	\begin{bmatrix}\mathbf{C}_{\bm{\tilde{\psi}}}\\\mathbf{S}_{\bm{\tilde{\psi}}}\end{bmatrix}.
	\end{align*}
	Since $\mathbf{A_{\tilde{H}}}=\mathbf{\tilde{H}^\dagger P \tilde{H}}$ is positive definite and the matrix $\left[\mathbf{C}_{\bm{\tilde{\psi}}}, \mathbf{S}_{\bm{\tilde{\psi}}}\right]$ is of full row rank, we immediately have
	$$\mathbf{x}^T\mathbf{G}|_{(\bm{\tilde{\psi}},\mathbf{\tilde{y}},\mathbf{\tilde{H}})}\mathbf{x}>0,~\forall~\mathbf{x}\in\mathbb{R}^{M},$$
	which further implies that $\mathbf{G}|_{(\bm{\tilde{\psi}},\mathbf{\tilde{y}},\mathbf{\tilde{H}})}$ is invertible. Consequently, the Jacobian matrix $D_{(\bm{\psi},\mathbf{y})}f$ in \eqref{eq:deriv} at point $(\bm{\tilde{\psi}},\mathbf{\tilde{y}},\mathbf{\tilde{H}})$ is invertible. The proof of Claim 1 is completed.
\end{proof}
%

{Let $\mathbf{F}$ and $\mathbf{D}$ be the Jacobian matrices of function $f$ (defined in \eqref{eq:definef}) with respect to $\textrm{vec}\left(\mathbf{H}\right)$ and $(\bm{\psi},\mathbf{y})$ at point $(\bm{\tilde{\psi}},\mathbf{\tilde{y}},\mathbf{\tilde{H}})$, respectively.} From the proof of Claim 1, we know that both $\mathbf{F}$ and $\mathbf{D}$ are well defined.

\noindent \textbf{Claim 2.} Consider $\mathbf{H}\in{\mathcal{H}}\subseteq \mathbb{C}^{(K-1)\times M},$ where ${\mathcal{H}}$ is the one in Claim 1. 
	Suppose \begin{equation}\label{conditionpositive}\lambda_{\textrm{min}}(\mathbf{H}^\dagger\mathbf{PH})>\min_{1\leq m\leq M}\left\{-\text{Re}\left\{\left[ \mathbf{D}^{-1}\mathbf{F}\textrm{vec}\left(\mathbf{H} - \mathbf{\tilde H}\right) \right]_{M+m} \right\} \right\}.\end{equation} 
	Then, $\mathbf{W}$ and $\mathbf{y}$ defined in Claim 1 satisfy \eqref{strictpositive}.
\begin{proof}
	It follows from Claim 1 that   
	$$\mathbf{y}_{m}{{(\mathbf{H})}}:=\left[d_{\mathbf{y}}(\mathbf{H})\right]_m,~m=1,2,\ldots,M$$ are continuously differentiable functions with respect to $\mathbf{H}\in{\mathcal{H}}\subseteq\mathbb{C}^{(K-1)\times M}.$ 
	Therefore,~for $m=1,2,\ldots,M,$ we get
	\begin{align}
	\mathbf{y}_m{{(\mathbf{H})}}
	& =  \mathbf{y}_m(\mathbf{\tilde H}) + \text{Re}\left\{ \textrm{Tr}\left(\nabla \mathbf{y}_m(\mathbf{\tilde H}) {{ \left(\mathbf{H}-\mathbf{\tilde H}\right)}}\right) \right\} + {{o}}(\|{ {\mathbf{H}-\mathbf{\tilde H}}}\|)\nonumber \\
	& = -\text{Re}\left\{\left[ \mathbf{D}^{-1}\mathbf{F}\textrm{vec} {{\left(\mathbf{H}-\mathbf{\tilde H}\right)}} \right]_{M+m} \right\}+ {{o}}(\|{{\mathbf{H}-\mathbf{\tilde H}}}\|),\label{taylory}
	\end{align}
	where \eqref{taylory} is due to the {differentiation rule of the implicit function} and $\mathbf{y}_m(\mathbf{\tilde H})=0.$ 
	Combining \eqref{taylory} and \eqref{conditionpositive}, we immediately have
	\begin{align}
	\mathbf{W}^\dagger\mathbf{H}^\dagger\mathbf{PHW}+\textrm{Diag}(\mathbf{y}_{1:M})
	=\mathbf{W}^\dagger\left[\mathbf{H}^\dagger\mathbf{PH}+\textrm{Diag}(\mathbf{y}_{1:M})\right]\mathbf{W}\succ \mathbf{0},
	\end{align}which shows that $\mathbf{W}$ and $\mathbf{y}$ satisfy \eqref{strictpositive}. The proof of Claim 2 is completed.
\end{proof}

\begin{acknowledgements}
We would like to thank Professor Stephen Boyd of Stanford University for several useful discussions.
\end{acknowledgements}

\bibliography{refs0429}

\bibliographystyle{spmpsci}


\end{document}